\def\squareforqed{\hbox{\rlap{$\sqcap$}$\sqcup$}}
\def\qed{\ifmmode\squareforqed\else{\unskip\nobreak\hfil
\penalty50\hskip1em\null\nobreak\hfil\squareforqed
\parfillskip=0pt\finalhyphendemerits=0\endgraf}\fi}
\def\duzomniejsze{<\kern-.7mm<}
\def\duzowieksze{>\kern-.7mm>}
\def\textbf#1{{\bf #1}}
\def\beq{\begin{equation}}
\def\eeq{\end{equation}}
\def\be{\begin{equation}}
\def\ee{\end{equation}}
\def\ben{\begin{eqnarray}}
\def\een{\end{eqnarray}}
\def\beqa{\begin{eqnarray}}
\def\eeqa{\end{eqnarray}}
\def\eea{\end{array}}
\def\bea{\begin{array}}
\newcommand{\bei}{\begin{itemize}}
\newcommand{\eei}{\end{itemize}}
\newcommand{\bee}{\begin{enumerate}}
\newcommand{\eee}{\end{enumerate}}
\newcommand{\nc}{\newcommand}
\def\>{\rangle}
\def\<{\langle}
\newtheorem{lemma}{Lemma}
\newtheorem{theorem}{Theorem}
\newtheorem{dfn}{Definition}
\newtheorem*{definition}{Definition}
\def\bed{\begin{definition}}
\def\eed{\end{definition}}
\def\bel{\begin{lemma}}
\def\eel{\end{lemma}}
\def\bet{\begin{theorem}}
\def\eet{\end{theorem}}
\def\be{\begin{equation}}
\def\ee{\end{equation}}
\begin{document}

%\preprint{}

\title{Gadget structures in proofs of the Kochen-Specker theorem}
%\title{Randomness Witnessing from local contextuality via coloring gadgets}
\author{Ravishankar Ramanathan}
%\email{ravi@cs.hku.hk}
\affiliation{Department of Computer Science, The University of Hong Kong, Pokfulam Road, Hong Kong}
%\affiliation{Laboratoire d'Information Quantique, Universit\'{e} Libre de Bruxelles, Belgium}
%\affiliation{National Quantum Information Center of Gda\'{n}sk, 81-824 Sopot, Poland}
%\affiliation{Institute of Theoretical Physics and Astrophysics, University of Gda\'{n}sk, 80-952 Gda\'{n}sk, Poland}

\author{Monika Rosicka}
\affiliation{Institute of Theoretical Physics and Astrophysics and the National Quantum Information Centre, Faculty of Mathematics, Physics and Informatics, 
	University of Gdansk, 80-308 Gdansk, Poland.}

\author{Karol Horodecki}
\affiliation{Institute of Informatics Faculty of Mathematics, Physics and Informatics,
		University of Gdansk, 80-308 Gdansk, Poland}
\affiliation{International Centre for Theory of Quantum Technologies,
		University of Gdansk, Wita Stwosza 63, 80-308 Gdansk, Poland}	
%\affiliation{National Quantum Information Center of Gda\'{n}sk, 81-824 Sopot, Poland}
%\affiliation{Institute of Informatics, University of Gda\'{n}sk, 80-952 Gda\'{n}sk, Poland}

\author{Stefano Pironio}
\affiliation{Laboratoire d'Information Quantique, Universit\'{e} Libre de Bruxelles, Belgium}

\author{Micha{\l} Horodecki}
\affiliation{International Centre for Theory of Quantum Technologies, University of Gda\'{n}sk, Wita Stwosza 63, 80-308 Gda\'{n}sk, Poland}
\affiliation{Institute of Theoretical Physics and Astrophysics and the National Quantum Information Centre, Faculty of Mathematics, Physics and Informatics, 
	University of Gdansk, 80-308 Gdansk, Poland.}

\author{Pawe{\l} Horodecki}
\affiliation{International Centre for Theory of Quantum Technologies, University of Gda\'{n}sk, Wita Stwosza 63, 80-308 Gda\'{n}sk, Poland}
\affiliation{Faculty of Applied Physics and Mathematics, National Quantum Information Centre, Gda\'{n}sk University of Technology, Gabriela Narutowicza 11/12, 80-233 Gda\'{n}sk, Poland} 
%\affiliation{Faculty of Applied Physics and Mathematics
%	and the National Quantum Information Centre, Gdansk University of
%	Technology, 80-233 Gdansk, Poland.}
%\affiliation{Faculty of Applied Physics and Mathematics, Technical University of Gda\'{n}sk, 80-233 Gda\'{n}sk, Poland}

\begin{abstract}
	The Kochen-Specker theorem is a fundamental result in quantum foundations that has spawned massive interest since its inception. We show that within every Kochen-Specker graph, there exist interesting subgraphs which we term $01$-gadgets, that capture the essential contradiction necessary to prove the Kochen-Specker theorem, i.e,. every Kochen-Specker graph contains a $01$-gadget and from every $01$-gadget one can construct a proof of the Kochen-Specker theorem. Moreover, we show that the $01$-gadgets form a fundamental primitive that can be used to formulate state-independent and state-dependent statistical Kochen-Specker arguments as well as to give simple constructive proofs of an ``extended'' Kochen-Specker theorem first considered by Pitowsky in \cite{Pitowsky}.  
\end{abstract}

\maketitle

%\date{\today}

\section{Introduction}
According to the quantum formalism, a projective measurement $M$ is described by a set $M=\{V_1,\ldots,V_m\}$ of projectors $V_i$ in a complex Hilbert space, that are orthogonal, $V_iV_j=\delta_{ij} V_i$, and sum to the identity, $\sum_i V_i=I$. Each $V_i$ corresponds to a possible outcome $i$ of the measurement $M$ and determines the probability of this outcome when measuring a state $|\psi\rangle$ through the formula $\text{Pr}_{\psi}(i\mid M)=\langle\psi|V_i|\psi\rangle$. 

If two physically distinct measurements $M=\{V_1,\ldots,V_m\}$ and $M'=\{V'_1,\ldots,V'_{m'}\}$ share a common projector, i.e., $V_i=V'_{i'}=V$ for some outcome $i$ of $M$ and $i'$ of $M'$, it then follows that
\begin{equation}\label{eq:meas}
\text{Pr}_{\psi}(i\mid M)=\text{Pr}_{\psi}(i'\mid M')=\langle\psi|V|\psi\rangle\,.
\end{equation}
In other words, though quantum measurements are defined by \emph{sets} of projectors, the outcome probabilities of these measurements are determined by the \emph{individual} projectors alone, independently of the broader set -- or the \emph{context} -- to which they belong. We say that the probability assignment is \emph{non-contextual}.

The Kocken-Specker (KS) theorem \cite{KS} is a cornerstone result in the foundations of quantum mechanics, establishing that, in Hilbert spaces of dimension greater than two, it is not possible to find a \emph{deterministic} outcome assignment that is non-contextual. Deterministic means that all outcome probabilities should take only the values $0$ or $1$. Non-contextual means, as above, that these probabilities are not directly assigned to the measurements themselves, but to the individual projectors from which they are composed, independently of the context to which the projectors belong. More formally, the KS theorem establishes that it is not possible to find a rule $f$ such that
\begin{equation}\label{eq:measlambda}
\text{Pr}_{f}(i\mid M)=\text{Pr}_{f}(i'\mid M')=f(V)\in\{0,1\}\,,
\end{equation}
which would provide a deterministic analogue of a quantum state.

The most common way to prove the KS theorem involves a set $\mathcal{S}=\{V_1,\ldots,V_n\}$ of rank-one projectors in a complex Hilbert space.
We can represent these projectors by the vectors (strictly speaking, the rays) onto which they project and thus view $\mathcal{S}$ as a set of vectors $\mathcal{S}=\{|v_1\rangle,\ldots,|v_n\rangle\}\subset\mathbb{C}^d$. Consider an assignment $f:\mathcal{S}\rightarrow \{0,1\}$ that associates to each $|v_i\rangle$ in $\mathcal{S}$ a probability $f(|v_i\rangle)\in\{0,1\}$.
To interpret the $f(|v_i\rangle)$ as valid measurement outcome probabilities, they should satisfy the two following conditions:
\begin{flalign}\label{eq:01rule}
\begin{minipage}{.42\textwidth}
%{\parbox{.36\textwidth}{
\begin{itemize}
		\item $\sum_{|v\rangle\in \mathcal{O}} f(|v\rangle)\leq 1$ for every set $\mathcal{O}\subseteq \mathcal{S}$ of mutually orthogonal vectors;	
		\item $\sum_{|v\rangle\in \mathcal{B}} f(|v\rangle)=1$ for every set $\mathcal{B}\subseteq\mathcal{S}$ of $d$ mutually orthogonal vectors.
		\end{itemize}
\end{minipage}&&
\end{flalign}
The first condition is required because if a set of vectors are mutually orthogonal, they may be part of the same measurement, but then their corresponding probabilities must sum at most to 1. The second condition follows from the fact that if $d$ vectors are mutually orthogonal in $\mathbb{C}^d$, they form a complete basis, and then their corresponding probabilities must exactly sum to one. Note that the first condition implies in particular that any two  vectors $|v_1\rangle$ and $|v_2\rangle$ in $\mathcal{S}$ that are orthogonal cannot both be assigned the value 1 by $f$.

We call any assignment $f : \mathcal{S} \rightarrow \{0,1\}$ satisfying the above two conditions, a $\{0,1\}$-coloring of $\mathcal{S}$. The Kocken-Specker theorem states that if $d\geq 3$, there exist sets of vectors that are not $\{0,1\}$-colorable, thus establishing the impossibility of a non-contextual deterministic outcome assignment. We call such $\{0,1\}$-uncolorable sets, KS sets. In their original proof, Kochen and Specker describe a set $\mathcal{S}$ of 117 vectors in $\mathbb{C}^d$ dimension $d=3$ \cite{KS}. The minimal KS set contains 18 vectors in dimension $d=4$ \cite{CEG96, Cab08}. 	

%\footnote{Note that strictly speaking a $\{0,1\}$-coloring is a coloring of the rank-one projectors $|v_i\rangle\langle v_i|$ corresponding to the vectors $|v_i\rangle$. It thus follows that vectors differing only by a  phase are assigned the same value, i.e., $f(|v_i\rangle)=f(e^{i\alpha}|v_i\rangle)$. Alternatively, we can implicitly assume that the set $\mathcal{S}$ that we consider here do not contain vectors differing only by a phase.} 

In this paper, we identify within KS sets interesting subsets which we term $01$-gadgets. Such $01$-gadgets are $\{0,1\}$-colorable and thus do not represent by themselves KS sets. However, they do not admit arbitrary $\{0,1\}$-coloring: in any $\{0,1\}$-coloring of a $01$-gadget, there exist two non-orthogonal vectors $|v_1\rangle$ and $|v_2\rangle$ that cannot both be assigned the color 1. We show that such $01$-gadgets form the essence of the KS contradiction, in the sense that every KS set contains a $01$-gadget and from every $01$-gadget one can construct a KS set.

Besides being useful in the construction of KS sets, we show that $01$-gadgets also form a fundamental primitive in constructing statistical KS arguments \`a la Clifton \cite{Clifton93} and state-independent non-contextuality inequalities as introduced in \cite{YO12}. Moreover, we show that an ``extended" Kochen-Specker theorem considered by Pitowsky \cite{Pitowsky} and Abbott et al. \cite{ACS15, ACCS12} can be easily proven using an extension of the notion of $01$-gadgets. We give simple constructive proofs of these different results. 

Certain $01$-gadgets have already been studied previously in the literature, as they possess other interesting properties. In particular, $01$-gadgets were also used in \cite{Arends09} to show that the problem of checking whether certain families of graphs (which represent natural candidates for KS sets) are $\{0,1\}$-colorable is NP-complete, a result which we refine in the present paper. Specific $01$-gadgets have already been studied in the literature, for instance as 'definite prediction sets' in \cite{CA96} and recently as 'true-implies-false sets' in \cite{APSS18} where also minimal constructions in several dimensions were explored. A first method to produce different $01$-gadgets was also shown in \cite{CG95}. 

%Besides being deeply linked to KS sets, we show that $01$-gadgets also give rise to non-locality proofs. It is well-known that every Kochen-Specker proof gives rise to a two-party non-local pseudo-telepathy game \cite{HR83, BBT05, RW04}  %\tred{Shouldn't we also cite an old paper from the 80s?}. 
%We show that analogously, $01$-gadgets give rise to two-party Hardy paradoxes and we use them to construct Hardy paradoxes with the non-zero probability taking any value in $(0,1]$. Besides its fundamental interest, this construction finds application in device-independent protocols for randomness amplification, which we explicitly show by bounding the randomness that can be certified against a no-signaling adversary for a certain class of our new Hardy paradoxes.

This paper is organized as follows. In section~\ref{sec:prel}, we introduce some notation and elementary concepts, in particular the representation of KS sets as graphs. In section~\ref{sec:gadget}, we define the notion of $01$-gadgets and establish their relation to KS sets. In section~\ref{sec:constr}, we give several constructions of $01$-gadgets and associated KS sets. In section~\ref{sec:real}, we show how $01$-gadgets can be used to construct statistical KS arguments. In section~\ref{sec:ext}, we also show a simple constructive proof of the extended Kochen-Specker theorem of Pitowsky \cite{Pitowsky} and Abbott et al. \cite{ACCS12} using a notion of extended $01$-gadgets which we introduce.  In section~\ref{sec:compl}, we show that $01$-gadgets can be used to establish the NP-completeness of $\{0,1\}$-coloring of the family of graphs relevant for KS proofs. 
%In section~\ref{sec:hardy}, we show how to construct Hardy paradoxes from $01$-gadgets and study the corresponding randomness in section~\ref{sec:rand}. 
%In section~\ref{sec:color}, we introduce coloring gadgets and establish their relation to KS graphs. In section~\ref{sec:SIC-graphs}, we show how in certain dimensions, the chromatic number gives a simple condition for identifying Kochen-Specker graphs. 
We finish by a general discussion and conclusion in section~\ref{sec:concl}.

%In this paper we consider the traditional formulation of the KS theorem and show that despite these not possessing the property of value indefiniteness stated above, one may still extract randomness from them. Moreover, by considering the non-local Kochen-Specker game corresponding to the KS sets, one may also obtain this randomness in a device-independent fashion. Furthermore, the same device-independent extraction of randomness can be done even with much smaller sets of vectors that only provide a state-dependent test of contextuality. Note that the non-local task corresponding to a state-dependent set is no longer a Kochen-Specker game.    
%Consequently, as shown in \cite{}, there  sets of vectors satsifying Theorem \ref{thm:ACS14} is a 

\section{Preliminaries}\label{sec:prel}
Much of the reasoning involving KS sets is usually carried out using a graph representation of KS sets defined below. We thus start by reminding some basic graph-theoretic definitions.

\paragraph*{Graphs.}
Throughout the paper, we will deal with simple undirected finite graphs $G$, i.e., finite graphs without loops, multi-edges or directed edges. We denote $V(G)$ the vertices of $G$ and $E(G)$ the edges of $G$. If two vertices $v_1,v_2$ are connected by an edge, we say that they are adjacent, and write $v_1\sim v_2$.

A subgraph $H$ of $G$ (denoted $H < G$) is a graph formed from a subset of vertices and edges of $G$, i.e., $V(H) \subseteq V(G)$ and $E(H) \subseteq E(G)$. An induced subgraph $K$ of $G$ (denoted $K \lhd G$) is a subgraph that includes all the edges in $G$ whose endpoints belong to the vertex subset $V(K) \subseteq V(G)$, i.e., $E(K) \subseteq E(G)$ with $(v_1, v_2) \in E(K)$ iff $(v_1, v_2) \in E(G)$ for all $v_1, v_2 \in V(K)$.  

A clique in the graph $G$ is a subset of vertices $Q \subset V(G)$ such that every pair of vertices in $Q$ is connected by an edge, i.e., $\forall v_1, v_2 \in Q$ we have $v_1 \sim v_2$. A maximal clique in $G$ is a clique that is not a subset of a larger clique in $G$. 
A maximum clique in $G$ is a clique that is of maximum size in $G$. The clique number $\omega(G)$ of $G$ is the cardinality of a maximum clique in $G$.
 
\paragraph*{Orthogonality graphs.}
The use of graphs in the context of the KS theorem comes from the fact that it is convenient to represent the orthogonality relations in a KS set $\mathcal{S}$ by a graph $G_\mathcal{S}$, known as its orthogonality graph \cite{CSW10, CSW14}. In such a graph, each vector $|v_i\rangle$ in $\mathcal{S}$ is represented by a vertex $v_i$ of $ G_{\mathcal{S}}$ and two vertices $v_1, v_2$ of $G_{\mathcal{S}}$ are connected by an edge if the associated vectors $|v_1 \rangle, |v_2 \rangle$ are orthogonal, i.e. $v_1\sim v_2$ if $\langle v_1 | v_2 \rangle = 0$ (for instance the graph in Fig.~\ref{fig:Clifton} is the orthogonality graph of the set of vectors given by eq.~(\ref{eq:Clif-orth-rep})).

It follows that in an orthogonality graph $G_{\mathcal{S}}$, a clique corresponds to a set of mutually orthogonal vectors in $\mathcal{S}$. If $\mathcal{S}\subset\mathbb{C}^d$ contains a basis set of $d$ orthogonal vectors, then the maximum clique in $G_{\mathcal{S}}$ is of size $\omega(G_{\mathcal{S}}) = d$. 

\paragraph*{Coloring of graphs.}
The problem of $\{0,1\}$-coloring $\mathcal{S}$ thus translates into the problem of coloring the vertices of its orthogonality graph $G_\mathcal{S}$ such that vertices connected by an edge cannot both be assigned the color 1 and maximum cliques have exactly one vertex of color~1. Formally, we say that an arbitrary graph $G$ is $\{0,1\}$-colorable if there exists an assignment $f : V(G) \rightarrow \{0,1\}$ such that 
\begin{flalign}\label{eq:01rulegraph}
\begin{minipage}{0.42\textwidth}
\begin{itemize}
	\item $\sum_{v \in Q} f(v) \leq 1$ for every clique $Q \subset V(G)$;
	\item $\sum_{v \in Q_{\max}} f(v) = 1$ for every maximum clique $Q_{\max} \subset V(G)$.
\end{itemize}
\end{minipage} &&
\end{flalign}
The KS theorem is then equivalent to the statement that there exist for any $d\geq 3$, finite sets of vectors $\mathcal{S} \subset \mathbb{C}^d$ (the KS sets)	such that their orthogonality graph $G_{\mathcal{S}}$ is not $\{0,1\}$-colorable. 
Deciding if a given graph $G$ admits a $\{0,1\}$-coloring is NP-complete \cite{Arends09}.
Note that any graph $G$ that is not $\{0,1\}$-colorable must contain at least two cliques of maximum size $\omega(G)$. Indeed, if a graph $G$ contains a single clique of maximum size it always admits a $\{0,1\}$-coloring consisting in assigning the value 0 to all its vertices, except for one vertex in the maximum clique that is assigned the value 1.

\paragraph*{Orthogonal representations.}
For a given graph $G$, an orthogonal representation $\mathcal{S}$ of $G$ in dimension $d$ is a set of non-zero vectors $\mathcal{S}=\{|v_i \rangle\}$ in $\mathbb{C}^d$ obeying the orthogonality conditions imposed by the edges of the graph, i.e., $v_1 \sim v_2 \Rightarrow \langle v_1|v_2 \rangle=0$ \cite{Lovasz87}. We denote by $d(G)$ the minimum dimension of an orthogonal representation of $G$ and we say that $G$ has dimension $d(G)$. Obviously, $d(G)\geq \omega(G)$. A \emph{faithful} orthogonal representation of $G$ is given by a set of vectors $\mathcal{S}=\{|v_i \rangle\}$ that in addition obey the condition that non-adjacent vertices are assigned non-orthogonal vectors, i.e., $v_1 \sim v_2 \Leftrightarrow  \langle v_1|v_2 \rangle=0$ and that distinct vertices are assigned different vectors, i.e., $v_1 \neq v_2 \Leftrightarrow |v_1 \rangle \neq |v_2 \rangle$.
% We will also be concerned with a notion of \textit{strictly faithful orthogonal representation} of $G$ where in addition to the above two conditions, we impose the requirement that distinct vertices are assigned different vectors, i.e., $v_1 \neq v_2 \Leftrightarrow |v_1 \rangle \neq |v_2 \rangle$. 
We denote by $d^*(G)$ the minimum dimension of such a faithful orthogonal representation of $G$ and we say that $G$ has faithful dimension $d^*(G)$.

Given a graph $G$ of dimension $d(G)$, the orthogonality graph $G_\mathcal{S}$ of the minimal orthogonal representation $\mathcal{S}$ of $G$ has faithful dimension $d^*(G_\mathcal{S})=d(G)$. The graph $G_\mathcal{S}$ can be seen as obtained from $G$ by adding edges (between vertices that are non-adjacent in $G$, but corresponding to vectors in $\mathcal{S}$ that are nevertheless orthogonal) and by identifying certain vertices (those that correspond to identical vectors in $\mathcal{S}$). We say that $G_\mathcal{S}$ is the faithful version of $G$.

\paragraph*{KS graphs.}
While the non-$\{0,1\}$-colorability of a set $S$ translates into the non-$\{0,1\}$-colorability of its orthogonality graph $G_\mathcal{S}$, the non-$\{0,1\}$-colorability of an arbitrary graph $G$ translates into the non-$\{0,1\}$-colorability of one of its orthogonal representations only if this representation has the minimal dimension $d(G)=\omega(G)$. Indeed, it is only under this condition that the requirement that $\sum_{v \in Q_{\max}} f(v) = 1$ in the definition of the $\{0,1\}$-coloring of the graph $G$ gives rise to the corresponding requirement that $\sum_{v\in Q_{\max}} f(|v\rangle)  = 1$ for its orthogonal representation (if the dimension $d$ is larger than $\omega(G)=|Q_{\max}|$, the $|Q_{\max}|<d$ mutually orthogonal vectors $\{|v\rangle:v\in Q_{\max}\}$ in $\mathbb{C}^d$ do not form a basis).

If a graph $G$ is not $\{0,1\}$-colorable and has dimension $d(G)=\omega(G)$, it thus follows that its minimal orthogonal representation $\mathcal{S}$ forms a KS set. If in addition $d^*(G)=\omega(G)$, we say that $G$ is a KS graph (this last condition can always be obtained by considering the faithful version of $G$, i.e., the orthogonality graph $G_\mathcal{S}$ of its minimal orthogonal representation $\mathcal{S}$).

%Note that the orthogonality graph $G_\mathcal{S}$ of the KS set $\mathcal{S}$ has the additional property that $d^*(G_\mathcal{S})=\omega(G_\mathcal{S})$, i.e., its minimal orthogonal representation (given by $\mathcal{S}$) is faithful. The graph $G_\mathcal{S}$ can be seen as obtained from $G$ by adding edges (between vertices that are non-adjacent in $G$, but corresponding to vectors in $\mathcal{S}$ that are nevertheless orthogonal) and by identifying certain vertices (those that correspond to identical vectors in $\mathcal{S}$). We call a graph $G$ that is not $\{0,1\}$-colorable and satisfying  $d^*(G)=\omega(G)$ a KS graph.
%then we say that it is a KS graph. In this case, its minimal orthogonal representation forms a KS set. Obviously, the orthogonality graph $G_\mathcal{S}$ of a KS set $\mathcal{S}$ is a KS graph. Note that $G_\mathcal{S}$ has the additional property that $d^*(G_\mathcal{S})=\omega(G_\mathcal{S})$, since it admits $\mathcal{S}$ as a faithful orthogonal representation. The minimal orthogonal representation $\mathcal{T}$ of an arbitrary KS graph $G$ is not necessarily faithful, however. But its orthogonality graph $G_\mathcal{T}$ is a KS graph satisfying the property that $d^*(G_\mathcal{T})=\omega(G_\mathcal{T})=\omega(G)$. The graph $G_\mathcal{T}$ can be seen as obtained from $G$ by adding edges (between vertices that are non-adjacent in $G$, but corresponding to vectors in $\mathcal{T}$ that are nevertheless orthogonal) and by identifying certain vertices (those that correspond to identical vectors in $\mathcal{T}$).

The problem of finding KS sets can thus be reduced to the problem of finding KS graphs. But as we have noticed above, deciding if a graph is $\{0,1\}$-colorable is NP-complete. In addition, while finding an orthogonal representation for a given graph can be expressed as finding a solution to a system of polynomial equations, efficient numerical methods for finding such representations are still lacking. Thus, finding KS sets in arbitrary dimensions is a difficult problem towards which a huge amount of effort has been expended \cite{CA96}. In particular, ``records'' of minimal Kochen-Specker systems in different dimensions have been studied \cite{CEG96}, the minimal KS system in dimension four is the $18$-vector system due to Cabello et al. \cite{CEG96, Cab08} while lower bounds on the size of minimal KS systems in other dimensions have also been established.

%A generalization of the KS sets due to Renner and Wolf \cite{RW04} called \textit{weak Kochen Specker sets} also serves to prove the KS theorem. A weak KS set \cite{RW04} is a set of (unit) vectors $S \subset \mathbb{C}^d$ such that for any function $f: S \rightarrow \{0,1\}$ satisfying $\sum_{|u \rangle \in b} f(|v \rangle) = 1$ for all orthogonal bases $b \subset S$, there exist two orthogonal unit vectors $|u_1 \rangle \perp |u_2 \rangle \in S$ such that $f(|u_1 \rangle) = f( |u_2 \rangle) = 1$. It can be readily seen that every KS set is a weak KS set and in \cite{RW04} it was shown that every weak KS set can be completed to a KS set by adding $O(|S|^2)$ vectors so that both types of sets serve to prove the KS theorem in dimensions $d \geq 3$.   

\section{$01$-gadgets and the Kochen-Specker theorem}\label{sec:gadget}
We now introduce the notion of $01$-gadgets that play a crucial role in constructions of KS sets. 
\begin{dfn}
	A $01$-gadget in dimension $d$ is a $\{0,1\}$-colorable set $\mathcal{S}_\text{gad}\subset\mathbb{C}^d$ of vectors containing two distinguished vectors $|v_1\rangle$ and $|v_2\rangle$ that are non-orthogonal, but for which $f(|v_1\rangle)+f(|v_2\rangle)\leq 1$ in every $\{0,1\}$-coloring $f$ of $\mathcal{S}_\text{gad}$.
\end{dfn}
In other words, while a $01$-gadget $\mathcal{S}_\text{gad}$ admits a $\{0,1\}$-coloring, in any such coloring the two distinguished non-orthogonal vertices cannot both be assigned the value $1$ (as if they were actually orthogonal).
We can give an equivalent, alternative definition of a gadget as a graph.
\begin{dfn}
	A $01$-gadget in dimension $d$ is a $\{0,1\}$-colorable graph $G_\text{gad}$ with faithful dimension $d^*(G_\text{gad})=\omega(G_\text{gad})=d$ and with two distinguished non-adjacent vertices $v_1 \nsim v_2$ such that $f(v_1)+f(v_2)\leq 1$ in every $\{0,1\}$-coloring $f$ of $G_\text{gad}$.
\end{dfn}
%Note that in the above definition, the last condition (the non-orthogonality of $|v_1\rangle$ and $|v_2\rangle$) is automatically satisfied if the graph $G_\text{gad}$ satisfies $d^*(G_\text{gad})=\omega(G_\text{gad})$.
In the following when we refer to a $01$-gadget, we freely alternate between the equivalent set or graph definitions.

An example of a $01$-gadget in dimension 3 is given by the following set of 8 vectors in $\mathbb{C}^3$:
\begin{eqnarray}
	\label{eq:Clif-orth-rep}
&&	| u_1 \rangle = \frac{1}{\sqrt{3}}(-1,1,1), \; \; |u_2 \rangle = \frac{1}{\sqrt{2}}(1,1,0), \nonumber \\ 
&& |u_3 \rangle = \frac{1}{\sqrt{2}} (0,1,-1), |u_4 \rangle = (0,0,1), \nonumber \\
&&	|u_5 \rangle = (1,0,0), \; \; |u_6 \rangle = \frac{1}{\sqrt{2}}(1,-1,0), \nonumber \\
&& |u_7 \rangle = \frac{1}{\sqrt{2}}(0,1,1), \; \; |u_8 \rangle = \frac{1}{\sqrt{3}}(1,1,1), 
	\end{eqnarray}
where the two distinguished vectors are $|v_1\rangle=|u_1\rangle$ and $|v_2\rangle=|u_8\rangle$. Its  orthogonality graph is represented in Fig.~\ref{fig:Clifton}. It is easily seen from this graph representation that the vertices $u_1$ and $u_8$ cannot both be assigned the value 1, as this then necessarily leads to the adjacent vertices $u_4$ and $u_5$ to be both assigned the value 1, in contradiction with the $\{0,1\}$-coloring rules. This graph was identified by Clifton, following work by Stairs \cite{Clifton93, Stairs}, and used by him to construct statistical proofs of the Kochen-Specker theorem. We will refer to it as the Clifton gadget $G_{\text{Clif}}$. The Clifton gadget and similar gadgets were termed ``definite prediction sets" in \cite{CA96}. 
\begin{figure}[t]
	\centerline{\includegraphics[scale=0.33]{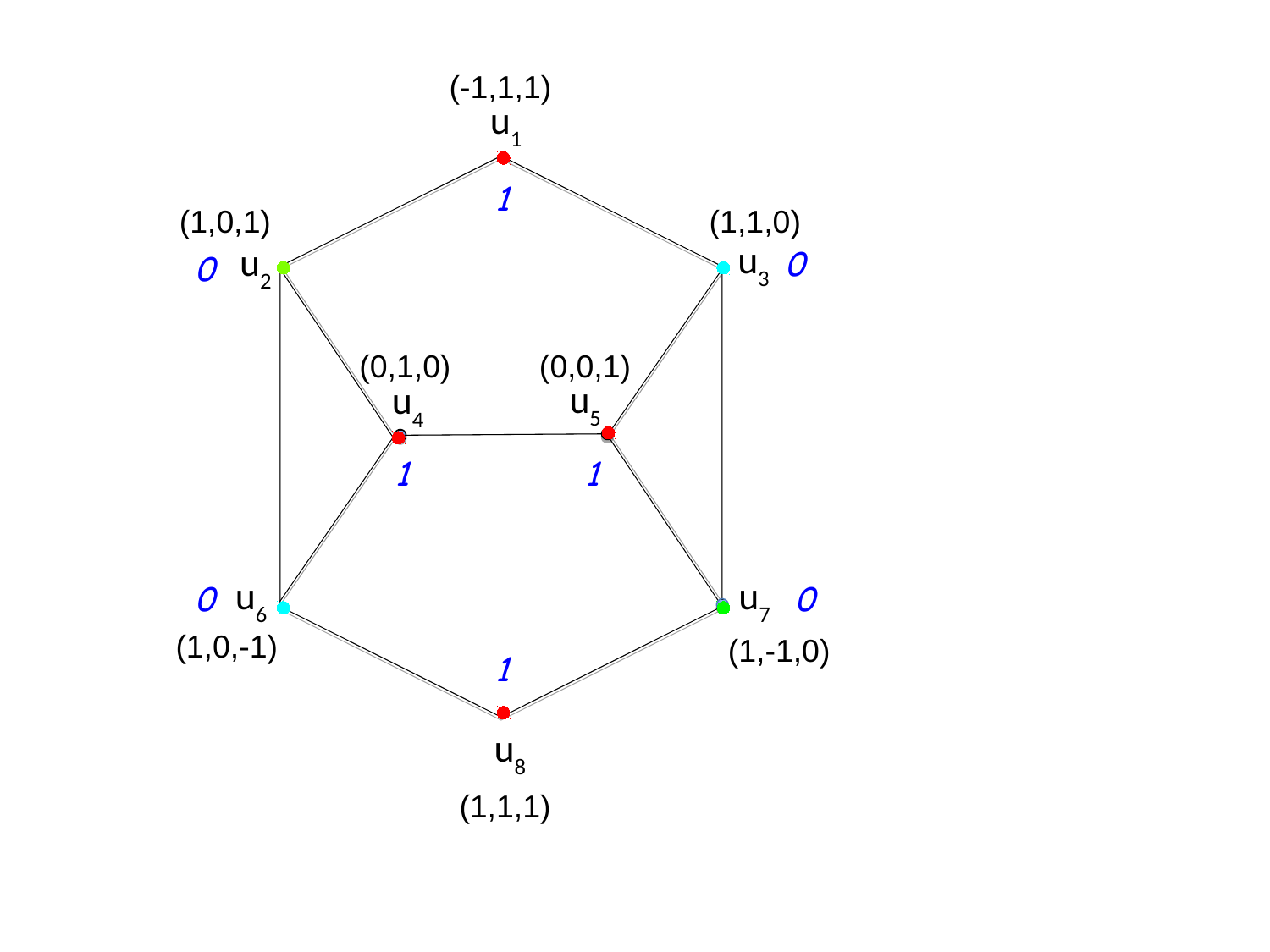}}
	\caption{The $8$-vertex ``Clifton" graph that was used by Kochen and Specker in their construction of the $117$ vector KS set. The two distinguished vertices are $u_1$ and $u_8$.}
	\label{fig:Clifton}
\end{figure}

We identify the role played by 01-gadgets in the construction of Kochen-Specker sets via the following theorem. 
\begin{theorem}
	\label{prop:KS-gadg}
	For any Kochen-Specker graph $G_\text{KS}$, there exists a subgraph $G_{\text{gad}}<G_{\text{KS}}$ with $\omega(G_\text{gad})=\omega(G_\text{KS})$ that is a $01$-gadget. Moreover, given a $01$-gadget $G_{\text{gad}}$, one can construct a KS graph $G_\text{KS}$ with $\omega(G_\text{KS})=	\omega(G_{\text{gad}})$. 
\end{theorem}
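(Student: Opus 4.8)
The statement has two directions. The plan is to handle them separately, since they require rather different constructions.

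\textbf{From a KS graph to a $01$-gadget.}
Let $G_\text{KS}$ be a KS graph with $\omega(G_\text{KS})=d$. The idea is to extract a gadget by a kind of ``removal and bookkeeping'' argument. Pick any vertex $v$ of $G_\text{KS}$ and consider the graph obtained by deleting $v$ (and, if needed, a few of its neighbours); since $G_\text{KS}$ is an inclusion-minimal obstruction in the relevant sense, or at least since removing a single vertex from a non-$\{0,1\}$-colorable graph ought to restore colorability (one can always arrange this by passing to a vertex-minimal non-$\{0,1\}$-colorable subgraph first), the resulting graph $H$ is $\{0,1\}$-colorable. The key point is then to understand \emph{why} reattaching $v$ destroys colorability: the non-colorability of $G_\text{KS}$ means that in every $\{0,1\}$-coloring of $H$ the ``slot'' where $v$ would go is forced — either $v$ would be forced to $1$ by a maximum clique through it being otherwise all-$0$, or forced to $0$ by an adjacent vertex already coloured $1$, and these two forcings are incompatible across the family of colorings. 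Concretely I expect to identify two non-adjacent vertices $v_1,v_2$ of $H$ (the neighbours of $v$ inside two different maximum cliques, say) such that every $\{0,1\}$-coloring of $H$ that assigns $1$ to both would extend to a $\{0,1\}$-coloring of $G_\text{KS}$ — contradiction — so $f(v_1)+f(v_2)\le 1$ always. Then $H$ (or its faithful version) is the desired $01$-gadget, and $\omega$ is preserved because deleting vertices cannot raise the clique number and we keep at least one maximum clique of $G_\text{KS}$ intact. The main obstacle here is making the ``forcing is incompatible'' step precise: one must show the two forcings can be localized to two \emph{non-adjacent} vertices, which may require deleting more than one vertex or choosing $v$ carefully (e.g.\ a vertex lying in at least two maximum cliques, which exists by the remark in the preliminaries that a non-$\{0,1\}$-colorable graph has at least two maximum cliques).

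\textbf{From a $01$-gadget to a KS graph.}
This direction is the more constructive one. Given a $01$-gadget $G_\text{gad}$ with distinguished non-adjacent vertices $v_1\nsim v_2$ and $\omega(G_\text{gad})=d$, the plan is to take several copies of $G_\text{gad}$ and glue them along their distinguished vertices so as to propagate the constraint ``not both $1$'' around a cycle, forcing a contradiction. One natural approach: arrange copies in a ring, identifying the ``$v_2$'' of one copy with the ``$v_1$'' of the next; the gadget property then says that along each copy the color cannot increase from $0$ to $1$ going one way around, so around a closed ring all the identified vertices must receive the same value, and one then adds one final gadget (or a small orthogonality structure, e.g.\ a maximum clique) wired so that this common value is simultaneously forced to $0$ and to $1$. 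Alternatively, and perhaps more cleanly, one uses the gadget to ``simulate an edge'': since $v_1,v_2$ behave as if orthogonal, take a known non-$\{0,1\}$-colorable \emph{configuration} that would exist if we had an extra edge somewhere, and realize that edge by a fresh copy of $G_\text{gad}$. The cleanest target is probably: build a graph in which two vertices that are forced by the gadgets to be equal are also forced to be orthogonal. Throughout, one must check $\omega$ is not increased by the gluing — this is where faithful dimension $d^*(G_\text{gad})=\omega(G_\text{gad})=d$ is used, so that each glued copy admits an orthogonal representation in $\mathbb{C}^d$ and the copies can be rotated independently (a union of orthogonal representations in the same $\mathbb{C}^d$, agreeing on the shared vectors, still has clique number $d$) — and one must verify the final graph is genuinely non-$\{0,1\}$-colorable and has $d^*=\omega=d$ so that it qualifies as a KS graph.

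\textbf{Expected main difficulty.}
I expect the harder of the two directions to be the first one (KS graph $\Rightarrow$ gadget): the gluing construction in the second direction is essentially a design problem with a lot of freedom, whereas extracting a gadget from an \emph{arbitrary} KS graph requires a genuine structural argument showing that the failure of colorability always ``localizes'' to a single pair of non-adjacent vertices. The subtlety is ensuring non-adjacency of the two distinguished vertices of the extracted gadget — an arbitrary minimal obstruction might seem to force two \emph{adjacent} vertices to not both be $1$, which is vacuous — so the real content is a careful choice of which vertex (or small set of vertices) to remove from $G_\text{KS}$, and a proof that what remains still witnesses a non-trivial $01$-constraint.
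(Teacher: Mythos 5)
There are genuine gaps in both directions of your plan.

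\textbf{KS graph $\Rightarrow$ gadget.} Your extraction argument coincides with the paper's when the vertex-critical subgraph $G_\text{crit}$ contains a vertex lying in at most two maximum cliques: delete such a $v$, and any $\{0,1\}$-coloring of $G_\text{crit}\setminus v$ assigning $1$ to a vertex of each maximum clique through $v$ would extend by $f(v)=0$, a contradiction, so the pairwise constraint follows and one picks the two vertices non-adjacent (possible, else $\omega$ would grow). But the step you flag as the ``main obstacle'' is exactly where the remaining case bites, and your proposed fix (``choose $v$ in at least two maximum cliques'') does not address it: if \emph{every} vertex of $G_\text{crit}$ lies in three or more maximum cliques $Q_1,\dots,Q_t$, then a coloring of $G_\text{crit}\setminus v$ with $f(v_1)=f(v_2)=1$, $v_i\in Q_i\setminus v$, need not extend by $f(v)=0$, since some third clique $Q_3$ through $v$ may be entirely $0$; one only learns that a $t$-tuple of vertices cannot all be $1$, which is not the gadget property. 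The paper needs an extra idea here: pass further to an \emph{edge}-critical non-colorable graph, pick $v$ and delete all edges joining $v$ to one maximum clique $Q_1$; the resulting graph is colorable, and in any coloring either $f(v)=0$, which forces all of $Q_1\setminus v$ to be $0$, or $f(v)=1$, which forces all of $Q_2\setminus v$ to be $0$ — yielding the pairwise constraint — after which one takes the faithful version and chooses the distinguished pair non-adjacent already in $G_\text{KS}$ so that they are genuinely non-orthogonal.

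\textbf{Gadget $\Rightarrow$ KS graph.} The ring construction rests on a false inference: the gadget property only forbids both distinguished vertices being $1$; identifying the $v_2$ of one copy with the $v_1$ of the next therefore does \emph{not} force all identified vertices to carry a common value around the ring (the all-zero and alternating assignments survive), so no contradiction is produced. Your alternative — use gadgets as ``virtual edges'' inside a known uncolorable configuration — would require producing, for two \emph{prescribed} non-orthogonal vectors, a gadget having exactly those as its distinguished vectors; that is a separate and substantially harder statement (Theorem 3 of the paper, proved by an explicit 43-vertex construction) and is not available from the single arbitrary gadget you are handed. The paper's actual route turns the symmetric ``not both $1$'' constraint into a directed forcing: adjoin $|v_2^{\perp}\rangle$ (the vector orthogonal to $|v_2\rangle$ inside $\text{span}(|v_1\rangle,|v_2\rangle)$) and a basis $|w_1\rangle,\dots,|w_{d-2}\rangle$ of the orthogonal complement, so that $f(v_1)=1\Rightarrow f(v_2^{\perp})=1$; then chain copies rotated by a unitary sending $|v_1\rangle\mapsto|v_2^{\perp}\rangle$, so the angle $\theta$ between $|v_1\rangle$ and $|v_2^{\perp}\rangle$ accumulates until an odd multiple of $\pi/2$ is reached (with a preliminary two-gadget concatenation to adjust $\theta$ when $\pi/2\theta$ is irrational or has even denominator), and finally $d-1$ such chains terminating in a $d$-clique of vertices all forced to $1$ give the contradiction, with $\omega=d$ automatic because all vectors live in $\mathbb{C}^d$. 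Without the value-propagation device and the treatment of the angle, your second direction does not go through.
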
 
The demonstration of our theorem is constructive, it allows to build a $01$-gadget from a KS graph and conversely.
The $01$-gadget in the original $117$-vector proof by Kochen-Specker is the Clifton graph in Fig.~\ref{fig:Clifton}.
A $16$-vertex $01$-gadget in dimension 4 that is an induced subgraph of the $18$-vertex KS graph introduced in \cite{CEG96} is represented in Fig. \ref{fig:cab-KS-gadget}.
\begin{figure}[t]
	\centerline{\includegraphics[scale=0.39]{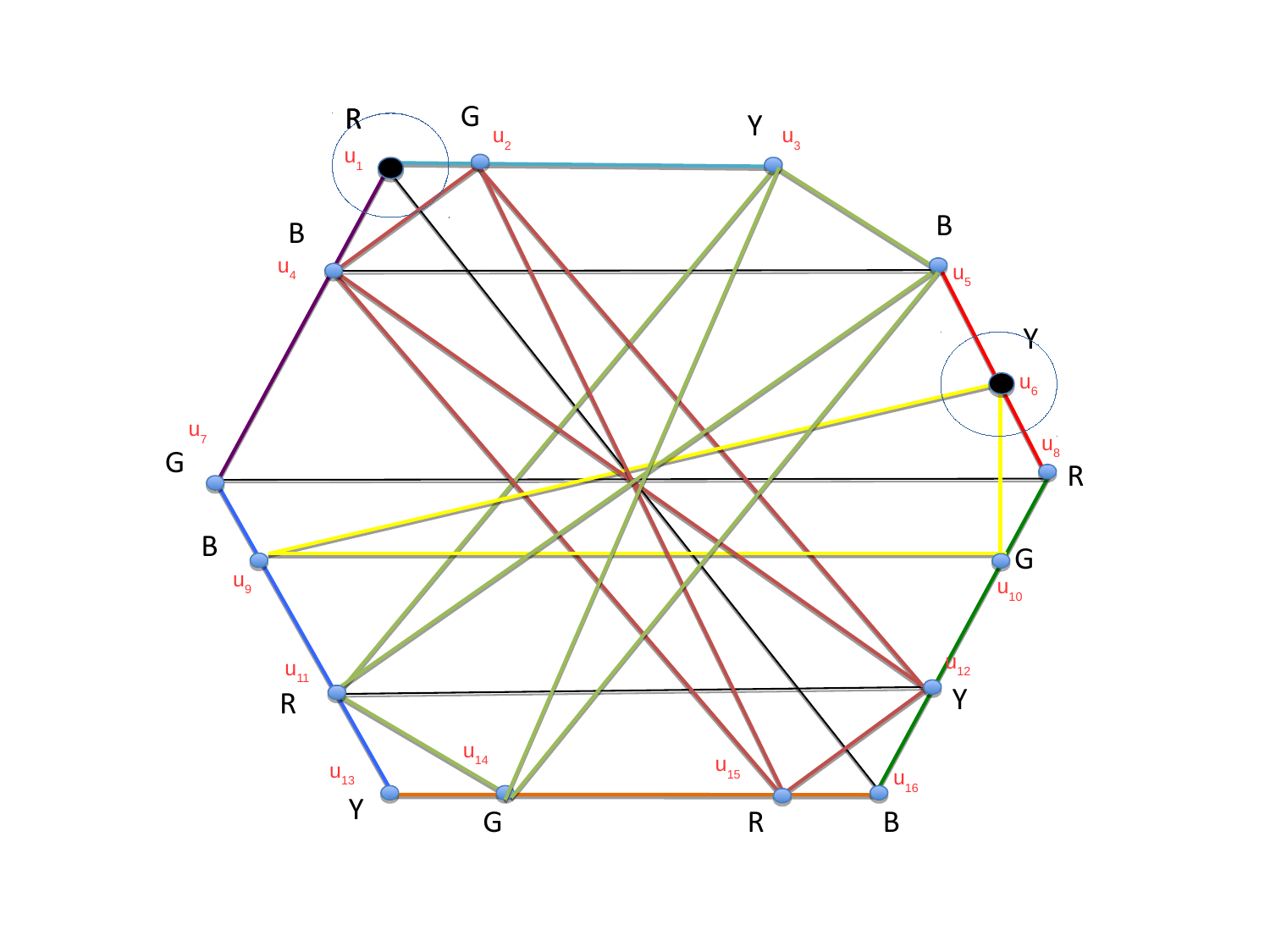}}
	\caption{A $16$ vertex coloring gadget (also a $101$-gadget) that is a subgraph of the $18$ vertex Kochen-Specker graph in dimension $d=4$ found by Cabello et al. \cite{CEG96}. The $9$ edge colors denote $9$ cliques in the graph, with the maximum clique being of size $\omega(G) = 4$. The distinguished vertices $u_1, u_6$ are denoted by black circles.}
	\label{fig:cab-KS-gadget}
\end{figure} 

\begin{proof}
We start by showing the first part of the Theorem: that one can construct a $01$-gadget $G_{\text{gad}}$ from any KS graph $G_\text{KS}$.  Given $G_\text{KS}$, which by definition is not $\{0,1\}$-colorable, we first construct, by deleting vertices one at a time, an induced subgraph $G_{\text{crit}}$ that is vertex-critical. By vertex-critical, we mean that $(i)$ $G_{\text{crit}}$ is not $\{0,1\}$-colorable, but $(ii)$ any subgraph obtained from it by deleting a supplementary vertex does admit a $\{0,1\}$-coloring. Observe that in the process of constructing $G_{\text{crit}}$ we are able to preserve the maximum clique size, i.e., $\omega(G_{\text{crit}}) = \omega(G_{\text{KS}})$. This is because we are able to delete vertices from all but two maximum cliques, simply because at least two maximum cliques must exist in a graph that is not $\{0,1\}$-colorable. Observe also that $G_{\text{crit}}$ is itself a KS graph, since the faithful orthogonal representation of $G_\text{KS}$ in dimension $d=\omega(G_{\text{	KS}})$ provides an orthogonal representation of $G_\text{crit}$ in the same dimension. 

We consider three cases: $(i)$ there exists a vertex in $G_{\text{crit}}$ that belongs to a single maximum clique, $(ii)$ all vertices in $G_{\text{crit}}$ belong to at least two maximum cliques, and there exists a vertex that belong to exactly two maximum cliques; $(iii)$, all vertices in $G_{\text{crit}}$ belong to at least three maximum cliques. In the first two cases, which happen to be the case encountered in all known KS graphs, we will be able to prove that the $01$-gadget appears as an induced subgraph while in the third case, the $01$-gadget appears as a subgraph that may not necessarily be induced. 

In case $(i)$, let $v$ be one of the vertices having the property that it belongs to a single maximum clique. We denote this clique $Q_1 \subset G_{\mathcal{S}}^{\text{crit}}$. Deleting $v$ leads to a graph $G_{\text{crit}} \setminus v$ that is $\{0,1\}$-colorable by definition. However, observe that in any coloring $f$ of $G_{\text{crit}} \setminus v$, all the vertices in $Q_1 \setminus v$ are assigned the value $0$ by $f$. This is because, if one of these vertices were assigned value $1$, then one could obtain a valid coloring of $G_{\text{crit}}$ from $f$ by defining $f(v) = 0$. Choose a vertex $v_1\in Q_1 \setminus v$ and any other non-adjacent vertex $v_2\in G_{\text{crit}} \setminus v$. Then $G_{\text{crit}} \setminus v$ is the required $01$-gadget with $v_1, v_2$ playing the role of the distinguished vertices.  

In case $(ii)$, let $v$ be one of the vertices having the property that it belongs to exactly two maximum cliques, which we denote $Q_1, Q_2 \subset G_{\text{crit}}$. Again, deleting $v$ leads to a graph $G_{\text{crit}} \setminus v$ that is $\{0,1\}$-colorable. However, in any coloring $f$ of $G_{\text{crit}} \setminus v$, it cannot be that a value $f(v_1)=1$ and a value $f(v_2)=1$ are simultaneously assigned to a vertex $v_1 \in Q_1 \setminus v$ and a vertex $v_2 \in Q_2 \setminus v$. This is again because if there was such a coloring $f$, then one could obtain a valid coloring for $G_{\text{crit}}$ by defining $f(v) = 0$, in contradiction with the criticality of $G_{\text{crit}}$. Choose $v_1 \in Q_1 \setminus v$ and $v_2 \in Q_2 \setminus v$ such that $v_1$ and $v_2$ are not adjacent. Two such vertices must exist. Indeed, if all vertices $Q_1 \setminus v$ where adjacent to all vertices of $Q_2 \setminus v$, then the maximum clique size would be strictly greater than $\omega(G_\text{crit})$. Therefore, we have that $G_{\text{crit}} \setminus v$ is the required $01$-gadget with $v_1, v_2$ the distinguished vertices. 

Finally, we consider the case $(iii)$ where each vertex in $G_{\text{crit}}$ belongs to at least three maximum cliques. In this case, we cannot proceed as above where we remove a certain vertex $v$ and pick vertices from two maximal cliques containing $v$, because we can no longer guarantee that these two vertices cannot simultaneously be assigned the value 1 (we can only guarantee that a certain $t$-uple of vertices, each one picked from the $t$ maximum cliques to which $v$ belongs, cannot all simultaneously be assigned the value 1, which may be thought of as a generalization of the $01$-gadget property to $t$ distinguished vertices in place of two). Instead, we proceed as follows. 
We start by deleting edges of $G_{\text{crit}}$ one at a time, to construct a new graph $G'_{\text{crit}}$ that is edge-critical. By edge-critical, we mean, similarly to the construction above, that $G'_{\text{crit}}$ is not $\{0,1\}$-colorable, but any graph obtained from it by deleting a supplementary edge (and thus also by deleting a supplementary vertex) does admit a $\{0,1\}$-coloring. As above, we are able to preserve the maximum clique size in the process, i.e., $\omega(G'_{\text{crit}}) = \omega(G_{\text{crit}})=\omega(G_\text{KS})$, and $G'_{\text{crit}}$ is still a non-$\{0,1\}$-colorable KS graph.
% admitting an orthogonal representation in dimension $d=\omega(G'_{\text{crit}})$.

Case (iii a): If the resulting graph $G'_{\text{crit}}$ is as in the cases $(i)$ and $(ii)$ above, we proceed as before to construct a $01$-gadget from a graph $G'_{\text{crit}}\setminus v$, with the caveat that choosing two non-adjacent vertices $v_1$ and $v_2$ in $G'_{\text{crit}}$ does not necessarily guarantee that they correspond to non-orthogonal vectors in the natural representation induced by the one of $G_\text{KS}$. This is because we have been removing edges from $G_\text{crit}$ to construct $G'_\text{crit}$. However, we can always choose two vertices $v_1$ and $v_2$ that were non-adjacent in the original graph $G_\text{KS}$ and that thus correspond to non-orthogonal vectors. Again, this is because otherwise the maximum clique size of $G_\text{KS}$ would be greater than $\omega(G_\text{KS})$. Now, in any $\{0,1\}$-coloring of $G'_{\text{crit}}\setminus v$, we cannot have both $f(v_1) = 1$ and $f(v_2) = 1$, so that $G'_{\text{crit}}\setminus v$ forms a subgraph of $G$ that is a $01$-gadget. But notice that the $\{0,1\}$-colorings of $G_{\text{crit}}\setminus v$ are a subset of the $\{0,1\}$-colorings of $G'_{\text{crit}}\setminus v$. So that we cannot have both $f(v_1) = 1$ and $f(v_2) = 1$ in any $\{0,1\}$-coloring of $G_{\text{crit}}\setminus v$ as well. So that the $01$-gadget is given by $G_{\text{crit}}\setminus v$ in this case with $v_1, v_2$ the distinguished vertices.

%is then given by 

%$G'_{\text{crit}}\setminus v$, which is a subgraph of $G$, but not necessarily an induced subgraph. 

%can then be completed by taking the faithful version of $G'_{\text{crit}}\setminus v$.

Case(iii b): If the resulting graph $G'_{\text{crit}}$ is not as in the cases $(i)$ and $(ii)$ above, we proceed as follows. 
Let $v$ be an arbitrary vertex of $G'_{\text{crit}}$. By assumption, this vertex belong to at least two maximun cliques $Q_1, Q_2$ (and actually even at least a third one). Delete all the edges $(v,v')$ from $Q_1$ where $v' \in Q_1$ to form $G'_{\text{crit}} \setminus E_v(Q_1)$ (where $E_v(Q_1)$ denotes the edges incident on $v$ in $Q_1$) which is $\{0,1\}$-colorable by definition. In any such coloring $f$, either $f(v)=0$ or $f(v)=1$. In the first case, we must necessarily have that $f(v')=0$ for all $v' \in Q_1\setminus v$, since otherwise the coloring $f$ would also define a valid coloring for $G'_{\text{crit}}$. In the second case, we have $f(v'')=0$ for all $v''\in Q_2\setminus v$ by definition of a coloring. We thus conclude that it cannot be simultaneously the case that $f(v')=f(v'')=1$ for $v'\in Q_1\setminus v$ and $v''\in Q_2\setminus v$. Choose $v_1 \in Q_1$ and $v_2\in Q_2$ non-adjacent in $G_\text{KS}$, which is always possible by the same argument as given before. The faithful version of the graph $G'_{\text{crit}} \setminus E_v(Q_1)$ forms the required $01$-gadget with $v_1, v_2$ being the distinguished vertices. Indeed, by the preceding argument, one can restore edges from $G_{\text{crit}}$ to the graph $G'_{\text{crit}} \setminus E_v(Q_1)$ to obtain the $01$-gadget so long as the graph is $\{0,1\}$-colorable, an instance of this is the graph $G'_{\text{crit}} \setminus (v, v_1)$.

We now proceed to prove the second part of the statement. Starting from a gadget graph we give a construction of a KS graph. The construction generalizes the original Kochen-Specker construction of \cite{KS} to arbitrary dimensions and arbitrary repeating gadget units. Given $G_{\text{gad}}$, we know that there exists a faithful orthogonal representation $\{| v_i \rangle\}_{i=1}^{n}$ in a Hilbert space of dimension $d = \omega(G_{\text{gad}})$ with $n = |V(G_{\text{gad}})|$. Let $v_1, v_2$ denote the distinguished vertices, and let $|v_2^{\perp} \rangle$ denote a vector orthogonal to $|v_2 \rangle$ that lies in the plane $\text{span}(|v_1 \rangle, |v_2 \rangle)$, spanned by the vectors $|v_1 \rangle$ and $|v_2 \rangle$, with $\theta = \arccos{|\langle v_1| v_2^{\perp} \rangle|}>0$ by definition of a $01$-gadget. We consider the following cases: $(i)$ $\frac{\pi}{2 \theta}$ is rational and can be written as $\frac{p}{q}$ with $q$ an odd integer, $(ii)$ $\frac{\pi}{2 \theta}$ is rational and is given by $\frac{p}{q}$ with $q$ an even integer, or alternatively, $\frac{\pi}{2 \theta}$ is irrational.  
 
\emph{Case $(i)$: $\frac{\pi}{2 \theta}$ is rational and is given by $\frac{p}{q}$ with $q$ an odd integer.}
	 Recall that $|v_2^{\perp} \rangle$ is orthogonal to $|v_2 \rangle$ in the plane $\text{span}(|v_1 \rangle, |v_2 \rangle)$. In the subspace orthogonal to $\text{span}(|v_1 \rangle, |v_2 \rangle)$, choose a basis consisting of $d-2$ mutually orthogonal vectors $|w_1 \rangle, \dots, |w_{d-2} \rangle$. Denoting $G'_{\text{gad}}$ as the orthogonality graph of the entire set of these vectors $\{| v_i \rangle\}_{i=1}^{n} \bigcup \{|v_2^{\perp} \rangle, |w_1 \rangle, \dots, |w_{d-2} \rangle\}$, we obtain a gadget graph that can be used as a building block in a Kochen-Specker type construction. In particular, the crucial property of $G'_{\text{gad}}$ is that in any $\{0,1\}$-coloring $f$, $f(v_1) = 1 \Rightarrow f(v_2^{\perp}) = 1$. This can be seen as follows: $f(v_1) = 1$ implies, by the $\{0,1\}$-coloring rules, that $f(w_i) = 0$ for all $i \in [d-2]$. Moreover, by the gadget property, we have $f(v_2) = 0$, and this imposes $f(v_2^{\perp}) = 1$ to satisfy the requirement that exactly one of the vertices in the maximum clique $(v_2,v_2^{\perp},w_1,\dots, w_{d-2})$ is assigned value $1$. 
	
	As in the original KS construction of \cite{KS}, we construct a chain of $p+1$ copies $G'^{(i)}_{\text{gad}}$ $(i=0,1,\ldots,p\}$ of $G'_{\text{gad}}$ so that $p \theta = q \frac{\pi}{2}$ is an odd integral multiple of $\frac{\pi}{2}$. These copies are obtained from the realization of $G'_{\text{gad}}$ by successive applications of a unitary $\mathcal{U}$, i.e., $|v_j^{(i)} \rangle = \mathcal{U}^{i} | v_j \rangle$ for $i=0,1,\ldots,p$ and $j=1,\ldots,n$ and similarly for the other vectors in $G'_{\text{gad}}$. This unitary operator $\mathcal{U}$ is defined as
	\begin{eqnarray}
	\mathcal{U} = |v_2^{\perp} \rangle \langle v_1 | - | v_2 \rangle \langle v_1^{\perp}| + \textbf{1}_W, 
	\end{eqnarray}
	where $| v_1^{\perp} \rangle$ denotes the vector orthogonal to $|v_1 \rangle$ in the plane $\text{span}(|v_1 \rangle, |v_2 \rangle)$ and where $\textbf{1}_W$ denotes the identity on the subspace orthogonal to $\text{span}(|v_1 \rangle, |v_2 \rangle)$. Writing $|v_2^{\perp} \rangle = \alpha | v_1 \rangle + \beta | v_1^{\perp} \rangle$ for some $\alpha, \beta \in \mathbb{C}$, we see that applying once $\mathcal{U}$ to the faithful realization of $G'_{\text{gad}}$ gives
	\begin{eqnarray}
	\mathcal{U} | v_1 \rangle = | v_2^{\perp} \rangle, \nonumber \\
	\mathcal{U} | v_2^{\perp} \rangle = \alpha | v_2^{\perp} \rangle - \beta | v_2 \rangle.
	\end{eqnarray}
	We have evidently $| \langle v_2^{\perp} | \mathcal{U} | v_2^{\perp} \rangle| = |\langle v_1 | v_2^{\perp} \rangle|$ and that 
	\begin{eqnarray}
	\arccos|\langle v_1 | \mathcal{U}| v_2^{\perp} \rangle| = 2 \arccos|\langle v_1 | v_2^{\perp} \rangle| = 2 \theta.
	\end{eqnarray}
We thus have that under successive applications of $\mathcal{U}$, $|v_1^{(0)}\rangle\rightarrow |v_1^{(1)}\rangle=|v_2^{\perp,(0)}\rangle$, $|v_2^{\perp,(0)}\rangle\rightarrow |v_2^{\perp,(1)}\rangle$, $|v_1^{(1)}\rangle\rightarrow |v_1^{(2)}\rangle=|v_2^{\perp,(1)}\rangle$, $|v_2^{\perp,(1)}\rangle\rightarrow |v_2^{\perp,(2)}\rangle$, and so on, with $|v_1^{(p)} \rangle \perp |v_1^{(0)} \rangle$. Furthermore, in any $\{0,1\}$-coloring $f$ of the graph union $\bigcup_{i} G'^{(i)}_{\text{gad}}$, $f(v_1^{(0)})=1 \Rightarrow f(v_1^{(p)}) = 1$. A similar construction of $d-1$ copies of $\bigcup_{i} G'^{(i)}_{\text{gad}}$ gives rise to a graph with a clique formed by the vertices $v_1^{(0)}, v_1^{(p)}$ and the $d-2$ vectors that complete the basis. The resulting graph is a Kochen-Specker graph since in any $\{0,1\}$-coloring, if any of the vertices in this maximal clique is assigned value $1$ then so are all of them, giving rise to a contradiction. We thus obtain a finite system of vectors given by the union of the vector sets in each of the graphs, that gives rise to a proof of the Kochen-Specker theorem in dimension $\omega(G_{\text{gad}})$.

\emph{Case $(ii)$: $\frac{\pi}{2 \theta}$ is rational and is given by $\frac{p}{q}$ with $q$ an even integer, or alternatively, $\frac{\pi}{2 \theta}$ is irrational.}
	
	In this case, we construct from $G_{\text{gad}}$ a larger gadget $\tilde{G}_{\text{gad}}$ with the property that the angle $\tilde{\theta}$ between the distinguished vectors obeys $\frac{\pi}{2 \tilde{\theta}} = \frac{\tilde{p}}{\tilde{q}} \in \mathbb{Q}$, with $\tilde{q}$ an odd integer. 
	As in the previous case, we let $|v_2^{\perp} \rangle$ be the vector orthogonal to $|v_2 \rangle$ in the plane $\text{span}(|v_1 \rangle, |v_2 \rangle)$, and $| v_1^{\perp} \rangle$ be the vector orthogonal to $|v_1 \rangle$ in this plane, so that $|v_2^{\perp} \rangle = \alpha | v_1 \rangle + \beta | v_1^{\perp} \rangle$,
%	\begin{eqnarray}
%	|v_1 \rangle = \alpha | u \rangle + \beta | u^{\perp, 2} \rangle,
%	\end{eqnarray}
	for some $\alpha, \beta \in \mathbb{C}$. We also consider a basis $\{|w_1 \rangle, \dots, |w_{d-2} \rangle \}$ for the subspace orthogonal to $\text{span}(|v_1 \rangle, |v_2 \rangle)$ and denote $G'_{\text{gad}}$ as the orthogonality graph of the set of vectors $\{|v_i \rangle\}_{i=1}^n \bigcup \{|v_2^{\perp} \rangle, |w_1 \rangle, \dots, |w_{d-2}\rangle\}$.
	
	Let $\mathcal{U}$ denote a unitary operator transforming $|v_1 \rangle$ to $|v_2^{\perp} \rangle$, i.e., $\mathcal{U}$ is of the form 
	\begin{eqnarray}
	\mathcal{U} &=& |v_2^{\perp} \rangle \langle v_1| - | v'_2 \rangle \langle v_1^{\perp}| +  \nonumber \\
&&+  |w'_1 \rangle \langle w_1 |+\dots+ |w'_{d-2} \rangle \langle w_{d-2} | 
	\end{eqnarray}
	with $|v'_2\rangle$, $|w'_1\rangle,\ldots,|w'_{d-2}\rangle$ orthogonal to $|v_2^{\perp} \rangle$ and orthogonal to each other.	% with $| v_1^{\perp,2} \rangle = - | u_2 \rangle$. 
	%We choose $\mathcal{V}$ such that $\mathcal{V} | v_2 \rangle$ does not lie entirely in the plane $\text{span}(|v_1 \rangle, |v_2 \rangle)$. 
	Applying $\mathcal{U}$ to the orthogonal representation of the gadget gives that 
	\begin{eqnarray}
	\mathcal{U} | v_1 \rangle &=& | v_2^{\perp} \rangle, \nonumber \\
	\mathcal{U} | v_2^{\perp} \rangle &=& \alpha |v_2^{\perp} \rangle - \beta |v'_2 \rangle
	% \nonumber \\
	%&=& \alpha | v_1 \rangle - \beta \mathcal{V} | u_2 \rangle.
	\end{eqnarray} 
	%Let $| v_2 \rangle$ denote a vector orthogonal to $\mathcal{U} | v \rangle$ in the plane $\text{span}(|u \rangle, \mathcal{U} | v \rangle)$, and let $\tilde{\theta} = \arccos|\langle u | \mathcal{U} | v_2 \rangle|$. 
	Let $\tilde{\theta} = \arccos| \langle v_1 | \mathcal{U} | v_2^{\perp} \rangle |$.
	We choose $|v'_2\rangle$ and thereby $\mathcal{U}$ such that $\frac{\pi}{2 \tilde{\theta}} = \frac{\tilde{p}}{\tilde{q}} \in \mathbb{Q}$ with $\tilde{q}$ an odd integer.  Now construct $G'_{\text{gad}}$ as the orthogonality graph of the set of vectors 
	\begin{eqnarray}
	\{|v_i \rangle\}_{i=1}^{n} \bigcup \{|v_2^{\perp} \rangle, |w_1 \rangle, \dots, |w_{d-2}\rangle\} \bigcup \nonumber \\ \{\mathcal{U}|v_i \rangle\}_{i=2}^{n} \bigcup \{\mathcal{U} |v_2^{\perp} \rangle, |w'_1 \rangle, \dots, |w'_{d-2} \rangle\}.
	\end{eqnarray}  
	We have thus concatenated two gadgets to form the new gadget $G'_{\text{gad}}$ with the property that if $f(|v_1 \rangle) = 1$ then also $f(|v_2^{\perp} \rangle) = 1$ and consequently also $f(\mathcal{U} | v_2^{\perp} \rangle) = 1$. We are now in the same position as in the previous case i.e., we may construct a chain of $\tilde{p}+1$ copies $G'^{(i)}_{\text{gad}}$ of $G'_{\text{gad}}$ and follow the steps as in the previous case to construct the entire KS set in dimension $\omega(G_{\text{gad}})$.

In both cases, we thus obtain a construction of a Kochen-Specker set in dimension $\omega(G_{\text{gad}})$, completing the proof. 
\end{proof}

We remark that the above Theorem does not guarantee that the $01$-gadgets appear as \textit{induced} subgraphs in KS graphs; this is the case only when every vertex in the $\{0,1\}$-edge-critical subgraph of the KS graph does not belong to three or more maximum cliques (cases $(i), (ii)$ and $(iii a)$ in the proof). As such, in the case $(iii b)$ where every vertex in the $\{0,1\}$-edge-critical subgraph of the KS graph belongs to at least three maximum cliques, the subgraphs may not correspond to vector subsets of the original KS vector set. We leave it as an interesting open question whether $01$-gadgets always appear as vector subsets of the KS vector sets in this case as well. We also note that constructions similar to that given in the proof of the second part of Theorem \ref{prop:KS-gadg} have appeared in \cite{BBCP09}.

\section{Other 01-gadgets and KS sets constructions}\label{sec:constr}
In this section, we make some interesting observations about $01$-gadgets and provide new constructions of $01$-gadgets that will be used in the next sections. 

\begin{lemma}
	\label{lem:min-gad}
	For any $d\geq 3$, there exists a $01$-gadget in dimension $d$ consisting of $5+d$ vertices.    
\end{lemma}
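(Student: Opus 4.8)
The plan is to manufacture a $(5+d)$-vertex gadget in dimension $d$ out of the $8$-vertex Clifton gadget $G_{\text{Clif}}$ in dimension $3$ (Fig.~\ref{fig:Clifton}) by ``padding'' it into higher dimensions with a block of mutually orthogonal vectors spanning a fresh subspace. Concretely, I would take the faithful orthogonal representation $\{|u_1\rangle,\dots,|u_8\rangle\}\subset\mathbb{C}^3$ of $G_{\text{Clif}}$, with $|u_1\rangle,|u_8\rangle$ the two distinguished non-orthogonal vectors, regard $\mathbb{C}^3$ as a subspace $V_3\subset\mathbb{C}^d$, fix an orthonormal basis $\{|w_1\rangle,\dots,|w_{d-3}\rangle\}$ of $V_3^\perp$, and set $\mathcal{S}_{\text{gad}}=\{|u_1\rangle,\dots,|u_8\rangle,|w_1\rangle,\dots,|w_{d-3}\rangle\}$. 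This is a set of $8+(d-3)=5+d$ vectors in $\mathbb{C}^d$, and its orthogonality graph $G$ is simply $G_{\text{Clif}}$ together with $d-3$ new vertices, each adjacent to \emph{every} other vertex of $G$ (since every $|w_i\rangle$ is orthogonal to all of $V_3$, hence to all eight Clifton vectors). For $d=3$ the construction degenerates to $G_{\text{Clif}}$ itself.

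I would then dispatch the structural conditions. Every maximal clique of $G$ is the union of $W:=\{w_1,\dots,w_{d-3}\}$ with a maximal clique of $G_{\text{Clif}}$, so $\omega(G)=(d-3)+\omega(G_{\text{Clif}})=d$, and the maximum cliques of $G$ are exactly the sets $W\cup T$ with $T$ a maximum clique of $G_{\text{Clif}}$. Since $d^*(G)\ge\omega(G)$ always, and $\mathcal{S}_{\text{gad}}$ is a faithful orthogonal representation of $G$ in $\mathbb{C}^d$ (faithfulness inside $G_{\text{Clif}}$ is inherited, and the universal vertices $w_i$ impose no non-adjacency constraint), we obtain $d^*(G)=\omega(G)=d$. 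Finally $G$ is $\{0,1\}$-colorable: extend any $\{0,1\}$-coloring of $G_{\text{Clif}}$ by setting $f(w_i)=0$ for all $i$, so that every maximum-clique sum over $W\cup T$ equals $0+1=1$ and every other clique sum is unchanged.

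The heart of the proof is the gadget property, which I would establish by a dichotomy on $S:=\sum_i f(w_i)$ for an arbitrary $\{0,1\}$-coloring $f$ of $G$. The maximum-clique equalities give $S+\sum_{v\in T}f(v)=1$ for each maximum clique $T$ of $G_{\text{Clif}}$, so $S\le1$. If $S=0$, then the restriction of $f$ to $V(G_{\text{Clif}})$ makes every such $T$ sum to $1$ and respects all smaller clique bounds, so it is a $\{0,1\}$-coloring of $G_{\text{Clif}}$, and the gadget property of the Clifton gadget gives $f(u_1)+f(u_8)\le1$. If $S=1$, some $w_k$ has $f(w_k)=1$; since $w_k$ is adjacent to every vertex, in particular to $u_1$ and $u_8$, the $\{0,1\}$-rules force $f(u_1)=f(u_8)=0$. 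Either way $f(u_1)+f(u_8)\le1$, and because $u_1\nsim u_8$ (equivalently $\langle u_1|u_8\rangle\ne0$) this is precisely the $01$-gadget condition; thus $G$ is a $01$-gadget in dimension $d$ on $5+d$ vertices.

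The one step that genuinely requires care --- and the reason a naive ``embed and ignore'' would seem to fail on a first look --- is the observation that the padding vectors $|w_i\rangle$ are orthogonal not merely to the two basis cliques of $G_{\text{Clif}}$ but to \emph{all} of its vertices, the distinguished pair included. This is exactly what kills the otherwise-threatening colorings in which some $w_i$ ``absorbs'' a maximum-clique constraint, leaving the entire Clifton part colored $0$ while $u_1$ and $u_8$ are both set to $1$. Past this point the argument is bookkeeping: no fresh realizability calculation is needed beyond invoking the known $3$-dimensional representation of $G_{\text{Clif}}$.
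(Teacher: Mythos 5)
Your construction is exactly the paper's: adjoin $d-3$ new vertices joined to each other and to all eight Clifton vertices, realized by an orthonormal basis of the complementary subspace, and inherit colorability and the gadget property from $G_{\text{Clif}}$. The proposal is correct and simply spells out in more detail (the dichotomy on whether some padding vertex receives value $1$) the verification the paper leaves implicit.
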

\begin{proof}
For $d=3$, a 8-vertex $01$-gadget is simply given by the Clifton gadget $G_{\text{Clif}}$. In higher dimensions, a new $01$-gadget $G'_{\text{Clif}}$ can be obtained by adding $d-3$ vertices to $G_{\text{Clif}}$ with edges joining the additional vertices to each other and to each of the $8$ vertices in $G_{\text{Clif}}$. Clearly, a faithful representation of $G'_{\text{Clif}}$ can be obtained by supplementing the $3$-dimensional representation of $G_{\text{Clif}}$ with $d-3$ mutually orthogonal vectors in the complementary subspace. The construction preserves the property that a $\{0,1\}$-coloring of $G'_{\text{Clif}}$ exists and that the two distinguished vertices $v_1,v_2$ of $G_{\text{Clif}}$, now viewed 	as vertices of $G'_{\text{Clif}}$, cannot both be assigned the value $1$ in any $\{0,1\}$ coloring. 
\end{proof}

The $8$-vertex Clifton gadget $G_{\text{Clif}}$ was shown to be the minimal $01$-gadget in dimension 3 \cite{Arends09}. This result was obtained by an exhaustive search over all non-isomorphic square-free graphs of up to $7$ vertices. 
It is an open question to prove if the simple construction in Lemma \ref{lem:min-gad} gives the minimal $01$-gadgets in dimension $d>3$ or whether even smaller gadgets exist in these higher dimensions. 

In the Clifton gadget $G_{\text{Clif}}$ the overlap between the two distinguised vertices is $|\langle v_1|v_2\rangle|=1/3$. The following Lemma shows that one can reduce this overlap at the expense of increasing the dimension by one. 
\begin{lemma}
	\label{lem:gad-realize}
	Let $G$ be a $01$-gadget in dimension $d$ with distinguished vectors $|u_1\rangle, |u_2\rangle$. Then there exists a $01$-gadget $G'$ in dimension $d+1$ with distinguished vertices $|v_1\rangle, |v_2\rangle$ for any choice of the overlap $0<|\langle v_1|v_2\rangle|\leq |\langle u_1|u_2\rangle|$.
\end{lemma}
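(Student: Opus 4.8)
The plan is to realize $G'$ inside $\mathbb{C}^{d+1}=\mathbb{C}^{d}\oplus\mathbb{C}|e\rangle$, keeping a faithful realization of $G$ in the hyperplane $\mathbb{C}^{d}$ and tilting the distinguished vector $|u_1\rangle$ towards the new direction $|e\rangle$. Fix $\mu\in[0,\pi/2)$ and set $|v_1\rangle:=\cos\mu\,|u_1\rangle+\sin\mu\,|e\rangle$, $|v_2\rangle:=|u_2\rangle$, and $|v_1^{\perp}\rangle:=-\sin\mu\,|u_1\rangle+\cos\mu\,|e\rangle$ (the unit vector orthogonal to $|v_1\rangle$ in $\mathrm{span}(|u_1\rangle,|e\rangle)$). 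Choose an orthonormal basis $|w_1\rangle,\dots,|w_{d-1}\rangle$ of the orthogonal complement of $|u_1\rangle$ inside $\mathbb{C}^{d}$, picked generically so that no $|w_i\rangle$ is orthogonal to any vertex of $G$ other than $u_1$ (possible, since for each fixed vertex $y\neq u_1$ of $G$ the orthonormal bases with some $|w_i\rangle\perp|y\rangle$ form a lower-dimensional subset). Let $G'$ be the orthogonality graph of $\mathcal{S}_G\cup\{|v_1\rangle,|v_1^{\perp}\rangle,|e\rangle,|w_1\rangle,\dots,|w_{d-1}\rangle\}$, where $\mathcal{S}_G$ is the chosen realization of $G$. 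Since $\langle v_1|v_2\rangle=\cos\mu\,\langle u_1|u_2\rangle$, letting $\mu$ vary over $[0,\pi/2)$ realizes every value in $(0,|\langle u_1|u_2\rangle|]$ for the overlap; in particular $v_1\nsim v_2$.

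Two observations drive the argument. First, both $\{|v_1\rangle,|v_1^{\perp}\rangle,|w_1\rangle,\dots,|w_{d-1}\rangle\}$ and $\{|u_1\rangle,|e\rangle,|w_1\rangle,\dots,|w_{d-1}\rangle\}$ are $(d+1)$-tuples of mutually orthogonal vectors, hence maximum cliques of $G'$; consequently $\omega(G')=d+1$ (nothing larger fits in $\mathbb{C}^{d+1}$), and $d^*(G')=d+1$ as well, since an orthogonal representation needs dimension at least $\omega(G')$ while the defining vectors provide a faithful one (choose $\mu$ and the $|w_i\rangle$ so that all vectors are distinct). Second, in any $\{0,1\}$-coloring $f$ of $G'$, $f(v_1)=1$ forces $f(w_i)=0$ for every $i$ via the first maximum clique, and then the second maximum clique forces $f(u_1)+f(e)=1$. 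If $f(u_1)=1$, the gadget property of $G$ gives $f(u_2)=0$; if instead $f(e)=1$, then $f(u_2)=0$ because $|e\rangle\perp\mathbb{C}^{d}\ni|u_2\rangle$. Either way $f(v_1)+f(v_2)=f(v_1)+f(u_2)\le 1$, which is exactly the gadget property required of $G'$.

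It remains to check that $G'$ admits a $\{0,1\}$-coloring, which I obtain by extending an arbitrary $\{0,1\}$-coloring $f_0$ of $G$. Set $f(e)=0$ and $f(v_1)=0$; if $f_0(u_1)=1$, set $f(v_1^{\perp})=1$ and $f(w_i)=0$ for all $i$; if $f_0(u_1)=0$, set $f(v_1^{\perp})=0$, $f(w_1)=1$, and $f(w_i)=0$ for $i\ge 2$ — this second branch is consistent precisely because $f_0$ cannot be identically zero, $G$ having a maximum clique that must carry a $1$. Verifying that this $f$ respects all clique constraints of $G'$ is a finite case check, made tractable by two facts: every maximum clique of $G'$ contains one of $|v_1\rangle,|v_1^{\perp}\rangle,|e\rangle$ (the only vertices outside $\mathbb{C}^{d}$, as $\mathbb{C}^{d}$ holds no clique of size $d+1$), and, by the genericity of the $|w_i\rangle$, each $|w_i\rangle$ is orthogonal only to $|u_1\rangle$ and the other $|w_j\rangle$ among the vertices of $G'$ lying in $\mathbb{C}^{d}$. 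One then finds that every maximum clique of $G'$ is one of the two cliques above, or (maximum clique of $G$)$\cup\{|e\rangle\}$, or, for a neighbour $x$ of $u_1$ in $G$, a clique $\{x,v_1,v_1^{\perp}\}\cup B$ with $\{x,u_1\}\cup B$ a maximum clique of $G$; in each case the value of $f$ on the clique is $1$, and no smaller clique carries two $1$s.

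The main obstacle is this last colorability verification. It is not deep, but one must ensure the auxiliary vertices $|v_1^{\perp}\rangle,|e\rangle,|w_1\rangle,\dots,|w_{d-1}\rangle$ do not conspire with $G$ to create a maximum clique of $G'$ on which the prescribed extension of $f_0$ fails to sum to $1$; the genericity of the $|w_i\rangle$, together with the pinning of every maximum clique of $G'$ by which of $|v_1\rangle,|v_1^{\perp}\rangle,|e\rangle$ it contains, is what keeps this enumeration finite and under control. Everything else — the overlap computation, the forcing argument for the gadget property, and the clique-number and faithful-dimension counts — is immediate.
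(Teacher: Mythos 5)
Your construction is correct, and it realizes the same basic idea as the paper's proof -- pass to $\mathbb{C}^{d+1}$, tilt $|u_1\rangle$ towards the new direction so that the overlap with $|u_2\rangle$ shrinks by a factor $\cos\mu$, and exploit the fact that the new axis is orthogonal to everything in the old hyperplane, in particular to $|u_2\rangle$ -- but the execution is noticeably heavier than what the paper does. The paper's $G'$ consists of only $n+1$ vectors: the padded copies $|\tilde u_i\rangle$ for $i\geq 2$, the tilted vector $\mathcal{N}(|\tilde u_1\rangle+x|e\rangle)$ \emph{in place of} $|u_1\rangle$, and the single new vector $|v_0\rangle=|e\rangle$; since $|v_0\rangle$ is orthogonal to every vector except the tilted one, $f(|v_2\rangle)=1$ forces $f(|v_0\rangle)=0$, and then the restriction behaves like a coloring of $G$, so the gadget property of $G$ does the rest. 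You instead keep $|u_1\rangle$, add the tilted $|v_1\rangle$, its in-plane complement $|v_1^{\perp}\rangle$, $|e\rangle$, and a generically chosen orthonormal basis $|w_1\rangle,\dots,|w_{d-1}\rangle$ of $u_1^{\perp}\cap\mathbb{C}^d$ ($n+d+2$ vectors in all), which obliges you to invoke a genericity argument and to enumerate all maximum cliques of $G'$ for the colorability check. What this buys you is an explicit forcing chain through two complete bases ($f(v_1)=1\Rightarrow f(w_i)=0\Rightarrow f(u_1)+f(e)=1$) and a fully worked-out $\{0,1\}$-coloring of $G'$, a point the paper leaves implicit; what it costs is length and a few degenerate cases to police (e.g.\ $\mu=0$, or coincidences among the added vectors). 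One small step you should make explicit: in the branch $f(u_1)=1$ you invoke the gadget property of $G$, which requires that the restriction of $f$ to $\mathcal{S}_G$ be a valid coloring of $G$, i.e.\ that every $d$-element orthogonal basis $B\subset\mathcal{S}_G$ sums to $1$; this is not automatic in ambient dimension $d+1$, but it does follow in your setup because $f(e)=0$ in that branch and $B\cup\{|e\rangle\}$ is a maximum clique of $G'$, so $\sum_{B}f=1$. With that sentence added, the argument is complete.
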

\begin{proof}
Let $\{|u_i\rangle\}_{i=1}^n\subset \mathbb{C}^d$ be the set of $n$ vectors forming the gadget $G$. We define $G'$ as the set of $n+1$ vectors $\{|v_i\rangle\}_{i=0}^n$ in $\mathbb{C}^{d+1}$ defined as follows. For given $|u_i \rangle \in \mathbb{C}^{d}$, let $|\tilde{u}_i \rangle \in \mathbb{C}^{d+1}$ be the vector obtained by padding a $0$ to the end of $|u_i \rangle$. Define the vectors $|v_i \rangle$ as
	%\begin{eqnarray}
	\[
	|v_i \rangle := \left\{\begin{array}{lr}
	(0,\dots,0, 1)^T, & \text{for } i = 0\\
	\mathcal{N}\left(|\tilde{u}_1\rangle + x (0,\dots,0, 1)^T\right), & \text{for } i=1\\
	|\tilde{u}_i \rangle & \text{for } i = 2, \dots, n
	\end{array}\right.
	\] 
	%\end{eqnarray}
	with a free parameter $x \in \mathbb{R}$ and corresponding normalization factor $\mathcal{N}$. 
Now, notice that the orthogonality relations between the set of vectors $|v_1 \rangle, \dots, |v_{n} \rangle$ is the same as the orthogonality relations between the set of vectors $|u_1 \rangle, \dots, |u_{n} \rangle$. The only additional orthogonality relations in $G'$ involve $|v_0\rangle$, which is orthogonal to all other vectors but $|v_1\rangle$. By this property, it follows that if $f(|v_0\rangle)=0$ in a coloring of $G'$, then the coloring of the remaining vectors $|v_1 \rangle, \dots, |v_{n} \rangle$ is constrained exactly as for $|u_1 \rangle, \dots, |u_{n|} \rangle$ in $G$. In particular, we cannot have simultaneously $f(|v_1\rangle)=f(|v_2\rangle)=1$. Now simply observe that if $f(|v_2\rangle)=1$, we must have necessarily have $f(|v_0\rangle)=0$ since $|v_0 \rangle \perp |v_2 \rangle$ and thus $|v_1\rangle$ cannot also satisfy $f(|v_1\rangle) = 1$. In other words, $G'$ is a $01$-gadget with $|v_1\rangle, |v_2\rangle$ playing the role of the distinguished vertices. Finally, we see that by varying the free parameter $x \in \mathbb{R}$, we get  any overlap $0< |\langle v_1|v_2\rangle|\leq |\langle u_1|u_2\rangle|$ between the distinguished vertices. 
\end{proof}

We now show the following.  

\begin{theorem}
	\label{prop:fin-gadg-const}
	Let $|v_1 \rangle$ and $|v_2 \rangle$ be any two distinct non-orthogonal vectors in $\mathbb{C}^d$ with $d \geq 3$. Then there exists a 01-gadget in dimension $d$ with $|v_1 \rangle$ and $|v_2 \rangle$ being the two distinguished vertices.
\end{theorem}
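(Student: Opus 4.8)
The plan is to separate the problem into (a) realizing the right \emph{overlap} and (b) placing the pair exactly by a unitary. Put $r := |\langle v_1|v_2\rangle|$; since $|v_1\rangle$ and $|v_2\rangle$ are distinct rays and non-orthogonal, $r\in(0,1)$. Suppose I can build \emph{some} $01$-gadget in $\mathbb{C}^d$ with distinguished vectors $|a\rangle,|b\rangle$ satisfying $|\langle a|b\rangle|=r$. Two pairs of unit vectors in $\mathbb{C}^d$ with equal overlap are related by a unitary, so there is a unitary $W$ carrying the ray of $|a\rangle$ to that of $|v_1\rangle$ and the ray of $|b\rangle$ to that of $|v_2\rangle$; applying $W$ to every vector of the gadget preserves all orthogonality relations, the $\{0,1\}$-colorability and the gadget inequality, so $W\mathcal{S}_{\mathrm{gad}}$ is the gadget we want. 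Hence the whole problem reduces to realizing a prescribed overlap $r\in(0,1)$ between the distinguished vectors of a gadget living in $\mathbb{C}^d$.

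To do this I reuse the rotation machinery from the proof of Theorem~\ref{prop:KS-gadg}. Take the $(5+d)$-vertex gadget of Lemma~\ref{lem:min-gad} in $\mathbb{C}^d$, whose distinguished vectors have overlap $1/3$, and from it build the implication gadget $G'$ and the unitary $\mathcal{U}$ exactly as there: $\mathcal{U}$ acts on the plane $P=\mathrm{span}(|v_1\rangle,|v_2\rangle)$ of the distinguished vectors as the rotation by the angle $\theta_0=\tfrac{\pi}{2}-\arccos\tfrac13$ and as the identity on $P^{\perp}$, and $G'$ has the property $f(v_1)=1\Rightarrow f(v_2^{\perp})=1$ with $|v_2^{\perp}\rangle=\mathcal{U}|v_1\rangle$. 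Form the chain $\mathcal{G}_k=\bigcup_{i=0}^{k-1}\mathcal{U}^iG'$; as in that proof, $\mathcal{G}_k$ admits a $\{0,1\}$-coloring (propagate along the chain a coloring of $G'$ with $f(v_1)=1$, equivalently $f(v_2^{\perp})=1$), and in every coloring $f(v_1^{(0)})=1\Rightarrow f(v_1^{(k)})=1$, where $|v_1^{(k)}\rangle=\mathcal{U}^k|v_1^{(0)}\rangle$ sits in $P$ at angle $k\theta_0$ from $|v_1^{(0)}\rangle$.

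It remains to cap the chain with one extra vertex whose overlap with $|v_1^{(0)}\rangle$ is exactly $r$. By Niven's theorem $\arccos\tfrac13$ is an irrational multiple of $\pi$, hence so is $\theta_0$, and $\{k\theta_0\bmod\pi:k\ge1\}$ is dense (indeed equidistributed) in $[0,\pi)$; fix $k$ with $|\sin(k\theta_0)|>r$. Adjoin to $\mathcal{G}_k$ a new vertex $b$ whose vector $|b\rangle\in\mathbb{C}^d$ is orthogonal to $|v_1^{(k)}\rangle$, satisfies $|\langle v_1^{(0)}|b\rangle|=r$, and is otherwise in general position — such a $|b\rangle$ exists in the hyperplane $|v_1^{(k)}\rangle^{\perp}$ precisely because $r<|\sin(k\theta_0)|$ and $d\ge3$, and general position makes $b$ adjacent only to $v_1^{(k)}$ and keeps the representation faithful. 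The resulting graph $G''$ is still $\{0,1\}$-colorable (extend a coloring of $\mathcal{G}_k$ by $f(b)=0$), has $\omega(G'')=d$, and satisfies $f(v_1^{(0)})=1\Rightarrow f(v_1^{(k)})=1\Rightarrow f(b)=0$, so $f(v_1^{(0)})+f(b)\le1$ in every coloring, while $v_1^{(0)}\nsim b$ since $|\langle v_1^{(0)}|b\rangle|=r>0$. Thus $G''$ is a $01$-gadget in dimension $d$ whose distinguished vectors have overlap exactly $r$, and the unitary step from the first paragraph delivers the gadget with $|v_1\rangle,|v_2\rangle$ themselves.

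I expect the main obstacle to be the colorability bookkeeping: one must be sure that the iterated gluing $\mathcal{G}_k$ of implication gadgets, and then its augmentation by $b$, never destroys the existence of a $\{0,1\}$-coloring, even though every gluing introduces a fresh maximum clique — this is the part that is easiest to get wrong. The only non-combinatorial ingredient, the equidistribution of $\{k\theta_0\}$ via Niven's theorem, is short; the rest is the rotation construction of Theorem~\ref{prop:KS-gadg} applied to one extra vertex.
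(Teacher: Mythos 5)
Your overall route is genuinely different from the paper's (which writes down an explicit $43$-vertex graph with an explicit coordinate family, parameterized by $x$, covering the two overlap ranges), and several of your steps are sound: the reduction via a unitary to ``realize the overlap'' is fine, the implication $f(v_1^{(0)})=1\Rightarrow f(v_1^{(k)})=1$ does survive taking unions (a coloring of the union restricts to a coloring of each copy), Niven's theorem does give irrationality of $\theta_0/\pi$, and the existence of $|b\rangle$ with $|\langle v_1^{(0)}|b\rangle|=r$ in the hyperplane $|v_1^{(k)}\rangle^{\perp}$ is correct since $r<|\sin(k\theta_0)|$, with enough freedom there to keep $b$ in general position.

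The genuine gap is exactly the point you flag but do not resolve: the $\{0,1\}$-colorability of $\mathcal{G}_k=\bigcup_i \mathcal{U}^iG'$ (and of its capped version $G''$). A $01$-gadget must be colorable by definition, and colorings are constrained by \emph{all} orthogonality relations among the vectors, including accidental ones between different copies $\mathcal{U}^iG'$ and $\mathcal{U}^jG'$. Your proposed coloring (``propagate $f(v_1)=1$ along the chain'') colors each copy separately and is consistent on the shared vectors $v_1^{(i+1)}=v_2^{\perp,(i)}$ and the $w$'s, but nothing rules out (a) two vectors assigned $1$ in different copies being accidentally orthogonal, or (b) a cross-copy set of $d$ mutually orthogonal vectors all assigned $0$, violating the basis condition; either failure kills the coloring, and in the worst case $\mathcal{G}_k$ could even be a KS set, i.e., not colorable at all. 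Crucially, you cannot appeal to genericity: once the Clifton-based gadget and the paper's $\mathcal{U}$ are fixed, every vector in the chain is determined, so the accidental orthogonalities are whatever they are and must be checked (or the construction must be modified, e.g., using the freedom in the choice of $\mathcal{U}$ as in case $(ii)$ of Theorem~\ref{prop:KS-gadg}, or perturbing the realization, with the consistency of shared vertices re-verified). Citing the rotation machinery of Theorem~\ref{prop:KS-gadg} does not supply this, because that construction aims at \emph{non}-colorability and never needs—nor proves—that any intermediate chain admits a coloring. Until the colorability of $\mathcal{G}_k\cup\{b\}$ is actually established, the construction does not yield a $01$-gadget; this is precisely the difficulty the paper sidesteps by exhibiting explicit vectors for which colorability can be verified directly.
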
   
While the existence of such a construction can be anticipated from the Kochen-Specker construction from Theorem~\ref{prop:KS-gadg}, we give a construction with much fewer vectors based on the $43$-vertex graph of Fig. \ref{fig:gadg-id-vec}.
\begin{proof}
The construction is based on the $43$-vertex graph $G$ of Fig. \ref{fig:gadg-id-vec}. We first show the construction for $\mathbb{C}^3$, and then straightforwardly extend it to $\mathbb{C}^d$ for $d > 3$. Suppose thus that we are given $|v_1 \rangle, |v_2 \rangle \in \mathbb{C}^3$. We consider two cases: (i) $0 < | \langle v_1 | v_2 \rangle| \leq \frac{1}{\sqrt{2}}$ and (ii)  $\frac{1}{\sqrt{2}} < | \langle v_1 | v_2 \rangle| \leq 1$. 
	
Case (i): $0 < | \langle v_1 | v_2 \rangle| \leq \frac{1}{\sqrt{2}}$.
	Suppose without loss of generality that $|v_1 \rangle = (1,0,0)^T$ and $|v_{2} \rangle = \frac{1}{\sqrt{1+x^2}}(x,1,0)^T$ with $0 < x \leq 1$.
	In this case, the induced subgraph $G_{\text{ind}}$ of $G$  consisting of the vertex set $V(G_{\text{ind}}) = \{1,\dots,22\}$ and $E(G_{\text{ind}}) = \{(u_i, u_j): 1 \leq i,j \leq 22, (u_i,u_j) \in E(G)\}$ will suffice to construct the gadget with $u_1$ and $u_{22}$ the two distinguished vertices, corresponding to $|v_1 \rangle$ and $|v_2 \rangle$. First, it is easily verified from the graph that in any $\{0,1\}$-coloring $f$,  $f(u_1)$ and $f(u_{22})$ cannot both be assigned the value 1. It thus only remains to provide an orthogonal representation of the graph $G_{\text{ind}}$. Such a representation is given by the following set of (non-normalized) vectors:
	\begin{eqnarray}
	&&|u_1 \rangle = (1,0,0)^T; \; \; |u_2 \rangle = (0,1,-1)^T; \; \; |u_3 \rangle = (0,1,0)^T; \nonumber \\
	&&|u_4 \rangle = (0,y,1)^T; \; \; |u_5 \rangle = (2 x,1,1)^T; \; \; |u_6 \rangle = (-1,0,2 x)^T; \nonumber \\
	&&|u_7 \rangle = (-2 x,0,-1)^T; \; \; |u_8 \rangle = (x,1,-2 x^2)^T; \nonumber \\
	&& |u_9 \rangle = (2x^3, 2 x^2,1+x^2)^T; \nonumber \\
	&& |u_{10} \rangle = (-(1+x^2),0,2x^3)^T; \nonumber \\
	&&|u_{11} \rangle = (2 x^3, 0, 1+x^2)^T; \nonumber\\
	&& |u_{12} \rangle = (x(1+x^2), 1+x^2,-2x^4)^T; \nonumber \\
	&&|u_{13} \rangle = (2 x^5, 2x^4, (1+x^2)^2)^T; \nonumber \\
	&& |u_{14} \rangle = (-(1+x^2)^2, 0, 2x^5)^T; \nonumber \\
	&&|u_{15} \rangle = (2x^5, 0,(1+x^2)^2)^T; \nonumber \\
	&& |u_{16} \rangle = (x(1+x^2)^2, (1+x^2)^2, -2x^6)^T; \nonumber \\
	&&|u_{17} \rangle = (2x^7, 2x^6, (1+x^2)^3)^T; \nonumber \\
	&& |u_{18} \rangle = (-x(1+ y^2), -1, y)^T; \nonumber \\
	&&|u_{19} \rangle = (1,-x,-x)^T; \; \; |u_{20} \rangle = (1,-x,0)^T; \nonumber \\
	&&|u_{21} \rangle = (1,-x,x y)^T; \; \; |u_{22} \rangle = (x,1,0)^T;
	\end{eqnarray}
	with 
	\begin{eqnarray}
	y = \frac{(1+x^2)^3 + \sqrt{(1+x^2)^6 - 16 x^{14} (1+x^2)}}{4 x^8}\,.
	\end{eqnarray}
	
It is easily verified that this set of vectors satisfy all the orthogonality relations encoded by the induced subgraph $G_{\text{ind}}$ we are considering.
	
Case (ii):  $\frac{1}{\sqrt{2}} < | \langle v_1 | v_2 \rangle| \leq 1$.
	Suppose without loss of generality that $|v_1 \rangle = (1,0,0)^T$ and $|v_{2} \rangle = (1+x,1-x,0)^T/\sqrt{2+2x^2}$ with $0 < x \leq 1$. In this case, we consider the entire $43$-vertex graph $G$ from Fig. \ref{fig:gadg-id-vec}, with  $u_1$ and $u_{42}$ the two distinguished vertices, corresponding to $|v_1 \rangle$ and $|v_2 \rangle$. Again, it is easily seen that in any $\{0,1\}$-coloring $f$,  $f(u_1)$ and $f(u_{42})$ cannot both be assigned the value 1. It thus only remains to provide an orthogonal representation of the graph $G$. 
	
The graph $G$ can be seen as being composed from $(i)$ the induced subgraph $G_{\text{ind}}$ with vertices $u_1,\ldots,u_{22}$ considered above, $(ii)$ an isomorphic subgraph $G'_{\text{ind}}$ with vertices $u'_1=u_{20},u'_2=u_{23},\ldots,u'_{22}=u_{42}$, $(iii)$ the vertex $u_{43}$ connected to $u_1$, $u_{20}$, $u_{22}$, $u_{42}$. 

The first 22 vectors $u_1,\ldots,u_{22}$ of $G_{\text{ind}}$ are chosen as above with $x = 1$ and $y = 2 + \sqrt{2}$. 
The 22 vectors $u'_1,\ldots,u'_{22}$ of $G'_{\text{ind}}$ are also obtained from the above solution, but with $0<x\leq 1$ a free parameter, and after applying first a unitary $U$ that maps  $(1,0,0)$ to $(1,-1,0)/\sqrt{2}$ and $(0,1,0)$ to $(1,1,0)\sqrt{2}$ and leave invariant $(0,0,1)$. We thus have $|u_1\rangle=|v_1\rangle=(1,0,0)^T$ and  $|u_{42}\rangle=|v_2\rangle=(1+x,1-x,0)^T/\sqrt{2+2x^2}$ as assumed.

By construction, the orthogonality relations of the subgraphs $G_{\text{ind}}$ and $G'_{\text{ind}}$ are satisfied. We also have that the vectors common to the two subgraphs are indeed identical, namely $|u_{20}\rangle=(1,-1,0)^T$ and $|u_{22}\rangle=(1,1,0)^T$. Furthemore, choosing $|u_{43}\rangle=(0,0,1)^T$, we also have that $|u_{43}\rangle$ is orthogonal to $|u_1\rangle, |u_{20}\rangle, |u_{22}\rangle$, and $|u_{42}\rangle$ as required. 

This completes the construction of the gadget for $\mathbb{C}^3$. Now, one may simply consider the same set of vectors as being embedded in any $\mathbb{C}^d$ (with additional vectors $(0,0,0,1,0,\dots,0)^T$, $(0,0,0,0,1,0,\dots,0)^T$ etc.) to construct a gadget in this dimension.
	\end{proof}
\begin{figure}[t]
	\centerline{\includegraphics[scale=0.38]{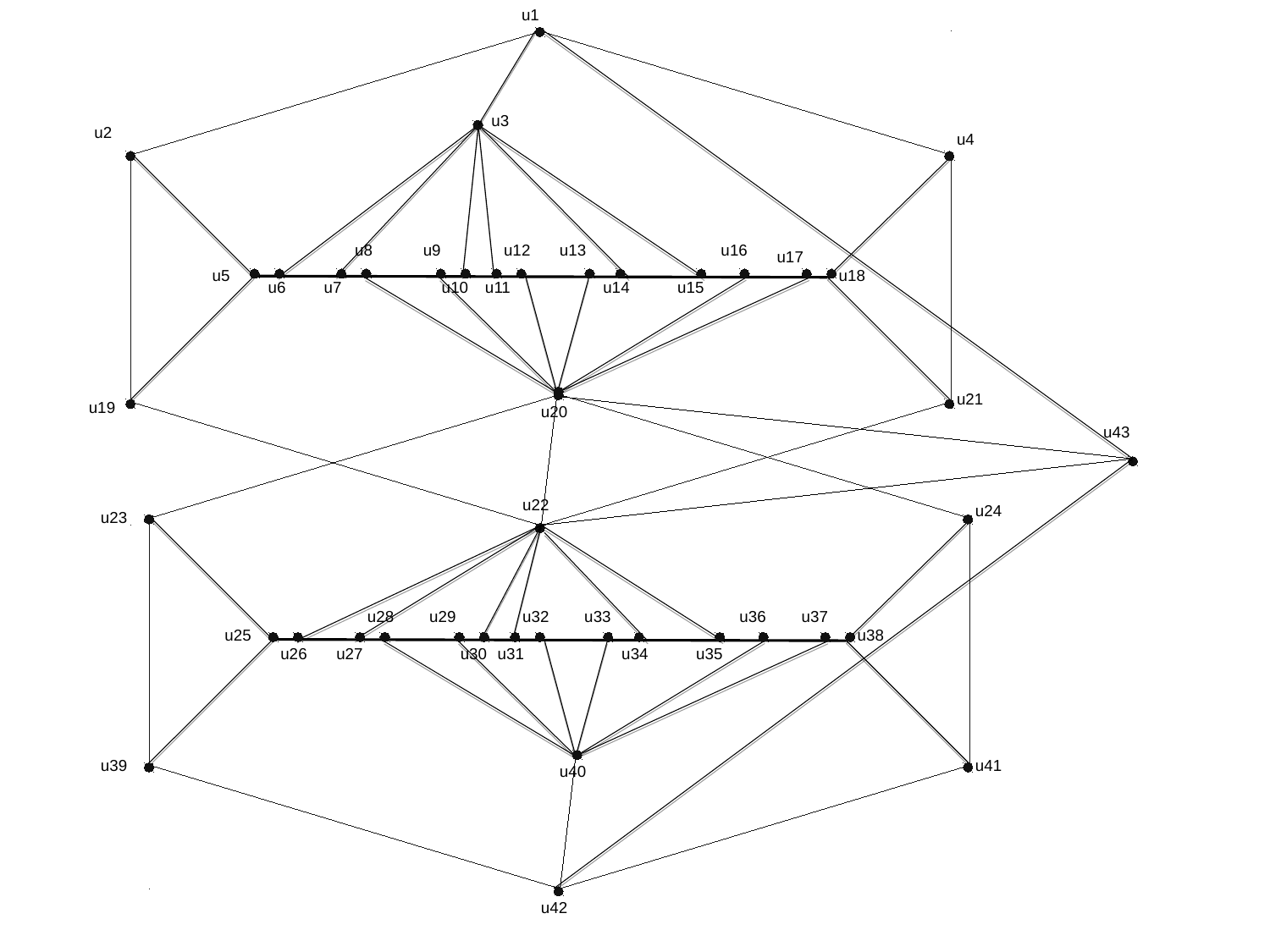}}
	\caption{The 43 vertex $01$-gadget used in the proof of Theorem~\ref{prop:fin-gadg-const}.}
	\label{fig:gadg-id-vec}
\end{figure}

%\subsection{Further proofs of the KS theorem using $01$-gadgets.}
Theorem \ref{prop:fin-gadg-const} allows to construct new KS graphs than the one given in the proof of Theorem \ref{prop:KS-gadg}. Some of such constructions in dimension 3 are shown in Fig. \ref{fig:KS-proofs}. A crucial role in these is played by the repeating unit $G_0$ shown in Fig. \ref{fig:KS-proofs} (a). This unit is given by a set of basis vectors $\{|u_1 \rangle, |u_2 \rangle, |u_3 \rangle\}$ all connected via appropriate $01$-gadgets to a central vector $|v_1 \rangle$. In any $\{0,1\}$-coloring $f$ of $G_0$, one of the three basis vectors must be assigned the value $1$, so that we necessarily have $f(|v_1 \rangle) = 0$. In other words, $G_0$ is a graph in which a particular vector necessarily takes value $0$ in any $\{0,1\}$-coloring. Note that this property is also shown by the graph in Fig. \ref{fig:gadg-id-vec} 
%\tred{this is like graph of figure 3}. 

Note that from $G_0$, one can also construct an orthogonality graph $G_1$ in which a particular vector necessarily takes values 1 in any $\{0,1\}$-coloring. Indeed, consider two copies of $G_0$ with the respective central vectors $|v_1 \rangle$ and $|v_2 \rangle$ orthogonal to each other, so that $f(|v_1 \rangle) = f(|v_2 \rangle) = 0$. Then, in any $\{0,1\}$-coloring of the resulting graph $G_1$, the third basis vector $|v_3 \rangle \perp |v_1 \rangle, |v_2 \rangle$ necessarily obeys $f(|v_3 \rangle) = 1$.  

\begin{figure}
	\centerline{\includegraphics[scale=0.35]{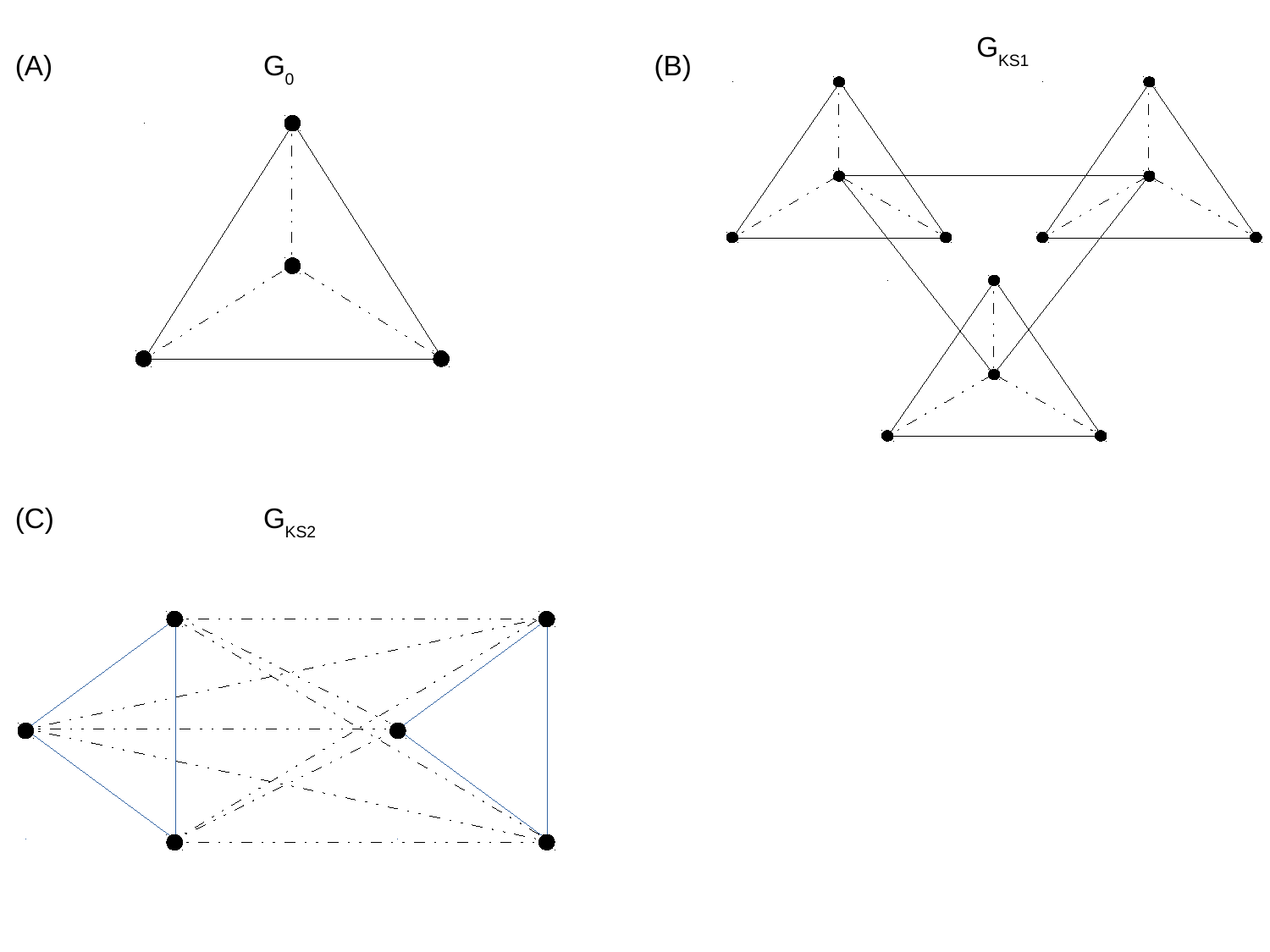}}
	\caption{Graphs with the dashed edges denoting $01$-gadgets. (a) In any $\{0,1\}$-coloring of the graph $G_0$, the central vertex is necessarily assigned value $0$. (b) Three copies of $G_0$ with the central vertices forming a basis in $\mathbb{C}^3$ so that the resulting graph $G_{KS1}$ forms a Kochen-Specker proof. (c) Another proof of the KS theorem $G_{KS2}$ is obtained by connecting every pair of vectors in two bases by a $01$-gadget.}
	\label{fig:KS-proofs}
\end{figure} 

In Fig. \ref{fig:KS-proofs} (b), a KS proof in $\mathbb{C}^3$ is based on the unit $G_0$, repeated three times with a basis set of central vectors $|v_1 \rangle, |v_2 \rangle, |v_3 \rangle$. By the property of $G_0$ in any $\{0,1\}$-coloring, all these three basis vectors are assigned value $0$ leading to a KS contradiction.
 In Fig. \ref{fig:KS-proofs} (c), the construction is based on two basis sets $\{|u_1 \rangle, |u_2 \rangle, |u_3 \rangle\}$ and $\{|v_1 \rangle, |v_2 \rangle, |v_3 \rangle\}$ with an appropriate $01$-gadget connecting every pair $|u_i \rangle, |v_j \rangle$ for $i, j =1, 2, 3$. So that assigning value $1$ to any of the vectors in one basis, necessarily implies that all of the vectors in the other basis are assigned value $0$, leading to a contradiction. Furthermore, the construction can be readily extended to derive KS graphs using any frustrated graph. 
%For instance, consider the frustrated triangle shown in Fig. \ref{} (d), where each edge in $\{(v_1, v_2), (v_2, v_3), (v_3, v_1)\}$ imposes a constraint $f(v_i) \neq f(v_j) \in \{0,1\}$. The configuration can be turned into a Kochen-Specker proof by considering two copies of $G_1$ with $v_1, v_2$ taking value $1$        

\section{Statistical KS arguments based on $01$-gadgets}\label{sec:real}
The KS theorem can be seen as a proof that no non-contextual deterministic hidden-variable interpretation of quantum theory is possible. In a deterministic hidden-variable model, we aim to reproduce the quantum probabilities
\begin{equation}
\text{Pr}_{\psi}(i|M)=\sum_\lambda q_{\psi}(\lambda) f_\lambda(i|M)
\end{equation}
in term of hidden-variables $\lambda$, where a distribution $q_{\psi}(\lambda)$ over the hidden-variables is associated to each quantum state $|\psi\rangle$, and where for each $\lambda$, the model predicts with certainty that one of the outcomes $i$ will occur for each measurement $M$, i.e., the hidden measurement outcome probabilities $f_\lambda(i|M)$ satisfy $f_\lambda(i|M)\in\{0,1\}$. Furthermore, the model is non-contextual if, as in the quantum case, the probabilistic assignment to the outcome $i$ of the (projective) measurement $M$, only depends on the corresponding projector $V_i$, independently of the wider context provided by the full description of the measurement $M=\{V_1,V_2,\ldots,V_n\}$. In other words in a non-contextual deterministic hidden-variable, we aim to write for every projector $V$:
\begin{equation}
\langle\psi|V|\psi\rangle=\sum_\lambda q_{\psi}(\lambda) f_\lambda(V)\,,
\end{equation}
where $f_\lambda(V)\in\{0,1\}$. Obviously, we should also require for consistency that $\sum_{i\in \mathcal{O}}f(V_i)\leq 1$ for any set $\mathcal{O}$ of mutually orthogonal projectors, with equality when the projectors in $\mathcal{O}$ sum to the identity. 

No-go theorems against such models, i.e., ``proofs of contextuality" , are usually obtained by considering a finite set $\mathcal{S}=\{|v_1\rangle,\ldots,|v_n\rangle\} \subset \mathbb{C}^d$ of rank-one projectors $V_i$, represented as vectors through $V_i=|v_i\rangle\langle v_i|$. Specializing to this case, a non-contextual hidden variable model should satisfy for each $|v_i\rangle$ in $\mathcal{S}$ and each $|\psi\rangle$ in $\mathbb{C}^d$,
\begin{equation}\label{eq:nchv}
|\langle\psi|v_i\rangle|^2=\sum_\lambda q_{\psi}(\lambda) f_\lambda(|v_i\rangle)\,,
\end{equation}
where the $f_\lambda:\mathcal{S}\rightarrow \{0,1\}$ are $\{0,1\}$-colorings of $\mathcal{S}$.

At least three types of no-go theorems, from strongest to weakest, against such non-contextual hidden-variable models can be constructed.

The first types correspond to Kochen-Specker theorems. They establish that for certain sets $\mathcal{S}$, it is not possible to consistently define $\{0,1\}$-colorings $f_\lambda$ of $\mathcal{S}$, even before attempting to use them to reproduce the quantum probabilities. This is what we have discussed until now. 

In the second type of proofs, a $\{0,1\}$-coloring of $\mathcal{S}$ is not excluded. But it can be shown that for any such coloring $f_\lambda$ of $\mathcal{S}$, a certain inequality  $\sum_i c_i f_\lambda(|v_i\rangle)\leq c_0$ must necessarily be satisfied, while in the quantum case, it happens that $\sum_i c_i |v_i\rangle\langle v_i|>c_0 \mathbb{I}$. In other words, though it is possible to find a $\{0,1\}$ assignment $f_\lambda(|v_i\rangle)$ to each projector $|v_i\rangle\langle v_i|$ in $\mathcal{S}$ that is compatible with the orthogonality relations among such projectors, any such assignment fails to reproduce some more complex relation of the type $\sum_i c_i |v_i\rangle\langle v_i|>c_0 \mathbb{I}$ satisfied by these projectors. This immediately implies a contradiction with eq.~(\ref{eq:nchv}), since in the quantum case we have for any $|\psi\rangle$, $\sum_i c_i |\langle\psi|v_i\rangle|^2>c_0$, while according to a non-contextual hidden variable model, we would have $\sum_i c_i |\langle\psi|v_i\rangle|^2=\sum_\lambda q_{\psi}(\lambda)\left[\sum_i c_i f_\lambda(|v_i\rangle)\right]\leq \sum_\lambda q_{|\psi\rangle}(\lambda)c_0 \leq c_0$. Such no-go theorems are referred to as ``statistical state-independent" KS arguments and were introduced by Yu and Oh \cite{YO12}.

Finally, for certain sets $\mathcal{S}$, it is possible to find valid $\{0,1\}$-colorings that do not lead to any type of contradictions of the second type above. However, it is not possible to take mixtures of such colorings, as in eq.~(\ref{eq:nchv}), to reproduce the predictions of certain quantum states $|\psi\rangle$. Such no-go theorems are referred to as ``statistical state-dependent" KS arguments and were introduced by Clifton in \cite{Clifton93}.

While we have seen in the previous section how proofs of the KS theorem can be constructed using $01$-gadgets, in this section we show how to use them to build statistical state-independent and state-dependent KS arguments

\subsection{State-independent KS arguments}
In \cite{YO12}, Yu and Oh introduced a set of 13 vectors in $\mathbb{C}^ 3$ that provides a state-independent proof of contextuality, despite not being a KS set. We show how using Theorem~\ref{prop:fin-gadg-const}, it is possible to construct other state-independent proofs of contextuality based on $01$-gadgets.  To do this, we make use of the following lemma.

\begin{lemma}
	\label{lem:d-simplex}
	Let $|u_i \rangle$, for $i =1, \dots, d+1$ be the unit vectors denoting the vertices of a $d$-dimensional simplex embedded in $\mathbb{R}^d$. Then 
	\begin{eqnarray}
	\sum_{i=1}^{d+1} |u_i \rangle \langle u_i | = \frac{d+1}{d} \mathbb{I}.
	\end{eqnarray}
\end{lemma}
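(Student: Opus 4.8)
\textbf{Proof plan for Lemma~\ref{lem:d-simplex}.}

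The plan is to exploit the symmetry of the regular simplex. The key observation is that the operator $A = \sum_{i=1}^{d+1} |u_i\rangle\langle u_i|$ is invariant under the full symmetry group of the simplex, which acts irreducibly on $\mathbb{R}^d$; by Schur's lemma $A$ must then be a scalar multiple of the identity, $A = c\,\mathbb{I}$, and the constant $c$ is fixed by taking the trace.

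In more detail, first I would recall the standard coordinates for a regular $d$-simplex: one convenient realization places the $d+1$ vertices at the points $\tfrac{1}{\sqrt{d}}\bigl(e_j - \tfrac{1}{d+1}\sum_k e_k\bigr)$-type vectors, or more symmetrically embeds $\mathbb{R}^d$ as the hyperplane $\sum_k x_k = 0$ inside $\mathbb{R}^{d+1}$ with the vertices proportional to $(d+1)e_j - \sum_k e_k$. With unit normalization this gives $\langle u_i|u_j\rangle = 1$ for $i=j$ and $\langle u_i|u_j\rangle = -\tfrac1d$ for $i\neq j$ — the defining Gram matrix of the simplex. Second, I would argue that the symmetric group $S_{d+1}$ permuting the vertices is realized by orthogonal transformations $O_\pi$ of $\mathbb{R}^d$ with $O_\pi|u_i\rangle = |u_{\pi(i)}\rangle$, so $O_\pi A O_\pi^\dagger = A$ for all $\pi$. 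Since this representation of $S_{d+1}$ on $\mathbb{R}^d$ (the standard representation) is irreducible over $\mathbb{R}$, Schur's lemma forces $A = c\,\mathbb{I}_d$. Third, taking traces: $\tr A = \sum_{i=1}^{d+1}\langle u_i|u_i\rangle = d+1$ on one side and $\tr(c\,\mathbb{I}_d) = cd$ on the other, so $c = \tfrac{d+1}{d}$, giving the claim.

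Alternatively — and this avoids invoking representation theory — one can give a direct computational proof: work with the explicit unit vectors $|u_j\rangle$ above, compute the matrix elements $\langle e_a|A|e_b\rangle$ of $A$ in the standard basis of the hyperplane, and check that the diagonal entries all equal $\tfrac{d+1}{d}$ and the off-diagonal entries vanish, using $\sum_{i}(u_i)_a = 0$ (the vertices of a centered simplex sum to zero) and the known pairwise overlaps. I expect the main (very mild) obstacle to be purely bookkeeping: fixing one concrete normalized coordinate system for the simplex and carrying the normalization factors through, or, in the symmetry approach, stating cleanly that the standard representation of $S_{d+1}$ is irreducible. Neither step is deep; the symmetry argument is cleanest and is the one I would write up, with the trace computation as the one-line finish.
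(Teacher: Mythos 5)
Your proof is correct, but it takes a genuinely different route from the paper. The paper's argument is purely algebraic and coordinate-free: from $\langle u_i|u_j\rangle=-\tfrac1d$ ($i\neq j$) it first deduces $\sum_i |u_i\rangle=0$, rewrites $O=\sum_i|u_i\rangle\langle u_i|$ as $-\sum_{i\neq j}|u_i\rangle\langle u_j|$, derives the quadratic relation $O^2=\tfrac{d+1}{d}O$, and concludes $O=\tfrac{d+1}{d}\mathbb{I}$ because $O$ is invertible (the vertices span $\mathbb{R}^d$). Your main route instead invokes the symmetry group: $A$ commutes with the orthogonal action of $S_{d+1}$ permuting the vertices, the standard representation is irreducible, so Schur's lemma gives $A=c\,\mathbb{I}$ and the trace fixes $c=\tfrac{d+1}{d}$. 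Both are valid; your approach is conceptually illuminating (it exhibits the simplex vertices as a tight frame by symmetry alone and generalizes to any vector configuration with a transitive, irreducibly acting symmetry group), whereas the paper's proof is more elementary and self-contained, needing only the Gram matrix and no representation theory. One small point to state cleanly if you write up the Schur route: over $\mathbb{R}$ the commutant of an irreducible representation need not be trivial in general, so either note that the standard representation of $S_{d+1}$ is absolutely irreducible, or use that $A$ is symmetric, hence has a real eigenvalue whose eigenspace is group-invariant and therefore all of $\mathbb{R}^d$. Your computational fallback with explicit centered coordinates also works and is closer in spirit to the paper's elementary argument, though the paper avoids coordinates entirely.
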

\begin{proof}
Since $|u_i \rangle$ form the vertices of the $d$-simplex, we have $ \langle u_i | u_j \rangle = -\frac{1}{d}$ for any $i \neq j \in \{1, \dots, d+1\}$. It then follows 
	\begin{eqnarray}
	\left(\sum_{i=1}^{d+1} \langle u_i | \right) \left(\sum_{j=1}^{d+1} | u_i \rangle \right) = (d+1) + d(d+1)\left(-\frac{1}{d} \right) = 0, \nonumber
	\end{eqnarray}
	so that 
	\begin{eqnarray}
	O := \sum_{i=1}^{d+1} |u_i \rangle \langle u_i| = -\sum_{i \neq j = 1}^{d+1} | u_i \rangle \langle u_j|
	\end{eqnarray}
	This then implies that
	\begin{eqnarray}
	O^{2} = O - \frac{1}{d} \sum_{i \neq j =1}^{d+1} |u_i \rangle \langle u_j| = \frac{d+1}{d} O.
	\end{eqnarray}
    Moreover, $O$ is invertible, since $\text{span}(\{|u_i \rangle\}_{i=1}^{d+1}) = \mathbb{R}^d$ so that we obtain $O = \frac{d+1}{d} \mathbb{I}$.
\end{proof}
Now, state-independent KS arguments for $\mathbb{C}^d$ are straightforwardly constructed as follows. For every pair of vectors $|u_i \rangle, |u_j \rangle$ of the $d$-simplex, consider a $01$-gadget $\mathcal{S}_{ij}$ with $|u_i\rangle$, $|u_j\rangle$ the distinguished vertices. Since $|u_i\rangle$ and $|u_j\rangle$ are non-orthogonals, such gadgets exists, as implied by Theorem~\ref{prop:fin-gadg-const}. The resulting set of vectors $\mathcal{S}=\cup_{ij} \mathcal{S}_{ij}$ exhibits state-independent contextuality. Indeed, by the property of the $01$-gadgets, only one of the vectors $|u_i \rangle$ for $i =1, \dots, d+1$ can be assigned the value $1$ in any $\{0,1\}$-coloring of $\mathcal{S}$. It thus follows that
\begin{eqnarray}
\sum_{i=1}^{d+1} f(|u_i \rangle) \leq 1,.
\end{eqnarray}
On the other hand, from Lemma \ref{lem:d-simplex}, every state $|\psi\rangle$ from $\mathbb{C}^d$ achieves the value $\sum_{i=1}^{d+1} |\langle\psi|u_i\rangle|^2=\frac{d+1}{d}>1$.

While we have used the $d+1$ vertices of a $d$-simplex in the construction above, we observe that any set $\{|u_i \rangle\}$ of vectors in $\mathbb{C}^d$ such that $\sum_{i} |\langle \psi | u_i \rangle|^2 > 1$ for all $| \psi \rangle \in \mathbb{C}^d$ can be utilized in the construction, although such a set clearly needs to contain at least $d+1$ vectors. 

\subsection{State-dependent KS arguments}

The relation between state-dependent KS arguments and 01-gadgets is even more direct than in the above construction. Actually, the first state-dependent KS argument introduced by Clifton in \cite{Clifton93} was precisely based on the set of vectors (\ref{eq:Clif-orth-rep}) forming the Clifton gadget $G_\text{gad}$. His argument was as follows. In every non-contextual hidden-variable model attempting to replicate the quantum probabilities associated to the projectors of the Clifton gadget, we should have $|\langle\psi|u_1\rangle|^2+|\langle\psi|u_8\rangle|^2=\sum_\lambda q_\psi(\lambda) \left(f_\lambda(|u_1\rangle)+f_\lambda(|u_8\rangle)\right)\leq 1$, by the gadget property. However, if we take $|\psi\rangle=|u_1\rangle$, we find that according to the quantum predictions $|\langle u_1|u_1\rangle|^2+|\langle u_1|u_8\rangle|^2=1+|\langle u_1|u_8\rangle|^2>1$ since $|\langle u_1|u_8\rangle|^2>0$ as $|u_1\rangle$ and $|u_8\rangle$ are non-orthogonal. Other state-dependent proofs based on inequalities have since been developed, with the smallest involving five vectors \cite{KCBS08}. The first state-independent statistical KS argument was presented in \cite{Cab08} and the proof that any KS set give can be converted in a state-independent statistical KS argument was presented in \cite{BBCP09}.

Obviously, the argument used by Clifton for the particular set of vectors he introduced, immediately carries over to any $01$-gadget. Thus every $01$-gadget serves as a proof of state-dependent contextuality.

Note that it was realized in \cite{CDLP14} that a class of graphs, known as perfect graphs, define a class of graphs that cannot serve as proofs of (even state-dependent) contextuality. That is, for any orthogonal representation $\{|v_j \rangle\} \subset \mathbb{C}^d$ of a perfect graph and for any pure state $|\psi \rangle \in \mathbb{C}^d$, the outcome probabilities $|\langle \psi|v_j\rangle|^2$ admit a non-contextual hidden variable model of the form (\ref{eq:nchv}). Since a non-contextual hidden variable model is not possible for a $01$-gadget, we deduce that no perfect graph is a $01$-gadget. Perfect graphs are a well-known class of graphs which by the strong perfect graph theorem \cite{CRST06} can be characterized as those graphs that do not contain odd cycles and anti-cycles of length greater than three as induced subgraphs. %It was realized in \cite{CDLP14} that the perfect graphs are exactly the graphs that do not allow for contextuality, so that the non-perfect graphs are the basic exclusivity graphs required for quantum correlations. 

Finally, remark that the argument due to Clifton presented above works not only for the state $|\psi\rangle=|u_1\rangle$, but for any state $|\psi \rangle \in \mathbb{C}^3$ which obeys $| \langle \psi | u_1 \rangle|^2 + |\langle \psi | u_8 \rangle|^2 > 1$. More generally, we now present a $01$-gadget which serves to prove state-dependent contextuality for all but a measure zero set of states in $\mathbb{C}^3$. 

This construction is based on the gadget $G$ of Fig.~\ref{fig:gadg-id-vec} with the 43 vector orthogonal representation presented in the proof of Theorem~\ref{prop:fin-gadg-const}. Note that if we take $x=1$ in this representation, then the two distinguished vectors $|u_1\rangle$ and $|u_{42}\rangle$ actually coincide and are both equal to $(1,0,0)$ (i.e., the two distinguished vertices $u_1$ and $u_{42}$ should actually be identified). Therefore in any $\{0,1\}$-coloring $f$ of $G$, $2f(|u_1\rangle)=f(|u_1\rangle)+f(|u_{42}\rangle)\leq 1$, i.e. the vector $|v_1\rangle$ is assigned value $0$. This implies that $G$ witnesses state-dependent contextuality of all states in $\mathbb{C}^3$ but for a measure zero set of states $|\psi \rangle$ that are orthogonal to $|v_1\rangle=(1,0,0)$.

The construction that we just described is based on 42 vectors. It is actually possible to find a slightly smaller construction based on the following 40 vectors:
\begin{eqnarray}
&&|u_1 \rangle = (1,-1,0)^T; \; \; |u_2 \rangle = (1,1,1)^T;\nonumber \\
&&|u_3 \rangle = (1,1,0)^T; \; \; |u_4 \rangle = (1,1,b)^T; \nonumber \\
&&|u_5 \rangle = (-2,1,1)^T; \; \; |u_6 \rangle = (1,-1,3)^T; \nonumber \\
&&|u_7 \rangle = (3,-3,-2)^T; \; \; |u_8 \rangle = (2,0,3)^T;\nonumber \\
&& |u_9 \rangle = (-3,0,2)^T; \;\;|u_{10} \rangle = (-2,2,-3)^T; \nonumber \\
&& |u_{11} \rangle = (3,-3,-4)^T; \; \; |u_{12} \rangle = (4,0,3)^T; \nonumber \\
&&|u_{13} \rangle = (-3,0,4)^T; \; \; |u_{14} \rangle = (-4,4,-3)^T;  \nonumber \\
&&|u_{15} \rangle = (3,-3,-8)^T; \; \; |u_{16} \rangle = (8,0,3)^T; \nonumber \\
&& |u_{17} \rangle = (-3,0,8)^T; \;\;|u_{18} \rangle = (-8,4+\sqrt{7},-3)^T;\nonumber \\
&&|u_{19} \rangle = (0,1,-1)^T; |u_{20} \rangle = (0,1,0)^T; \nonumber \\
&& |u_{21} \rangle = (0,-3+8b,-16-3b)^T; \;\;|u_{22} \rangle = (1,0,0)^T; \nonumber \\
&&|u_{23} \rangle = (1,0,-1)^T; \;\; |u_{24} \rangle = (2-\sqrt{2},0,1)^T; \nonumber \\
&&|u_{25} \rangle = (1,-2,1)^T; \; \; |u_{26} \rangle = (0,1,2)^T; \nonumber \\
&& |u_{27} \rangle = (0,2,-1)^T; \;\; |u_{28} \rangle = (1,-1,-2)^T; \nonumber \\
&& |u_{29} \rangle = (1,-1,1)^T; \; \; |u_{30} \rangle = (0,1,1)^T; \nonumber \\
&&|u_{31} \rangle = (0,1,-1)^T; \;\; |u_{32} \rangle = (-1,1,1)^T; \nonumber \\
&&|u_{33} \rangle = (-1,1,-2)^T; \;\; |u_{34} \rangle = (0,2,1)^T; \nonumber \\
&& |u_{35} \rangle = (0,1,-2)^T; \; \; |u_{36} \rangle = (2,-2,-1)^T; \nonumber \\
&&|u_{37} \rangle = (1,-1,4)^T; \; \;  |u_{38} \rangle = (-2-\sqrt{2},6-\sqrt{2},2)^T;  \nonumber \\
&& |u_{39} \rangle = |u_2 \rangle;\; \; |u_{40} \rangle = |u_3 \rangle; \; \; |u_{41} \rangle = (1,1,-2+\sqrt{2})^T; \nonumber \\
&& |u_{42} \rangle = |u_1 \rangle; |u_{43} \rangle = (0,0,1)^T; \nonumber 
\end{eqnarray}
with $b = \frac{-4+\sqrt{7}}{3}$, and where we have the following identities $|u_{1}\rangle=|u_{42}\rangle$, $|u_2\rangle=|u_{39}\rangle$,  $|u_3\rangle=|u_{40}\rangle$. It can be verified that the graph in Fig.~\ref{fig:gadg-id-vec} where we identify the vertices $u_{1}$ and $u_{42}$, $u_2$ and $u_{39}$,  $u_3$ and $u_{40}$, is the orthogonality graph of these 40 vectors. These 40 vectors thus form a $01$-gadget, where as above the vector $|u_1\rangle=(1,-1,0)$ can only be assigned the value $0$, implying that it can serve as a state-dependent contextuality proof for any vector in $\mathbb{C}^3$ that is not orthogonal to $(1,-1,0)$. We leave it as an open question whether this set of $40$ vectors is the minimal set with this property.

\section{Proofs of the extended Kochen-Specker theorem using $01$-gadgets}\label{sec:ext}
In this section, we consider a stronger variant of the KS theorem due to Pitowsky \cite{Pitowsky} and Hrushovski and Pitowsky \cite{HP03}. While the KS theorem is concerned with $\{0,1\}$-colorings where all projectors (or vectors) in a given set $S$ must be assigned a value in $\{0,1\}$, we consider here more general assignments where any real value in $[0,1]$ is allowed to the members of $S$. Specifically, given a set of vectors $\mathcal{S}=\{|v_1\rangle,\ldots,|v_n\rangle\}\subset\mathbb{C}^d$, we say that $f:\mathcal{S}\rightarrow [0,1]$ is a $[0,1]$-assignment if $f$ satisfies the same rules (\ref{eq:01rule}) as it does for $\{0,1\}$-colorings.
Both $\{0,1\}$-colorings and $[0,1]$-assignments can be interpreted as assigning a probability to the projectors corresponding to each of the elements of $S$. But while the assignment is constrained to be deterministic in the case of $\{0,1\}$-colorings since these probabilities can only take the values $0$ or $1$, the probabilistic assignment may be completely general (hence non-deterministic) for $[0,1]$-assignments. In particular, for any given quantum state $|\psi\rangle$, the Born rule $f(|v_i\rangle)=|\langle \psi|v_i\rangle|^2$ defines a valid $[0,1]$-assignment.

Hrushovski and Pitowsky \cite{HP03}, following earlier work by Pitowsky in \cite{Pitowsky}, proved the following theorem, which they call the ``logical indeterminacy principle".
\begin{theorem}[\cite{HP03}]
	\label{thm:HP03}
	Let $|v_1 \rangle$ and $|v_2 \rangle$ be two non-orthogonal vectors in $\mathbb{C}^d$ with $d \geq 3$. Then there is a finite set of vectors $S \subset \mathbb{C}^d$ with $|v_1 \rangle, |v_2 \rangle \in S$ such that 
	for any $[0,1]$-assignment, it holds that $f(|v_1 \rangle), f(|v_2 \rangle) \in \{0, 1\}$ if and only if $f(|v_1\rangle) = f(|v_2\rangle) = 0$. 
\end{theorem}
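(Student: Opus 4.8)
The converse direction of the stated equivalence is trivial, since $0\in\{0,1\}$; the content is the forward implication, so my plan is to build a finite set $S\ni|v_1\rangle,|v_2\rangle$ such that every $[0,1]$-assignment $f$ of $S$ with $f(|v_1\rangle),f(|v_2\rangle)\in\{0,1\}$ is forced to have $f(|v_1\rangle)=f(|v_2\rangle)=0$. The organizing notion will be an \emph{extended $01$-gadget}: a finite set of vectors with two distinguished non-orthogonal vectors $|a\rangle,|b\rangle$ such that in \emph{every} $[0,1]$-assignment $f$ (not merely every $\{0,1\}$-coloring), $f(|a\rangle)=1$ implies $0<f(|b\rangle)<1$. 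If such gadgets exist for every ordered non-orthogonal pair, the theorem follows immediately: take $S$ to be the union of an extended $01$-gadget for $(|v_1\rangle,|v_2\rangle)$ and one for $(|v_2\rangle,|v_1\rangle)$ (the restriction of a valid $[0,1]$-assignment to a subset is again valid, so each sub-gadget's property applies). Then $f(|v_1\rangle)=1$ would put $f(|v_2\rangle)$ strictly between $0$ and $1$, contradicting $f(|v_2\rangle)\in\{0,1\}$; hence $f(|v_1\rangle)=0$, and symmetrically $f(|v_2\rangle)=0$.

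To construct an extended $01$-gadget for a pair $(|v_1\rangle,|v_2\rangle)$ it suffices to work in $\mathbb{C}^3$: if $d>3$, I would restrict to a three-dimensional subspace containing $|v_1\rangle,|v_2\rangle$ and pad every maximal orthogonal set with a fixed orthonormal basis of the complement, exactly as at the end of the proof of Theorem~\ref{prop:fin-gadg-const}. In $\mathbb{C}^3$, let $|v_\perp\rangle$ be a unit vector orthogonal to $\text{span}(|v_1\rangle,|v_2\rangle)$ and $|v_2^\perp\rangle$ a unit vector in that plane orthogonal to $|v_2\rangle$; since $|v_1\rangle$ and $|v_2\rangle$ span a genuine plane one checks easily that $\langle v_1|v_2^\perp\rangle\neq 0$ (otherwise $|v_2^\perp\rangle$ would be orthogonal to the whole plane while lying in it). By Theorem~\ref{prop:fin-gadg-const} there are $01$-gadgets $\Gamma$ with distinguished pair $(|v_1\rangle,|v_2\rangle)$ and $\Gamma'$ with distinguished pair $(|v_1\rangle,|v_2^\perp\rangle)$, which I take to share no vectors beyond those they are forced to share; set $S_{\text{egad}}=\Gamma\cup\Gamma'\cup\{|v_\perp\rangle\}$, noting that $\{|v_2\rangle,|v_2^\perp\rangle,|v_\perp\rangle\}$ is an orthonormal basis of $\mathbb{C}^3$ contained in $S_{\text{egad}}$.

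The decisive observation is that the gadgets produced by Theorem~\ref{prop:fin-gadg-const} are in fact extended $01$-gadgets in the weak sense that \emph{no $[0,1]$-assignment can give the value $1$ to both distinguished vectors}: the impossibility of $f(u_1)=f(u_{22})=1$ established there follows purely from the two propagation rules ``a vertex at value $1$ forces all its orthogonal partners to $0$'' and ``in a maximum clique with all but one vertex at $0$, the remaining vertex is at $1$'', and both rules are direct consequences of the defining conditions~(\ref{eq:01rule}) of a $[0,1]$-assignment. (One must be careful here: the stronger relation $f(|v_1\rangle)+f(|v_2\rangle)\le 1$ is \emph{not} inherited by $[0,1]$-assignments, so this weak form is exactly what is available, and the auxiliary basis below is what upgrades it.) Granting this, let $f$ be any $[0,1]$-assignment of $S_{\text{egad}}$ with $f(|v_1\rangle)=1$. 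Then $f(|v_\perp\rangle)=0$ since $|v_\perp\rangle\perp|v_1\rangle$; the weak property of $\Gamma$ and of $\Gamma'$ gives $f(|v_2\rangle)<1$ and $f(|v_2^\perp\rangle)<1$; and the basis identity $f(|v_2\rangle)+f(|v_2^\perp\rangle)+f(|v_\perp\rangle)=1$ then yields $f(|v_2\rangle)=1-f(|v_2^\perp\rangle)>0$. Hence $0<f(|v_2\rangle)<1$, which is the extended-gadget property.

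The step I expect to be the crux is the first sentence of the previous paragraph: checking that the already-constructed $01$-gadgets retain their defining property for arbitrary $[0,1]$-assignments, i.e. that the graph-theoretic verification of ``the two distinguished vertices are not both $1$'' rests only on the two propagation rules and involves no genuine case split on a vertex's value (a case split ``$f(w)=0$ or $f(w)=1$'' is legitimate for colorings but not for $[0,1]$-assignments). This is precisely the ``extension'' of the $01$-gadget notion that has to be isolated and justified. Once it is in place, everything else is the short packaging above: the orthonormal basis $\{|v_2\rangle,|v_2^\perp\rangle,|v_\perp\rangle\}$ converts the weak statement ``not both $1$'' into the sharp statement ``the second vector is strictly indefinite'', and two mirror copies of such an extended gadget give the theorem.
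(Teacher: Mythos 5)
Your overall architecture is sound and, in fact, mirrors the paper's: you complete $|v_2\rangle$ to the orthonormal basis $\{|v_2\rangle,|v_2^\perp\rangle,|v_\perp\rangle\}$, attach a gadget to $(|v_1\rangle,|v_2\rangle)$ and one to $(|v_1\rangle,|v_2^\perp\rangle)$, use the basis identity to upgrade ``not both $1$'' into ``strictly between $0$ and $1$'', and then mirror (the paper's $|v_3\rangle,|v_5\rangle,|v_4\rangle$ are your $|v_\perp\rangle$, your $|v_2^\perp\rangle$, and the in-plane vector orthogonal to $|v_1\rangle$ used in your mirror copy). The genuine gap is exactly the step you flag as the crux and leave unjustified: you ground everything on the $01$-gadgets of Theorem~\ref{prop:fin-gadg-const} and assert that their ``not both $1$'' property survives for arbitrary $[0,1]$-assignments because the verification uses only the two propagation rules. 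That is not established anywhere, and for the $43$-vertex gadget (the case $|\langle v_1|v_2\rangle|>1/\sqrt{2}$) the natural verification genuinely degrades under $[0,1]$-assignments: there the property of the whole graph is obtained by combining the two $22$-vertex sub-gadgets through the triangle $\{u_{20},u_{22},u_{43}\}$; in a $\{0,1\}$-coloring the sub-gadget properties force $f(u_{22})=f(u_{20})=0$, and with $f(u_{43})=0$ the triangle cannot sum to $1$, but for a $[0,1]$-assignment the sub-gadget conclusions weaken to $f(u_{22})<1$ and $f(u_{20})<1$, which is perfectly compatible with $f(u_{20})+f(u_{22})+f(u_{43})=1$ (e.g.\ both equal to $1/2$). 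So ``no $\{0,1\}$-coloring gives both distinguished vertices the value $1$'' does not automatically upgrade to the corresponding $[0,1]$ statement, and your proposal contains no argument for the gadgets you actually invoke.

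This is precisely why the paper isolates extended $01$-gadgets as a separate notion and proves Theorem~\ref{prop:101-gadg-const} (via nested Clifton gadgets, where every step is a propagation valid for $[0,1]$-assignments) before proving Theorem~\ref{thm:HP03}. If you replace your appeal to Theorem~\ref{prop:fin-gadg-const} by an appeal to Theorem~\ref{prop:101-gadg-const} --- your pairs $(|v_1\rangle,|v_2\rangle)$, $(|v_1\rangle,|v_2^\perp\rangle)$ and the mirrored pair are distinct and non-orthogonal, so it applies --- then the rest of your argument (including the remark that restrictions of a $[0,1]$-assignment to a sub-gadget remain valid, since the sub-gadget's maximum cliques are still $d$-element orthogonal sets) goes through and essentially coincides with the paper's proof. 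The only cosmetic difference is that you package it as a one-directional ``strict indefiniteness'' lemma applied twice, whereas the paper runs a three-way case analysis on $(f(v_1),f(v_2))\in\{0,1\}^2$ over the single set $G_{v_1,v_2}\cup G_{v_1,v_5}\cup G_{v_2,v_4}\cup\{|v_3\rangle\}$.
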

Thus for any two non-orthogonal vectors $|v_1\rangle$ and $|v_2\rangle$, at least one of the probabilities associated to the  vectors $|v_1\rangle$ or $|v_2\rangle$ must be strictly between zero and one, unless they are both equal to zero. A corollary of this result, observed in \cite{ACCS12, ACS14, ACS14-2} is that if $f(|v_1 \rangle) = 1$ (this should, for instance, necessarily be the case if we attempt to reproduce the quantum probabilities for measurements performed on the state $|\psi\rangle=|v_1\rangle$), then $f(|v_2 \rangle) \neq 0,1$, showing that one can localise the ``value-indefiniteness" of quantum observables that the KS theorem implies. Theorem~\ref{thm:HP03} therefore provides a stronger variant of the KS theorem, and we will refer to it as the \emph{extended KS theorem}. 

The proof of Theorem \ref{thm:HP03} given in \cite{HP03} was obtained as a corollary of Gleason's theorem \cite{Gleason}.
A more explicit constructive proof was given by Abbott, Calude and Svozil \cite{ACCS12, ACS14}, where they also noted that significantly none of the known KS sets serves to prove Theorem~\ref{thm:HP03}. Note that an earlier proof of the extended KS theorem was also given in \cite{Pitowsky}. All these existing proofs of the extended KS theorem involve complicated constructions with no systematic procedure for obtaining the requisite sets of vectors. In this subsection, we will provide a simple systematic method for obtaining in a constructive way these extended KS sets. 

In order to prove the extended KS theorem, we need gadgets of a special kind, which are defined as  usual $01$-gadgets apart from the fact that the condition that the two distinguished vertices cannot both be assigned the value $1$ in any $\{0,1\}$-colorings should also hold for any $[0,1]$-assignments. That is, we simply replace `$\{0,1\}$-coloring' by `$[0,1]$-assignment' and $f(|v_1\rangle)+f(|v_2\rangle)\leq 1$ by $f(|v_1\rangle)+f(|v_2\rangle)<2$ in Definition 1, and similarly for Definition 2. We call such new gadgets `extended $01$-gadgets'.  It is easily verified that the Clifton gadget in Fig. \ref{fig:Clifton} and the $16$-vertex gadget in Fig. \ref{fig:cab-KS-gadget} obey this additional restriction. 

Our first aim will be to construct such extended $01$-gadgets for any two given non-orthogonal vectors $|v_1 \rangle, |v_2 \rangle \in \mathbb{C}^d$ for $d \geq 3$. This is the content of the following Theorem, which generalizes Theorem~\ref{prop:fin-gadg-const}.

\begin{theorem}
	\label{prop:101-gadg-const}
	Let $|v_1 \rangle$ and $|v_2 \rangle$ be any two distinct non-orthogonal vectors in $\mathbb{C}^d$ with $d \geq 3$. Then there exists an extended 01-gadget in dimension $d$ with $|v_1 \rangle$ and $|v_2 \rangle$ being the two distinguished vertices.
\end{theorem}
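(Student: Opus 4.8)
The plan is to mimic the proof of Theorem~\ref{prop:fin-gadg-const}, using the explicit $43$-vertex graph $G$ of Fig.~\ref{fig:gadg-id-vec} together with its orthogonal representations for the two angle regimes $0<|\langle v_1|v_2\rangle|\leq 1/\sqrt 2$ and $1/\sqrt 2<|\langle v_1|v_2\rangle|\leq 1$, but now checking the \emph{stronger} conclusion that no $[0,1]$-assignment $f$ can have $f(|v_1\rangle)=f(|v_2\rangle)=1$ (equivalently $f(|v_1\rangle)+f(|v_2\rangle)<2$). The key observation is that the propagation argument used in Theorem~\ref{prop:fin-gadg-const} --- ``if $f(u_1)=1$ then tracing through the orthogonality cliques forces some adjacent pair to both be $1$'' --- actually only ever uses the following two facts about $f$: (a) $f(|w\rangle)=0$ whenever $|w\rangle$ is orthogonal to some vector assigned value $1$; and (b) $\sum_{|w\rangle\in\mathcal B}f(|w\rangle)=1$ for every orthonormal basis $\mathcal B$ in the set. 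Both (a) and (b) hold for arbitrary $[0,1]$-assignments, not just $\{0,1\}$-colorings. So the first step is to isolate, inside the coloring argument of Theorem~\ref{prop:fin-gadg-const}, the fact that the contradiction is reached purely by forced-zero propagation of this kind.

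Concretely, I would argue as follows. Assume for contradiction $f(|v_1\rangle)=f(|v_2\rangle)=1$ for some $[0,1]$-assignment $f$ on the vertex set of $G$ (or of $G_{\text{ind}}$ in case (i)). From $f(|u_1\rangle)=1$, rule (a) forces $f$ to vanish on every neighbour of $u_1$; within each maximal clique containing $u_1$, rule (b) then forces the third basis vector to be assigned $1$; iterate. The structure of Fig.~\ref{fig:gadg-id-vec} is precisely engineered so that this chain of forced $1$'s and $0$'s, propagating from $u_1$ on one side and from $u_{22}$ (resp.\ $u_{42}$) on the other, meets at a clique where two vertices are forced to have value $1$ simultaneously --- but they lie together in a clique (are mutually orthogonal), so rule (b) gives $1\ge f(\text{vertex}_a)+f(\text{vertex}_b)=2$, a contradiction. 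Since every inference used only (a) and (b), which are valid for $[0,1]$-assignments, the same $43$-vertex graph with the same orthogonal representations is an \emph{extended} $01$-gadget. The extension to $\mathbb C^d$ with $d>3$ is verbatim as in Theorem~\ref{prop:fin-gadg-const}: embed and pad with $(0,\dots,0,1,0,\dots,0)^T$ vectors, noting that padding adds only orthogonality relations and the new ``padding basis'' elements, so rules (a)--(b) transfer unchanged.

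The main obstacle --- and the step I would spend the most care on --- is verifying that the forced-value propagation in $G$ really does terminate in a \emph{clique} containing two forced-$1$ vertices, rather than merely in two non-adjacent forced-$1$ vertices. In the $\{0,1\}$ setting of Theorem~\ref{prop:fin-gadg-const} it was enough that $u_1$ and $u_{22}$ are non-adjacent and both forced to $1$: two non-orthogonal vectors both getting value $1$ is already illegal for a $\{0,1\}$-coloring because the first rule of (\ref{eq:01rule}) applied to a two-element set would need them orthogonal --- wait, that is not quite the mechanism either; the actual mechanism in that proof is that forcing $f(u_1)=f(u_{22})=1$ propagates to a genuine orthogonal pair both receiving $1$. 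For $[0,1]$-assignments this is still the required endpoint, so I must trace the specific cliques of Fig.~\ref{fig:gadg-id-vec} and confirm that the contradiction clique is reached using only rules (a)--(b); this is a finite but somewhat tedious graph chase, and getting the chain exactly right (especially across the ``seam'' vertices $u_{20},u_{22},u_{43}$ joining $G_{\text{ind}}$ and $G'_{\text{ind}}$ in case (ii)) is where an error could hide. A clean way to organize it is to first prove a small lemma: \emph{in any $[0,1]$-assignment of $G_{\text{ind}}$, $f(|u_1\rangle)=1$ implies $f(|u_{22}\rangle)<1$, and moreover $f(|u_1\rangle)=f(|u_{22}\rangle)=1$ forces a pair of mutually orthogonal vectors each to value $1$}, and then assemble the two halves exactly as in Theorem~\ref{prop:fin-gadg-const}. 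Finally I would record the remark that, since the Clifton gadget and the $16$-vertex Cabello subgadget also reach their contradiction by pure forced-zero propagation, they are likewise extended $01$-gadgets, which is consistent with the claim already made in the text preceding this theorem.
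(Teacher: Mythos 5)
Your plan is not the paper's proof, and more importantly it rests on a claim you never establish. The paper does \emph{not} reuse the $43$-vertex graph of Theorem~\ref{prop:fin-gadg-const}; it builds a new family of extended gadgets by \emph{nesting} Clifton gadgets: the inner edge $(u_4,u_5)$ of $G_{\text{Clif}}$ is replaced by a fresh copy of $G_{\text{Clif}}$ identified at its distinguished vertices, so that (for any $[0,1]$-assignment) giving value $1$ to both outer distinguished vertices forces value $1$ on both inner ones, and the contradiction follows by induction down the nest. An explicit parameterized orthogonal representation is then given at each level, and an optimization plus a short induction shows the outer overlap at nesting depth $k$ can be made any value in $[0,\tfrac{k}{k+2}]$, hence any value in $[0,1)$ for $k$ large; padding handles $d>3$. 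Nothing in the paper asserts that the $43$-vertex graph of Fig.~\ref{fig:gadg-id-vec} is an extended $01$-gadget — indeed the text only certifies the Clifton and the $16$-vertex Cabello gadgets as extended, and the existence of a separate construction for Theorem~\ref{prop:101-gadg-const} is itself a hint that the Theorem~\ref{prop:fin-gadg-const} graphs were not designed with the $[0,1]$ property in mind.

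The genuine gap in your proposal is precisely the step you flag and then defer: you assert as the ``key observation'' that the $\{0,1\}$-coloring contradiction for $G_{\text{ind}}$ and $G$ uses only the two $[0,1]$-valid rules (orthogonality to a value-$1$ vector forces $0$; a maximum clique with all but one member at $0$ forces the last to $1$), but you never trace the argument to confirm this, and the transfer is not automatic. $\{0,1\}$-coloring arguments routinely use inferences with no $[0,1]$ analogue, most notably case splits of the form ``$f(v)$ is either $0$ or $1$'' or the disjunctive use of ``exactly one vertex of each maximum clique is $1$''; a fractional assignment (e.g.\ three values $\tfrac12,\tfrac12,0$ on a basis) defeats such reasoning. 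Your own mid-proof hesitation about what the mechanism of Theorem~\ref{prop:fin-gadg-const} actually is shows the verification has not been done, so what you have is a conditional plan whose central hypothesis — that the specific clique chase through Fig.~\ref{fig:gadg-id-vec}, including the seam vertices $u_{20},u_{22},u_{43}$ in case (ii), terminates in an orthogonal pair forced to $1$ using only forced-value propagation — is left unproven. Either carry out that finite check explicitly for both case (i) and case (ii) representations, or adopt the paper's route, where the needed $[0,1]$-validity is built in by construction because every inference in the nested-Clifton argument is a forced-zero/forced-one propagation.
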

\begin{proof}
	We begin with the construction for $d=3$ and generalize it to higher dimensions naturally. The construction is an iterative procedure based on the Clifton gadget $G_{\text{Clif}}$ given in Fig. \ref{fig:Clifton}.
	
Firstly, as stated previously, it is readily seen that $G_{\text{Clif}}$ is actually an extended $01$-gadget with $u_1, u_8$ the two distinguished vertices, i.e., any $[0,1]$-assignment $f : V(G_{\text{Clif}}) \rightarrow [0,1]$ cannot be such that $f(u_1)=f(u_8)=1$. 
Further, it is known that the $\mathbb{R}^3$ realization of $G_{\text{Clif}}$ given by (\ref{eq:Clif-orth-rep})  achieves the (minimal possible) separation of $\theta_1 = \arccos{|\langle u_1 | u_8 \rangle|} = \arccos{1/3}$ between the two end vertices \cite{Stanford}. 
%In fact, as can be seen with a little algebra, this separation is minimal for the Clifton gadget. 

\begin{figure}
	\centerline{\includegraphics[scale=0.23]{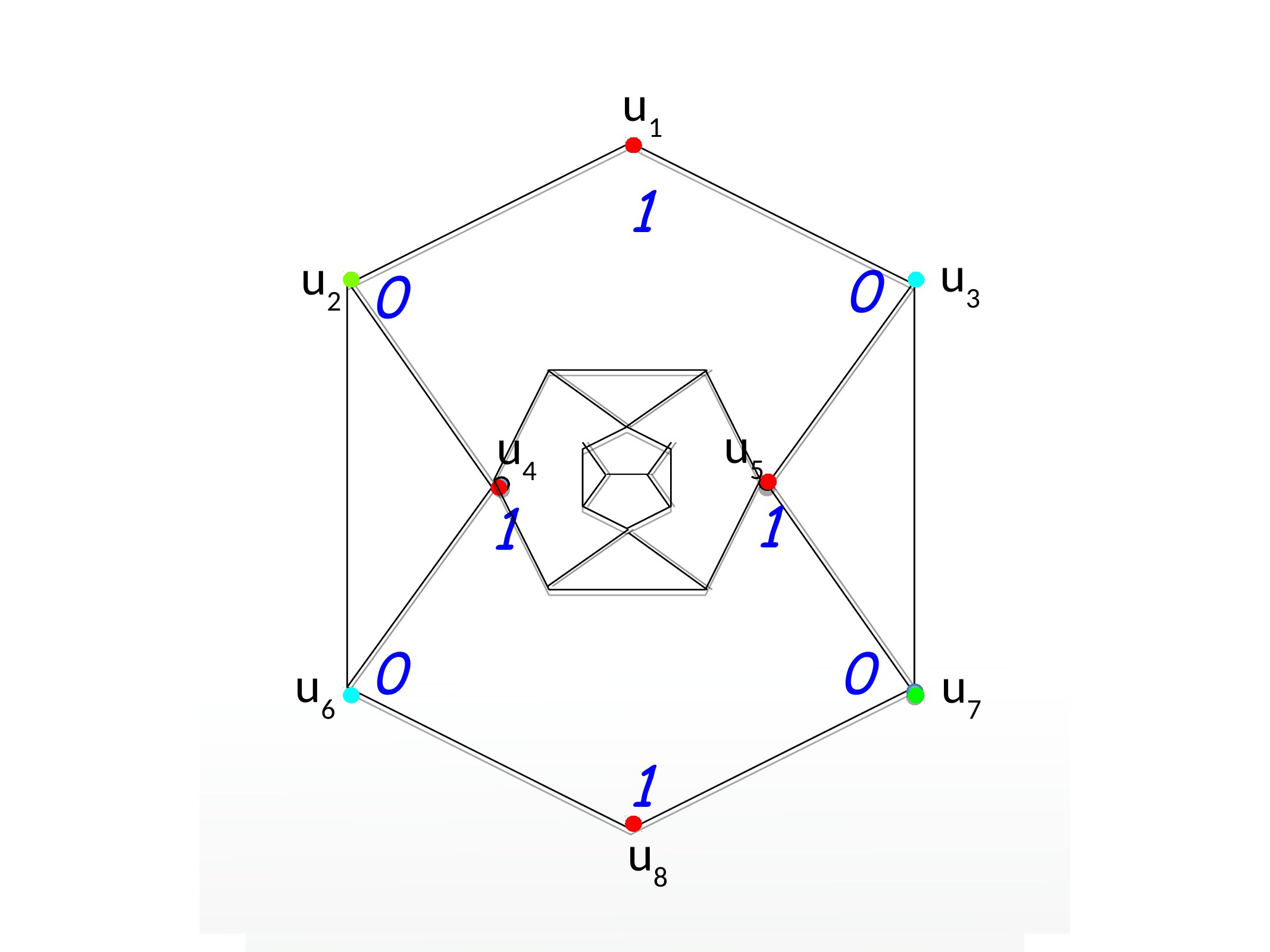}}
	\caption{An iterative construction of an extended $01$-gadget for which the two distinguished vertices $u_1$ and $u_8$ are such that in the limit of large number of iterations $k$, $ |\langle u_1^{(k)} | u_8^{(k)}\rangle|\in [0,1[$.}
	\label{fig:Gad-construct}
\end{figure} 

We now describe a nesting procedure that at each step decreases the angle between the vectors corresponding to the two outer vertices. The procedure works as follows. Replace the edge $(u_4, u_5)$ in $G_{\text{Clif}}$ by $G'_{\text{Clif}}$, a copy of $G_{\text{Clif}}$ where we identify $u'_1=u_4$ and $u'_8=u_5$. The new graph thus obtained has $14$ vertices and $21$ edges. The operation has the property that in any $[0,1]$-assignment $f$, an assignment of value $1$ to the two outer vertices of the new graph (i.e. $u_1,u_8$) leads to a similar assignment to the two outer vertices of the inner copy of $G_{\text{Clif}}$ (i.e. $u'_1,u'_8$) thereby giving rise to a contradiction. In other words, the newly constructed graph is once again an extended $01$-gadget.  This procedure can be repeated an arbitrary number of times, as illustrated in Fig. \ref{fig:Gad-construct}, leading to an extended $01$-gadget formed from $k$ nested Clifford graphs $G_{\text{Clif}}^1,G_{\text{Clif}}^2,G_{\text{Clif}}^2,\ldots,G_{\text{Clif}}^k$ where $G_{\text{Clif}}^1$ corresponds to the most inner graph and $G_{\text{Clif}}^k$ to the most outer graph.
We now show that the total graph at the $k$-th iteration is an orthogonality graph where the overlap $|\langle u_1^{(k)} | u_8^{(k)} \rangle |$ between the two outer vertices $u_1^k,u_8^k$ can be chosen to take any value in $[0,\frac{k}{k+2}]$, thus spanning any possible value in $[0,1[$ for $k$ sufficiently large. 
Setting $|v_1 \rangle = |u_1^{(k)} \rangle$ and $|v_2 \rangle = |u_8^{(k)} \rangle$ with $k$ depending on the overlap of the given vectors $| \langle v_1 | v_2 \rangle|$, then gives the required gadget and proves the Theorem.

Suppose that at the $k$-th step of the iteration, the vectors representing the two outer vertices of the ``inner" gadget from the $k-1$-th step are 
\begin{eqnarray}
&&|u_4^{(k)} \rangle = |u_1^{(k-1)} \rangle = (1,0,0), \nonumber \\
&&|u_5^{(k)} \rangle = |u_8^{(k-1)} \rangle = \frac{1}{\sqrt{1+x_k^2}}(x_k,1,0),
\end{eqnarray} 
without loss of generality, so that the overlap between these vectors is $| \langle u_4^{(k)} | u_5^{(k)} \rangle| = \frac{x_k}{\sqrt{1+x_k^2}}$, where for simplicity of the construction we take $x_k \in \mathbb{R}^+_0$. The remaining vectors then in general have the following (non-normalized) orthogonal representation in $\mathbb{R}^3$
\begin{eqnarray}
&&|u_8^{(k)} \rangle = (a_k,b_k,c_k), \; \; |u_6^{(k)} \rangle = (0,-c_k,b_k), \nonumber \\
&& |u_7^{(k)} \rangle = (c_k,-c_k x_k, -a_k + b_k x_k), \; \; |u_2^{(k)} \rangle = (0,b_k,c_k), \; \;  \nonumber \\
&&|u_3^{(k)} \rangle = (-a_k + b_k x_k, a_k x_k - b_k x_k^2, -c_k -c_k x_k^2), \nonumber \\
&& |u_1 \rangle = (-b_k c_k- a_k c_k x_k, - a_k c_k + b_k c_k x_k,a_k b_k - b_k^2 x_k), \nonumber \\
\end{eqnarray}
with $a_k,b_k,c_k \in \mathbb{R}$. This gives an overlap of
\begin{widetext}
\begin{eqnarray}
\label{eq:overlap}
| \langle u_1^{(k)} | u_8^{(k)} \rangle | = \frac{|-a_k c_k (b_k + a_k x_k)|}{\sqrt{(a_k^2 + b_k^2 + c_k^2)(c_k^2(b_k + a_k x_k)^2 + b_k^2(a_k - b_k x_k)^2 + (a_k c_k - b_k c_k x_k)^2)}}.
\end{eqnarray}
\end{widetext}
A direct optimization of this expression with respect to the parameters $a_k,b_k,c_k$ gives the choice $b_k=1$, $c_k=1$, $a_k= x_k + \sqrt{1+x_k^2}$. So that the overlap between the two outer vertices at the $k$-th step of the iteration is given by 
\begin{eqnarray}
\label{eq:iter-koverlap}
| \langle u_1^{(k)} | u_8^{(k)} \rangle | = \frac{1}{3+4 x_k(x_k-\sqrt{1+x_k^2})} =: \frac{x_{k+1}}{\sqrt{1+x_{k+1}^2}}.
\end{eqnarray}
With the initial overlap for $k=1$ of $1/3$ and corresponding initial $x$ values of $x_1 = 0$ and $x_2=\frac{1}{2 \sqrt{2}}$, we can now evaluate the expression for the overlap for any $k>1$. We find that the overlap at the $k$-th step is $\frac{k}{k+2}$. This is readily seen by an inductive argument. The base claim is clear, suppose that at the $k$-th step the overlap is given by $\frac{x_{k+1}}{\sqrt{1+x_{k+1}^2}} = \frac{k}{k+2}$, i.e., $x_{k+1} = \frac{k}{2\sqrt{k+1}}$. Substituting in Eq.\ref{eq:iter-koverlap}, we obtain $\frac{x_{k+2}}{\sqrt{1+x_{k+2}^2}} = \frac{k+1}{k+3}=\frac{(k+1)}{(k+1)+2}$. Moreover, we see that choosing $b_k=1, c_k=1$, the overlap expression (\ref{eq:overlap}) is a continuous function of $a_k$ for any fixed $x_k$ with the minimum value of $0$ achieved at $a_k = 0$. Thus, every intermediate overlap in $[0,\frac{k}{k+2}]$ between the two outer vectors is also achievable by appropriate choice of $a_k$ for the fixed value of $x_k, b_k, c_k$. This completes the construction of the gadget for $\mathbb{C}^3$ (possibly by taking its faithful version in the graph representation).

Now, one may simply consider the same set of vectors as being embedded in any $\mathbb{C}^d$ (with additional vectors$(0,0,0,1,0,\dots,0)^T$, $(0,0,0,0,1,0,\dots,0)^T$ etc.) to construct a gadget in this dimension.  
%The proof in general dimensions is seen by a lifting argument as follows.  
\end{proof}

\begin{figure}
	\centerline{\includegraphics[scale=0.38]{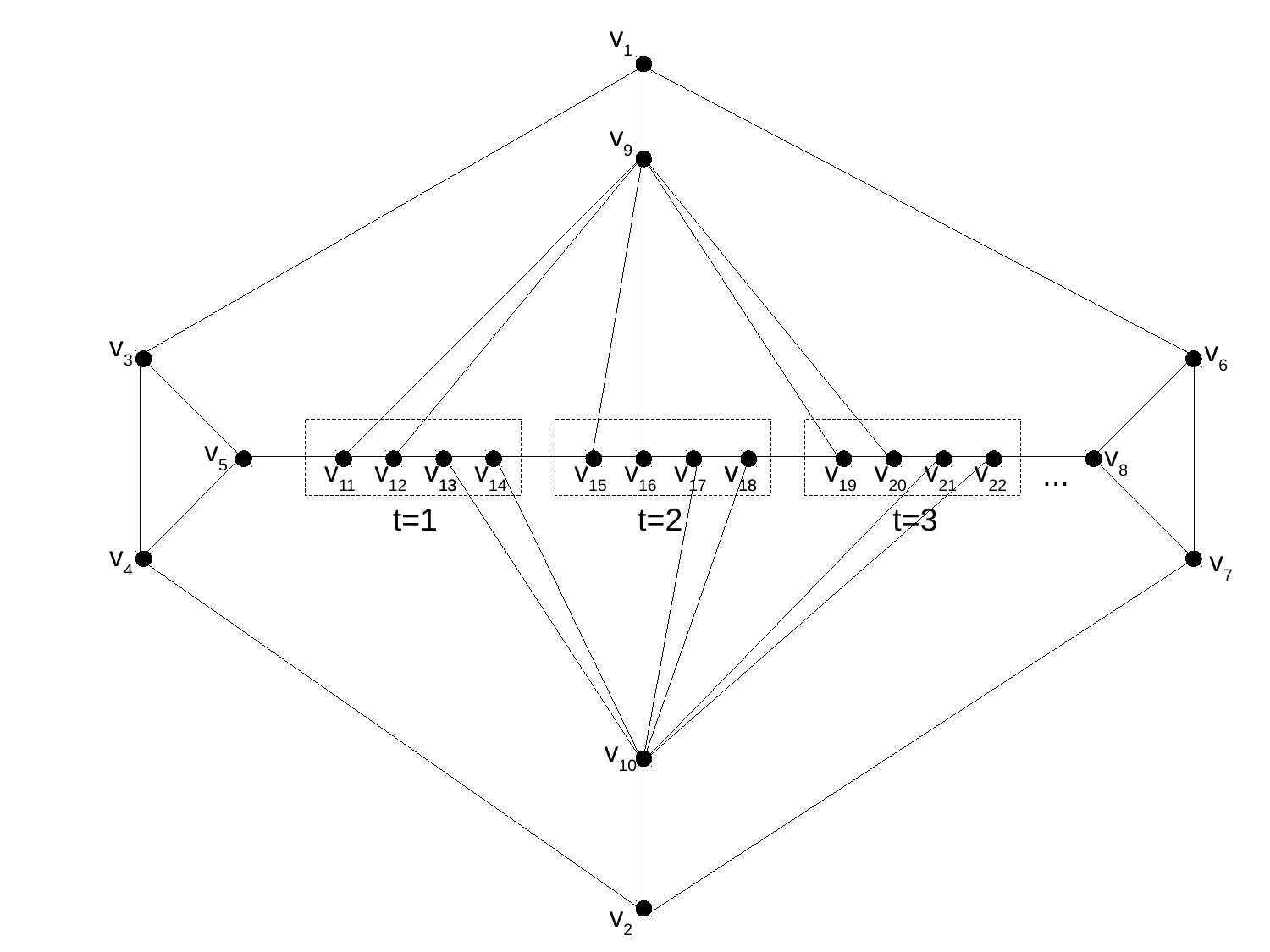}}
	\caption{An alternative construction of an extended $01$-gadget for which the two distinguished vertices $v_1$ and $v_2$ are such that in the limit of large number $t$ of the repeating unit of four vectors, $| \langle v_1 | v_2 \rangle|$ can take any value in $[0,1[$.}
	\label{fig:ext-gad}
\end{figure} 

In fact, the construction above is not unique. We give an alternative set of vectors that also serves to prove Theorem~\ref{prop:101-gadg-const}. The construction is shown in Fig. \ref{fig:ext-gad}. Suppose we are given two distinct non-orthogonal vectors $|v_1 \rangle = (1,0,0)^T$, $|v_2 \rangle = (x,\sqrt{1-x^2},0)^T$, with $0 < x < 1$. We begin by adding the following set of vectors with a parameter $y \in \mathbb{R}$:
	\begin{eqnarray}
&&|v_3 \rangle = (0,x,-\sqrt{1-x^2})^T; \nonumber \\
&&|v_4 \rangle = (-(1-x^2),x \sqrt{1-x^2},x^2)^T; \nonumber \\
&&|v_5 \rangle = (x,(1-x^2)\sqrt{1-x^2},x(1-x^2))^T; \nonumber \\
&&|v_6 \rangle = (0,y,\sqrt{1-y^2})^T; \nonumber \\
&&|v_7 \rangle = (-\sqrt{(1-x^2)(1-y^2)},x\sqrt{1-y^2}, x y)^T; \nonumber \\
&&|v_8 \rangle = (x,(1-y^2)\sqrt{1-x^2},y\sqrt{(1-x^2)(1-y^2)})^T; \nonumber \\
&&|v_9 \rangle = (0,1,0)^T; \; \; |v_{10} \rangle = (-\sqrt{1-x^2},x,0)^T.
\end{eqnarray}    
The remaining vectors are obtained using a repeating unit consisting of four vectors:
\begin{eqnarray}
&&|v_{7 + 4t} \rangle = (-(1-x^2),0,x^{2(t-1)})^T; \nonumber \\
&&|v_{8 + 4t} \rangle = (x^{2(t-1)},0,1-x^2)^T; \nonumber \\
&&|v_{9+4t} \rangle = (-x(1-x^2),-(1-x^2)\sqrt{1-x^2},x^{2t-1})^T; \nonumber \\
&&|v_{10+4t} \rangle = (x^{2t},x^{2t-1}\sqrt{1-x^2},1-x^2)^T; 
\end{eqnarray}
repeated $t$ times for an integer $t \geq 1$ depending on $x$. Choosing the parameter $y$ as 
\begin{eqnarray}
y = \sqrt{\frac{(1-x^2)^2+2 x^{4t-2} - \sqrt{(1-x^2)((1-x^2)^3 - 4 x^{4t})}}{2(1-x^2)(1-x^2+x^{4t-2})}}, \nonumber
\end{eqnarray}
we find that $y \in \mathbb{R}$, for $t$ satisfying $(1-x^2)^3 \geq 4 x^{4t}$. We see that as $t$ increases this inequality can be satisfied for larger values of $x$, and for any $0 < x < 1$ as $t \rightarrow \infty$. From the orthogonality graph of this set of vectors $S$ shown in Fig. \ref{fig:ext-gad}, it is clear that there cannot be any assignement $f: S \rightarrow [0,1]$ such that $f(|v_1 \rangle)=f(|v_2\rangle)=1$, giving an extended $01$-gadget.

While the construction in Theorem~\ref{prop:101-gadg-const} and that in the previous paragraph work for any two distinct vectors, given two such vectors it is of great interest to find the minimal extended $01$-gadget with these vectors as the distinguished vertices. While this question is the foundational analog for extended KS systems of the question of finding minimal KS sets, it is also of practical interest in obtaining Hardy paradoxes with optimal values of the non-zero probability, and extracting randomness from the gadgets \cite{R17}. 

We now show how the extended $01$-gadgets can be used to construct proofs of the extended KS Theorem~\ref{thm:HP03}.

\begin{proof}(Theorem \ref{thm:HP03})
We present the construction for $d=3$, the proof for higher dimensions will follow in an analogous fashion. The idea is encapsulated by Fig. \ref{fig:Ext-KS}. Suppose we are given two distinct non-orthogonal vectors $|v_1 \rangle$ and $|v_2 \rangle$ in $\mathbb{C}^d$. We begin by constructing an appropriate extended $01$-gadget $G_{v_1,v_2}$, depending on $| \langle v_1 | v_2 \rangle|$, with the corresponding $v_1, v_2$ being the distinguished vertices. \ 

Let $|v_3 \rangle = |v_1 \rangle \times |v_2 \rangle$ denote the vector orthogonal to the plane $\text{span}(|v_1 \rangle, |v_2\rangle)$ spanned by $|v_1 \rangle$ and $|v_2 \rangle$, where $\times$ denotes the cross product of the vectors. Let $|v_4 \rangle$ be the vector in the plane $\text{span}(|v_1 \rangle, |v_2\rangle)$ orthogonal to $|v_1 \rangle$, and $|v_5 \rangle$ denote the vector in this plane orthogonal to $|v_2 \rangle$, so that $\{|v_1 \rangle, |v_3 \rangle, |v_4 \rangle\}$, $\{|v_2 \rangle, |v_3 \rangle, |v_5 \rangle\}$ form orthogonal bases in $\mathbb{C}^3$. We construct appropriate extended $01$-gadgets $G_{v_1, v_5}$ and $G_{v_2, v_4}$ depending on $| \langle v_1 | v_5 \rangle|$ and $| \langle v_2 | v_4 \rangle|$. In $G_{v_1, v_5}$ the vertices $v_1, v_5$ corresponding to the vectors $|v_1 \rangle, |v_5 \rangle$ play the role of the distinguished vertices and similarly in $G_{v_2, v_4}$. Let $G_{\text{Pit}}$ denote the orthogonality graph of the entire set of vectors ${G_{v_1, v_2}} \bigcup {G_{v_1, v_5}} \bigcup {G_{v_2, v_4}} \bigcup |v_3 \rangle$. 

We have that in any assignment $f: V(G_{\text{Pit}}) \rightarrow [0,1]$ for which $f(v_1), f(v_2) \in \{0,1\}$, if $f(v_1) = 1, f(v_2) = 1$, then we obtain a contradiction by the property of the extended $01$-gadget $G_{v_1,v_2}$. On the other hand, if $f(v_1) = 1, f(v_2) = 0$, then since $|v_1 \rangle \perp |v_3 \rangle$ we have $f(v_3) = 0$, and by the property of the extended $01$-gadget $G_{v_1, v_5}$ we have $f(v_5) = 0$. This gives a contradiction since $v_2, v_3, v_5$ form a maximum clique. Similarly, if $f(v_1) = 0, f(v_2) = 1$, then since $|v_2 \rangle \perp |v_3 \rangle$ we have $f(v_3) = 0$, and by the property of the extended $01$-gadget $G_{v_2, v_4}$ we have $f(v_4) = 0$. This also gives a contradiction since $v_1, v_3, v_4$ form a maximum clique. Therefore, we have any assignment $f: V(G_{\text{Pit}}) \rightarrow [0,1]$ which obeys $f(v_1), f(v_2) \in \{0,1\}$ also must necessarily obey $f(v_1) = f(v_2) = 0$. This completes the proof.

\begin{figure}
	\centerline{\includegraphics[scale=0.32]{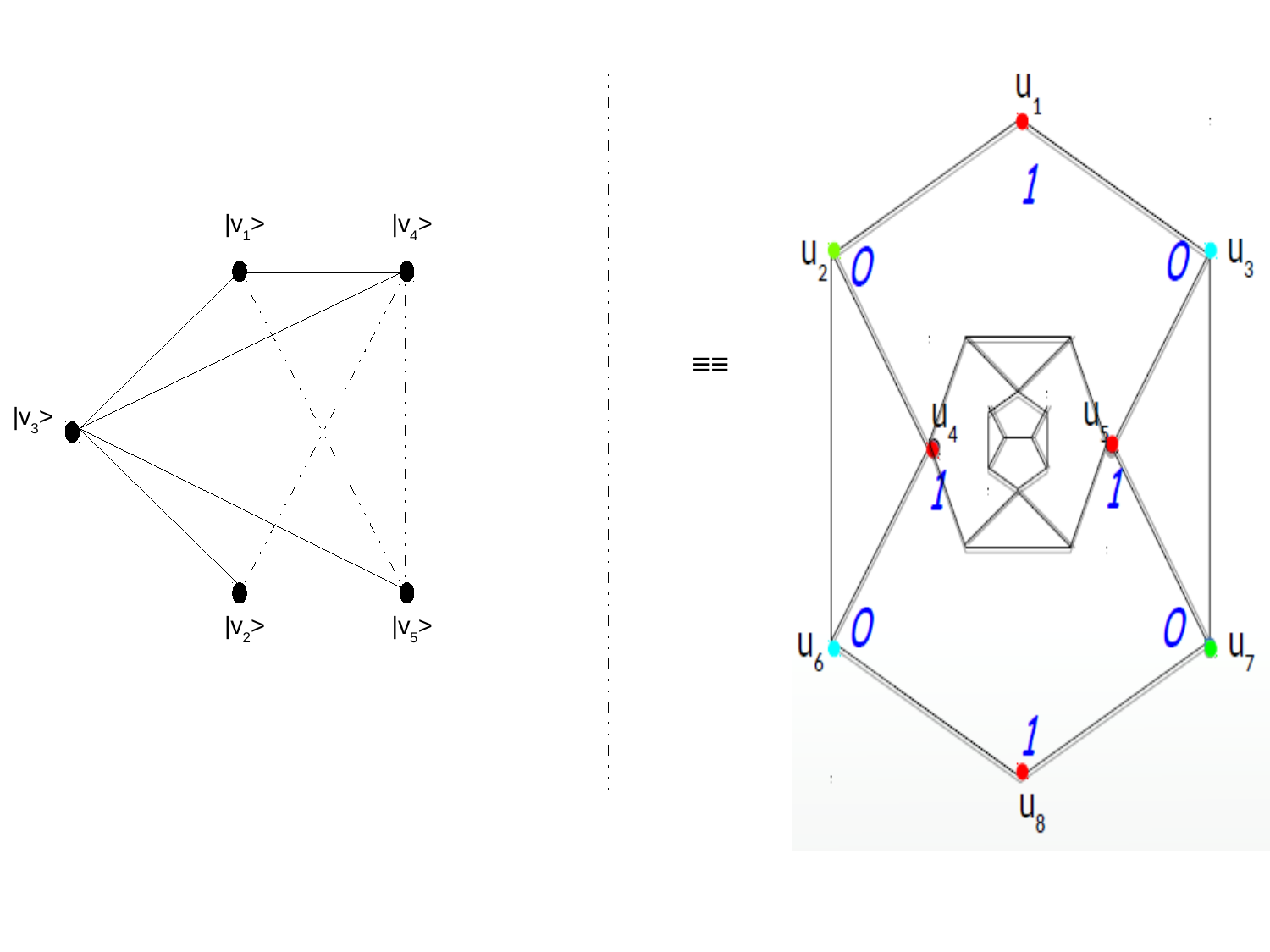}}
	\caption{A constructive proof of the extended Kochen-Specker theorem \ref{thm:HP03} using the extended $01$-gadgets. Given vectors $|v_1 \rangle, |v_2 \rangle \in \mathbb{C}^d$, we obtain vector $|v_3 \rangle \perp \text{span}(|v_1 \rangle, |v_2 \rangle)$ and two other vectors $|v_4 \rangle, |v_5 \rangle$ in the plane $\text{span}(|v_1 \rangle, |v_2 \rangle)$ with the orthogonality relations indicated in the left figure. Dashed edges between two vertices indicate an extended $01$-gadget from Theorem \ref{prop:101-gadg-const} with the two vertices being distinguished.}
	\label{fig:Ext-KS}
\end{figure} 
\end{proof}
\color{black}
%The construction works via the basic unit $G_0$ shown in Fig. \ref{}. As shown in the Figure, the unit $G_0$ is obtained by connecting the vectors $|v_1 \rangle, |v_2 \rangle, |v_3 \rangle$ from a basis in $\mathbb{C}^3$ to a single vector $|v_0 \rangle$ via three extended $01$-gadgets from Prop. \ref{prop:101-gadg-const}. In the Figure, we show the construction for the basis vectors $(1,0,0)^T, (0,1,0)^T, (0,0,1)^T$ each connected via an extended $01$-gadget to $(1,1,1)^T$. 

\subsection{Discussion} Intuitively, with respect to any $\{0,1\}$ coloring, a $01$-gadget behaves like a ''virtual edge" between its two special vertices, with this edge also obeying the rule that at most one of its incident vertices may be assigned the color $1$. Moreover, in Theorem \ref{prop:fin-gadg-const} we have shown that $01$-gadgets may be constructed with any two non-orthogonal vectors as the special vertices. Starting from a given set of vectors, this allows us to connect any two non-orthogonal vectors by an appropriate $01$-gadget, which imposes additional constraints on the $\{0,1\}$-colorings of the resulting set of vectors. By appropriately adding such virtual edges, we are eventually able to obtain a set of vectors that gives a Kochen-Specker contradiction. Moreover, it turns out that the statistical proofs of the Kochen-Specker theorem can also be interpreted in the same manner. For instance, the famous Yu-Oh graph of \cite{YO12} can be interpreted as six $01$-gadgets connecting the vectors $(1,1,1)^T, (1,1,-1)^T, (1,-1,1)^T$ and $(-1,1,1)^T$. These four vectors thus form a ''virtual clique", with the property that in any $\{0,1\}$-coloring of the Yu-Oh set, the sum of the values attributed to these four vectors cannot exceed one. On the other hand, any quantum state has overlap with these four vectors summing to $4/3$ providing a statistical contradiction. Similar considerations also apply to the extended Kochen-Specker theorem of Pitowsky by means of extended $01$-gadgets.

\section{Computational complexity of $\{0,1\}$-colorings}\label{sec:compl}
\label{sec:Comp-complexity}

Clearly, complete graphs of size $d+1$ cannot be faithfully realized in $\mathbb{C}^d$, but there also exist certain other graphs that cannot be faithfully realized in $\mathbb{C}^d$. The well-known example is the four-cycle (square) graph in $\mathbb{C}^3$, this can be seen by the following simple argument. Suppose a pair of vertices in opposite corners of the square is assigned without loss of generality the vectors $| 0 \rangle$ and $\alpha |0 \rangle + \beta | 1 \rangle$, with $\alpha, \beta \in \mathbb{C}$. Since these vectors span a plane and the remaining pair of vertices are both required to be orthogonal to this plane, these latter vectors are both equal up to a phase to $| 2 \rangle$, contradicting the requirement of faithfulness. There exist analogous graphs that are not faithfully realizable in higher dimensions, some of which are shown in Fig. \ref{fig:forbidden-graphs}. 
%\tred{Move all this to caption:}
%In the Fig. \ref{fig:forbidden-graphs}, graph (i) is the square graph which is not faithfully realizable in $\mathbb{C}^3$ as explained above. Graph (iv) is the graph from \cite{C11} which was verified to be not faithfully realizable in dimension three despite being square-free. Graph (ii) is not faithfully realizable in $\mathbb{C}^4$, which can be seen as arising from the fact that the induced square subgraph is not faithfully realizable in $\mathbb{C}^3$ and the additional vertex being adjacent to all vertices of the square, the vector corresponding to this vertex occupies an orthogonal subspace to that spanned by the square. Graph (iii) is similarly not realizable in $\mathbb{C}^5$ this time owing to the presence of two vertices (which themselves cannot be represented by identical vectors) that are adjacent to all the vertices of the square. It is clear that the construction can be extended to higher dimensions. 
\begin{figure}
	\centerline{\includegraphics[scale=0.32]{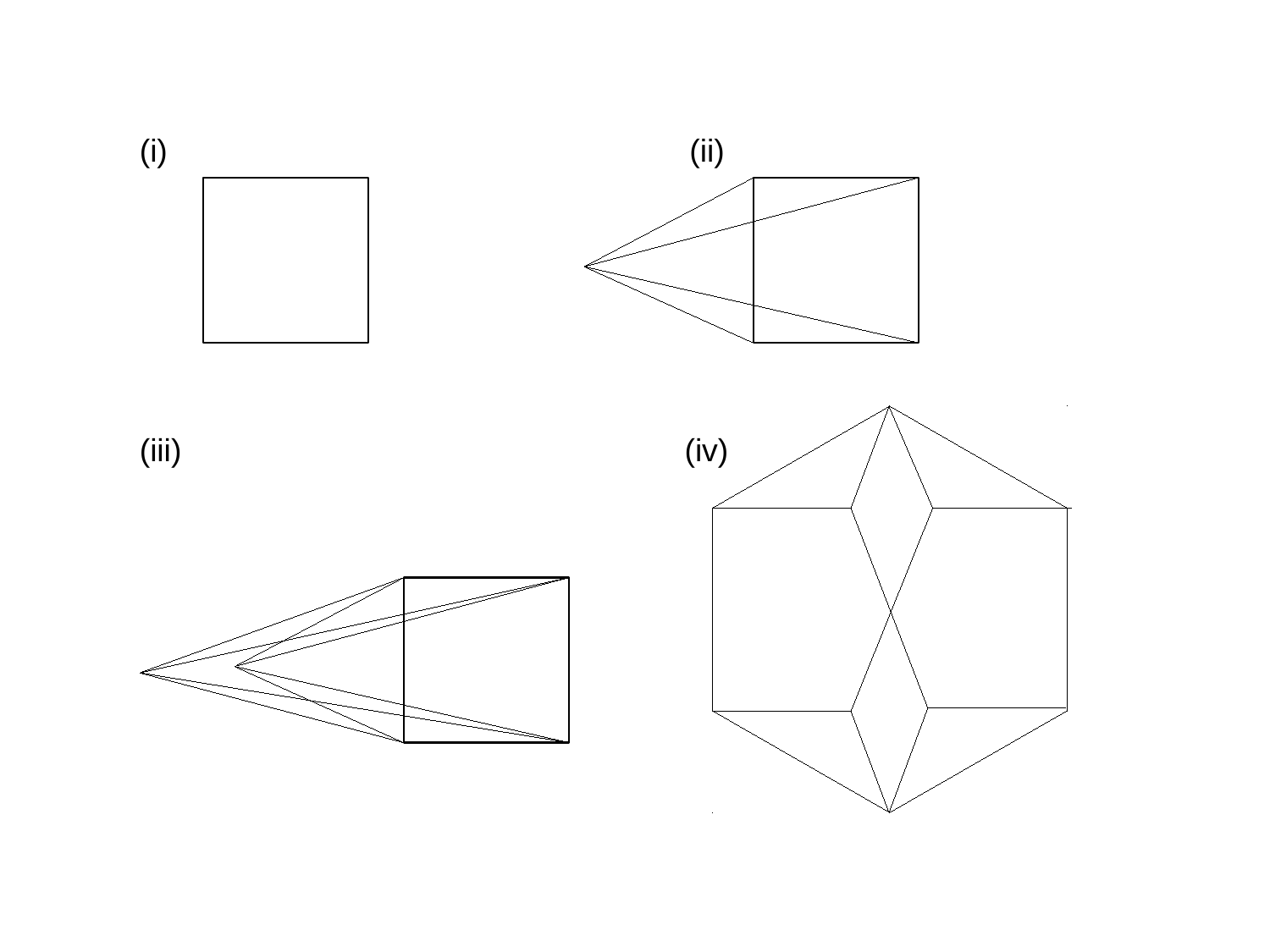}}
	\caption{Examples of forbidden subgraphs in dimensions $3,4$ and $5$. Graph (i) is the square graph which is not faithfully realizable in $\mathbb{C}^3$ as explained in the text. Graph (iv) is the graph from \cite{C11} which was verified to be not faithfully realizable in dimension three despite being square-free. Graph (ii) is not faithfully realizable in $\mathbb{C}^4$, which can be seen as arising from the fact that the induced square subgraph is not faithfully realizable in $\mathbb{C}^3$ and the additional vertex being adjacent to all vertices of the square, the vector corresponding to this vertex occupies an orthogonal subspace to that spanned by the square. Graph (iii) is similarly not realizable in $\mathbb{C}^5$ this time owing to the presence of two vertices (which themselves cannot be represented by identical vectors) that are adjacent to all the vertices of the square. It is clear that the construction can be extended to higher dimensions.}
	\label{fig:forbidden-graphs}
\end{figure} 
In searching for Kochen-Specker vector systems in $\mathbb{C}^d$, it is therefore crucial to reduce the size of the search by restricting to non-isomorphic graphs which do not contain these forbidden graphs as subgraphs. Indeed, searching over non-isomorphic square-free graphs lead to the proof that the smallest Kochen-Specker vector system in $\mathbb{C}^3$ is of size at least $18$ \cite{Arends09}.

Let us denote the set of forbidden graphs in $\mathbb{C}^d$ as $\{G_{\text{fbd}}\}$. We show, following the proof by Arends et al. \cite{Arends09, AOW11} for the square-free case, that the problem of checking $\{0,1\}$-colorability of $\{G_{\text{fbd}}\}$-free graphs is NP-complete. Here, by a $\{G_{\text{fbd}}\}$-free graph we mean a graph that does not contain any of the forbidden graphs as subgraphs. %\tred{What does that sentence means?: The $01$-gadgets turn out to be exactly the structures that enable to show that the problem of checking $\{0,1\}$-colorability of $\{G_{\text{fbd}}\}$-free graphs is NP-complete.} 
\begin{theorem}[see also \cite{Arends09}]\label{theo:NP}
	Checking $\{0,1\}$-colorability of $\{G_{\text{fbd}}\}$-free graphs is NP-complete. 
\end{theorem}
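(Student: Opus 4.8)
The plan is to establish membership in NP first, then NP-hardness by a reduction, mirroring the strategy of Arends et al.\ but adapted so that the instances produced in the reduction are automatically $\{G_{\text{fbd}}\}$-free. Membership in NP is immediate: a $\{0,1\}$-coloring $f:V(G)\to\{0,1\}$ is a polynomial-size certificate, and verifying the two conditions in~(\ref{eq:01rulegraph}) only requires enumerating the maximal cliques; since the relevant graphs have bounded clique number $\omega(G)=d$ (otherwise they are trivially uncolorable or irrelevant), the maximal cliques can be listed in polynomial time, so the certificate is checkable efficiently. For hardness I would reduce from a known NP-complete problem --- the natural choice, following \cite{Arends09}, is $3$-colorability (or equivalently not-all-equal $3$-SAT), using gadgets to simulate the logical structure.

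The core idea is that a $01$-gadget acts, with respect to $\{0,1\}$-colorings, exactly like a ``virtual edge'' between its two distinguished vertices: at most one of them may receive color $1$ (this is precisely the gadget property, and it is the interpretation emphasized in the Discussion subsection). By Theorem~\ref{prop:fin-gadg-const}, for \emph{any} two distinct non-orthogonal vectors in $\mathbb{C}^d$ with $d\ge 3$ there exists a $01$-gadget in dimension $d$ having them as distinguished vertices. So, given an instance of the hard problem, I would: (i) pick a base set of vectors encoding the variables/vertices of the instance; (ii) for each ``constraint edge'' of the instance, connect the corresponding pair of vectors by an appropriate $01$-gadget from Theorem~\ref{prop:fin-gadg-const}, choosing realizations so that, after taking faithful versions, the whole construction lives in $\mathbb{C}^d$; (iii) possibly attach, via further gadgets and a fixed basis, a ``forcing'' sub-structure of the type $G_0$ or $G_1$ from Section~\ref{sec:constr} so that membership/non-membership in the target language corresponds exactly to colorability of the resulting graph. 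The reduction is polynomial because each gadget has constantly many vertices (e.g.\ the $43$-vertex gadget of Fig.~\ref{fig:gadg-id-vec}, or the $(5+d)$-vertex construction of Lemma~\ref{lem:min-gad}), so the total graph size is linear in the instance size.

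The main obstacle --- and the place where this theorem genuinely extends \cite{Arends09} --- is ensuring that the constructed graph is $\{G_{\text{fbd}}\}$-free, not merely square-free. The graphs $G_{\text{fbd}}$ are exactly those not faithfully realizable in $\mathbb{C}^d$; but the graph we build comes equipped, by construction, with an explicit faithful orthogonal representation in $\mathbb{C}^d$ (each gadget is faithfully realizable in $\mathbb{C}^d$ by Theorem~\ref{prop:fin-gadg-const}, and the union of vectors inherits a faithful representation provided the gluing introduces no spurious orthogonalities or coincidences --- which can be arranged by genericity in the free parameters of the gadget constructions). Since a faithfully realizable graph in $\mathbb{C}^d$ cannot contain any subgraph that is \emph{not} faithfully realizable in $\mathbb{C}^d$ --- a subgraph of a faithfully realized graph is itself faithfully realized by the restriction of the vectors --- it follows that the constructed graph is automatically $\{G_{\text{fbd}}\}$-free. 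Thus the delicate part is purely the bookkeeping: choosing the parameters in the gadget realizations of Theorem~\ref{prop:fin-gadg-const} and the base vectors so that (a) distinct vertices get distinct rays, (b) no unintended orthogonalities are created between vectors from different gadgets, and (c) the colorability of the final graph faithfully encodes the original decision problem. Once this is in place, NP-hardness follows, and combined with membership in NP we conclude NP-completeness.
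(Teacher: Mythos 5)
Your overall strategy (NP membership plus a gadget-based reduction from graph coloring, with $\{G_{\text{fbd}}\}$-freeness argued by exhibiting an explicit faithful representation in $\mathbb{C}^d$) is the same as the paper's, but the heart of the hardness reduction is missing from your proposal. The paper's reduction is concrete: given an instance $G$ of $\omega(G)$-colorability (with $\omega(G)\geq 3$, which is NP-complete), replace every vertex $v\in V(G)$ by a clique $\{v_1,\dots,v_{\omega(G)}\}$, where the index $i$ encodes ``$v$ receives color $i$'', and for every edge $(u,v)\in E(G)$ insert a $01$-gadget between $u_i$ and $v_i$ for \emph{each} $i$. Then a valid $\{0,1\}$-coloring of the resulting graph $H$ assigns value $1$ to exactly one vertex of each clique, i.e.\ selects a color for each $v$, and the gadgets forbid adjacent vertices of $G$ from selecting the same color; hence $H$ is $\{0,1\}$-colorable iff $G$ is $\omega(G)$-colorable. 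Your step (ii) --- one gadget per ``constraint edge'' between ``the corresponding pair of vectors'' --- does not encode colorability at all: a single gadget only forbids its two endpoints from both taking value $1$, and your step (iii) (``possibly attach a forcing sub-structure of type $G_0$ or $G_1$'') leaves the actual encoding unspecified. This is not bookkeeping; it is the main content of the proof, together with the two-directional verification, in particular the check that each gadget used admits internal $\{0,1\}$-colorings compatible with each of the three permitted boundary assignments $(0,0)$, $(0,1)$, $(1,0)$ of its distinguished vertices (the paper verifies this for the Clifton gadget of Fig.~\ref{fig:Clifton} and propagates it through the iterated construction), and the converse step that from a $\{0,1\}$-coloring of $H$ one reads off a proper $\omega(G)$-coloring of $G$.

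A second, smaller gap: your claim that ``a subgraph of a faithfully realized graph is itself faithfully realized by the restriction of the vectors'' holds only for \emph{induced} subgraphs; restricting to a non-induced subgraph can violate the requirement that non-adjacent vertices receive non-orthogonal vectors, and $\{G_{\text{fbd}}\}$-freeness is defined with respect to (not necessarily induced) subgraphs. The correct statement, and the one implicitly used in the paper, is that the forbidden graphs are obstructions whose presence as a subgraph precludes any faithful realization of the ambient graph in $\mathbb{C}^d$ (as in the square example, where the edge constraints force two distinct vertices onto the same ray); therefore exhibiting an explicit faithful representation of $H$ in dimension $\omega(G)$ suffices. Also note the paper does not rely on a genericity argument to avoid spurious orthogonalities: it fixes one faithful representation of the vertex-cliques independently of $G$ (the copies related by cyclic permutations) and then invokes the gadget-for-any-pair construction (Theorems~\ref{prop:fin-gadg-const} and~\ref{prop:101-gadg-const}) with the distinguished vectors and overlaps dictated by that fixed representation, which is what makes the reduction well defined and polynomial.
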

The proof is based on a reduction to the well-known graph coloring problem that uses $01$-gadgets in a crucial manner. Let us first recall the usual notion of coloring of a graph used in the proof. A proper coloring $c$ of a graph $G$ is an assignment of one among $n$ colors to each of the vertices of the graph $c : V(G) \rightarrow [n]$ ($[n]:=\{1,\dots,n\}$) such that no pair of adjacent vertices are assigned the same color. If such a coloring exists, we say that $G$ is $n$-colorable.

\begin{proof}
	The proof generalizes and simplifies that for the analogous question of $\{0,1\}$-colorability of square-free graphs in \cite{Arends09}, with the difference being that we directly use the constructions of $01$-gadgets from the previous sections. Firstly, we know that checking $\{0,1\}$-colorability of a $\{G_{\text{fbd}}\}$-free graph is in NP because the problem of checking an arbitrary graph for $\{0,1\}$-colorability is in NP \cite{Arends09}. Suppose we are given a graph $G$. The idea is to construct a new graph $H$ which is $\{G_{\text{fbd}}\}$-free such that the problem of $\omega(G)$-colorability of $G$ is equivalent to the problem of $\{0,1\}$-colorability of $H$. Provided the construction is achievable in polynomial time, this gives a reduction from the $\{0,1\}$-colorability problem to the $\omega(G)$-colorability problem (for $\omega(G) \geq 3$) which is known to be NP-complete \cite{CLRS01}. 
	
	The construction goes as follows. Replace every vertex $v \in V(G)$ by a clique of size $\omega(G)$ in $H$ and label the corresponding vertices $v_i \in V(H)$ for $i \in [\omega(G)]$. For every edge $(u,v) \in E(G)$, connect the corresponding vertices $(u_i,v_i)$ by a $01$-gadget $\Gamma_{(u_i,v_i)}$ in $H$. The exact form of the gadget $\Gamma_{(u_i,v_i)}$ is left unspecified at the moment, for the polynomial time reduction it is only important that it is finite (i.e., $|V(\Gamma_{(u_i,v_i)})|$ and $|E(\Gamma_{(u_i,v_i)})|$ are finite), so that $|V(H)| \leq \omega(G) (|V(\Gamma_{(u_i,v_i)})|_{max}-1) |V(G)|$ and $|E(H)| \leq \omega(G) | V(G)| + |E(\Gamma_{(u_i,v_i)})|_{max} |E(G)|$, i.e., $|V(H)| = O(|V(G)|)$ and $|E(H)| = O(|E(G)| + |V(G)|)$. 
	
	We first verify that $H$ is $\{G_{\text{fbd}}\}$-free. We do this by showing that $H$ is in fact faithfully realizable in dimension $\omega(G)$ and consequently free of the forbidden subgraphs for that dimension. For the vertices $v \in V(G)$, the actual representation of the vertices $v_i \in V(H)$ is chosen independent of the exact structure of the graph, i.e., for any $G$ with $|V(G)| = n$, we choose a fixed faithful orthogonal representation $\{|v_i \rangle\}$ for $v \in V(G)$ and $i \in [\omega(G)]$. Indeed, to show the realizability of the rest of $H$, it suffices to show the realizability of the vertices $v_1$ for $v \in V(G)$, since the representation for the remaining vertices $v_i$ for $i \geq 2$ can be readily obtained by a cyclic permutation $\Pi_i: | j \rangle \mapsto | j + i \rangle$ with the sum taken modulo $\omega(G)$. The structure of the graph is then incorporated by means of an appropriate choice of the gadgets $\Gamma_{(u_i,v_i)}$.  The crucial idea behind the construction is that there exist finite sized gadgets (with faithful representations) for any two distinct vertices as shown in Prop. \ref{prop:101-gadg-const}. So that for any edge $(u,v) \in E(G)$, we use a gadget $\Gamma(u_1,v_1)$ from Prop. \ref{prop:101-gadg-const} (the same gadget is used for the other pairs $(u_i,v_i)$) corresponding to the required overlap $|\langle u_1 | v_1 \rangle|$. Now, since the representation is faithful, we do not have different vertices represented by the same vector. As such, the construction from Prop. \ref{prop:101-gadg-const} yields a finite sized gadget for any pair of vertices $(u_i,v_1)$.   
	%Pick a vertex $v \in V(G)$, let the corresponding vertices $v_i \in V(H)$ for $i \in [\omega(G)]$ be represented by the vectors in the computational basis $\{|1 \rangle, \dots, |\omega(G)  \rangle\}$. Now, to show the realizability of the rest of $H$, it is clear that it suffices to show the realizability of the vertices $v_1$ for $v \in V(G)$, since the realizability of the remaining vertices $v_i$ for $i \geq 2$ is obtainable by a cyclic permutation $\Pi_i: | j \rangle \mapsto | j + i \rangle$ with the sum taken modulo $\omega(G)$.  
	
	The proof that checking $\{0,1\}$-colorability of the  $\{G_{\text{fbd}}\}$-free graph $H$ is equivalent to checking the $\omega(G)$-colorability of $G$ (which is $NP$-complete) follows along analogous lines to the proof in \cite{Arends09} and we present it here for completeness. Firstly, we show that $H$ is $\{0,1\}$-colorable if $G$ is $\omega(G)$-colorable. Consider the intermediate situation when we form a graph $G'$ by replacing every vertex $v \in V(G)$ by a clique of size $\omega(G)$ and labeling the corresponding vertices $v_i \in V(G')$ for $i \in [\omega(G)]$. For every edge $(u,v) \in E(G)$, connect the corresponding vertices $(u_i,v_i)$ by an edge in $G'$. The strategy is to show that if $G$ is $\omega(G)$-colorable, then $G'$ admits a valid $\{0,1\}$-assignment. Suppose $G$ is $\omega(G)$-colorable, and $c: V(G) \rightarrow [\omega(G)]$ is an optimal coloring. We define the $\{0,1\}$-coloring of $G'$ by 
	%	\begin{eqnarray}
	\[
	c'(v_i) = \left\{\begin{array}{lr}
	1, & \text{for } i = c(v)\\
	0, & \text{else} 
	\end{array}\right. 
	\]
	%	\end{eqnarray} 
	The fact that this is a valid $\{0,1\}$-coloring of $G'$ follows the proof of Lemma 1 in \cite{Arends09}. We now derive the $\{0,1\}$-coloring of $H$ from that of $G'$ by seeing that each of the gadgets in Prop. \ref{prop:101-gadg-const} can be $\{0,1\}$-colored in all three cases, when the distinguished vertices $u_i, v_i$ have the assignments: (i) $f(u_i) = 0, f(v_i) = 0$, (ii) $f(u_i) = 0, f(v_i) = 1$, and $f(u_i) = 1, f(v_i) = 0$. This is done by checking that such a valid $\{0,1\}$-coloring exists for the Clifton gadget in Fig. \ref{fig:Clifton} in each of the three cases. The $\{0,1\}$-coloring can be extended to the entire gadget iteratively by following the procedure shown in the proof of Prop. \ref{prop:101-gadg-const}. This gives a valid $\{0,1\}$-coloring of $H$. 
	
	We now show that a valid $\{0,1\}$-coloring of $H$ also implies that $G$ is $\omega(G)$-colorable. Let $f: V(H) \rightarrow \{0,1\}$ be a valid $\{0,1\}$ assignment of $H$. For every $v \in V(G)$, by the fact that we have a valid $\{0,1\}$-coloring, exactly one of the vertices $v_i  \in V(H)$ is assigned value $1$, i.e., $f(v_i) = 1$. One can then define a $\omega(G)$-coloring $c: V(G) \rightarrow [\omega(G)]$ by $c(v) = i \leftrightarrow c(v_i) = 1$ for every $v \in V(G)$. It is clear that this is a valid coloring since if $(u,v) \in E(G)$ we have by the property of the gadget that at most one of $u_i, v_i$ is assigned value $1$, i.e., either $f(u_i) = 0$ or  $f(v_i) = 0$. Thus, the $\{0,1\}$-colorability of the $\{G_{\text{fbd}}\}$-free graph $H$ is equivalent to the $\omega(G)$-colorability of $G$. From \cite{CLRS01}, we know that for $\omega(G) \geq 3$, this problem is NP-complete, which finishes the proof. 
\end{proof}

%), we explain an important connection between $\{0,1\}$-coloring and the usual graph coloring. It is well-known that every $\omega(G)$-colorable graph $G$ is also $\{0,1\}$-colorable (given a $\omega(G)$-coloring $f: V(G) \rightarrow [\omega(G)]$ of $G$, we simply define its $\{0,1\}$-coloring $g: V(G) \rightarrow \{0,1\}$ by $g(u) = 1$ if $f(u) = 1$ and $0$ otherwise). Moreover, there exist graphs (such as the $13$-vertex graph found by Yu and Oh \cite{YO12}) that are not $\omega(G)$-colorable despite admitting a $\{0,1\}$-coloring. 
%Another example of such a graph is shown in Fig. \ref{}
% we show that not every $101$-colorable graph is $\omega(G)$-colorable so that the two concepts are not equivalent. 

%\begin{lemma}
%	A $\omega(G)$-colorable graph $G$ is $101$-colorable. There exist $101$-colorable graphs that are not $\omega(G)$-colorable.
%\end{lemma}
%\begin{proof}
%	Given an $\omega(G)$-coloring $f: V(G) \rightarrow [\omega(G)]$, we simply define a $101$-coloring $g: V(G) \rightarrow \{0,1\}$ by $g(u) = 1$ if $f(u) = 1$ and $0$ otherwise. Clearly, this is a $101$-coloring since for every edge $(u,v) \in E(G)$, we have $f(u) \neq f(v)$ so that $f(u) = 0 \vee f(v) = 0$. Moreover, every maximum clique contains a vertex $u$ with $f(u) = 1$ and therefore $g(u) = 1$. 
%	
%	An explicit example of a $101$-colorable graph that is not $\omega(G)$-colorable is shown in Fig. \ref{. }. 
%\end{proof}

It is also interesting to examine the complexity of identifying $01$-gadgets. In this case, it appears to be necessary to enumerate all $\{0,1\}$-colorings of a given graph and to check $O(n^2)$ vertices to identify the possible distinguished vertices. Note that for a graph with $n$ vertices there are $2^n$ possible $\{0,1\}$-colorings so that it is not apparent whether even a polynomially checkable certificate exists for this problem. Peeters  in \cite{P96} gave a polynomial time reduction preserving graph planarity of the problem of testing $\xi(G) \leq 3$ to the problem of testing whether the chromatic number $\chi(G)$ is less than or equal to $3$, which is a well-known NP-complete problem, so that it is hard to check whether $d(G) \leq 3$ already for the case of planar graphs.  

\section{Randomness from $01$-gadgets}
In this section, we give a brief outline of how $01$-gadgets may be linked to device-independent randomness certification.
%, deferring the formulation of a fully device-independent protocol to a separate paper \cite{R17}. We shall describe the idea in terms of the Clifton gadget in Fig. \ref{fig:Clifton} although it can be readily extended to any $01$-gadget. 
Namely, when two parties Alice and Bob perform locally the measurements from the Clifton gadget on their half of a maximally entangled state (in $\mathbb{C}^3 \otimes \mathbb{C}^3$), we will show that some specific outcome of their joint measurements has probability bounded from above and below (and this holds in all no-signaling theories). This can be
inserted into a fully device-independent protocol as given in \cite{PRL-rand}, the details are deferred to a separate paper \cite{R17}.
To show how the Clifton gadget can be used for randomness amplification we first consider a non-contextual assignment
of probabilities to its vertices $v$ satisfying
\be
\label{eq:klik}
\sum_{v\in \text{clique}}p_v\leq 1,\quad \sum_{v\in \text{maximum clique}} p_v=1
\ee
This is the same requirement as Eq.(\ref{eq:01rule}), but we now assign not necessarily zeros and ones, but probabilities (i.e., values in $[0,1]$ rather than in $\{0,1\}$). Recall that such an assignment was also considered in our discussion of the extended Kochen-Specker theorem in Section \ref{sec:ext}.  
Now, since the gadgets are $\{0,1\}$ colorable, such an assignment of zeros and ones is possible, although in the $\{0,1\}$ assignment, it is not possible to assign $1$'s to both vertices $1$ and $8$. 
%However, as we have explained below equation \eqref{eq:Clif-orth-rep}, it is not possible to assign $1$'s to both vertices $1$ and $8$. 
Here, we will first show, that even if we assign probabilities, we still cannot have $p_1=p_8=1$, and we will provide a quantitative bound for this.
Indeed, let us write  Eq.\eqref{eq:klik} explicitly for the cliques in question from the Clifton gadget in Fig. \ref{fig:Clifton}:
\ben
\label{eq:nonmax}
&&p_1+p_2\leq 1, \quad p_1+p_6\leq 1, \quad p_4+p_5\leq 1, \nonumber \\
&&\quad p_7+p_8\leq 1, \quad p_3+p_2\leq 1
\een
for non-maximal cliques and 
\be
\label{eq:max}
p_2+p_3+p_4=1, \quad  p_5+p_6+p_7=1
\ee
for the two maximum cliques. 
We sum up all the inequalities \eqref{eq:nonmax}, and get 
\be
2p_1 + p_2+p_3+p_4+p_5+p_6+p_7+ 2 p_8 \leq 5.
\ee
Using \eqref{eq:max} we then obtain
\be
\label{eq:nsbound}
p_1+p_8\leq \frac32.
\ee
To exploit this feature for randomness amplification, 
we consider a maximally entangled state shared by two parties. The parties will measure observables composed of the 
projectors given by the quantum representation (if the clique is not maximal, one simply adds a third orthogonal projector to obtain a complete measurement).
Recall here that a set of eight projectors $P_v=|u_v\>\<u_v|$ that is compatible with the Clifton graph
%(i.e. if pair of projectors is orthogonal, we have an edge) 
is given in Eq.\eqref{eq:Clif-orth-rep}.
Projectors of Alice will be denoted $A_v$ and those of Bob $B_v$, 
and the probability of obtaining outcome $v$, while measuring observable containing $v$, will be denoted by $p(A_v=1)$.
We correspondingly denote by  $p(A_v=0)$ the probability that the outcome $v$ was not obtained. Clearly $p(A_v=1)+p(A_v=0)=1$.
Now, we shall show using no-signaling (which will impose non-contextuality), that the probability $p(A_1=1,B_8=1)$ 
is bounded from above. 
To see this, we apply Eq.\eqref{eq:nsbound} to Alice's observables and get
\ben
&&p(A_1=1)+p(A_8=1)\leq \frac32
%&&p(B_1=1)+p(B_8=1)\leq \frac32
\een
From the correlations in the maximally entangled state, we have that 
\be
p(A_8=1)=p(B_8=1)
\ee
giving 
\be
\label{eq:AB32}
p(A_1 = 1)+p(B_8 = 1)\leq \frac32.
\ee
Now, from no-signaling we have 
\ben
&&p(A_1=1)=  p(A_1 = 1, B_8 = 0) + p(A_1 = 1, B_8 = 1), \nonumber \\
&&p(B_8 = 1) =  p(A_1 = 0, B_8 = 1) + p(A_1 = 1, B_8 = 1).  \nonumber\\
\een
Summing these and applying \eqref{eq:AB32} we get 
\ben
&&p(A_1 = 1, B_8 = 0) + p(A_1 = 1, B_8 = 1)   +  \nonumber \\
&& p(A_1 = 0, B_8 = 1) + p(A_1 = 1, B_8 = 1) \leq   \frac32 
\een
and hence 
\be
p(A_1=1,B_8=1)\leq \frac34.
\ee
Thus we have obtained, that the probability of the event $(A_1,B_8)=(1,1)$ is bounded from above. 
We have also the lower bound
\be
p(A_1=1,B_8=1) = \frac13  |\<u_1|u_8\>|^2 \geq \frac1{27}.
\ee
Thus 
\be
\frac1{27}\leq p(A_1=1,B_8=1) \leq \frac34
\ee
Therefore, the outcome $(A_1,B_8)=(1,1)$ has randomness, which can be used in a randomness amplification scheme employing the protocol of \cite{PRL-rand}.
The lower bound is $\frac1{27}$ in noiseless conditions, and assuming we have exactly measured the specified projectors. 
In a real experiment, this value may be different, but if the noise is low enough it should be close to $\frac1{27}$.
Also the upper bound, relies on perfect correlations, which in a real experiment may be imperfect. Thus in noisy conditions, we will 
have less stringent lower and upper bounds, though these are certifiable by statistics from the experiment. 
Note that crucially we have not used explicitly Bell inequalities, nor even the KS paradox. We have simply made use of the perfect correlations between the parties and the local  $01$-gadget structure of Alice and Bob's observables. \\

\section{Conclusion and Open Questions}\label{sec:concl}
In this paper, we have shown that there exist interesting subgraphs of the Kochen-Specker graphs that we termed $01$-gadgets that encapsulate the main contradiction necessary to prove the  Kochen-Specker theorem. Furthermore, as a main technical contribution, we have shown that the fundamental structures identified here, lead to clean constructions of state-independent statistical proofs of the KS theorem, of which the famous Yu and Oh proof is a particular case. The proofs given here provide a new perspective on these results, and serve as a useful tool to construct minimal KS sets, since efforts may be concentrated on the $01$-gadget subgraphs. 
%we have provided simple constructive proofs of statistical KS arguments and state-independent non-contextuality inequalities for any $d \geq 3$ using these gadgets. 
An extended notion of $01$-gadgets also helped to provide simple constructive proofs of the extended Kochen-Specker theorem \cite{Pitowsky}. The gadgets enable a proof of the NP-completeness of checking $\{0,1\}$-colorability of graphs free from the forbidden subgraphs from Hilbert spaces of any dimension. Practically, the gadgets open up a highly important application of contextuality to practical device-independent randomness generation, which we study in a companion paper \cite{R17} where we provide an explicit device-independent protocol for randomness amplification based on \cite{BRGH+16, PRL-rand, WBGH+16} and Hardy paradoxes constructed using $01$-gadgets. %Another interesting class of subgraphs of Kochen-Specker graphs were also found in relation to their coloring properties, that we termed coloring gadgets. Finally, in dimensions four and eight, we gave a simple condition for identifying Kochen-Specker graphs based on their coloring properties. 
%Both $01$ and coloring gadgets give rise to Hardy paradoxes, by means of the Kochen-Specker and the coloring game, respectively \cite{MH17}. 

%The fundamental structures identified here, led to a very clean construction of state independent statistical proofs of KS theorem, of which the famous Yu and Oh proof is a particular case. The proof presented in this contribution gives a new perspective on this kind of result that makes it much more easy to follow. I also believe that from this work many proofs of minimality of KS sets will follow, since we can now concentrate our efforts in this type of subgraph.

An open question, is to find, for given overlap  $| \langle v_1 | v_2 \rangle|$, the minimal $01$-gadget and extended $01$-gadget with the corresponding vertices $v_1,v_2$ playing the role of the distinguished vertices. An answer to this question would have applications for randomness generation from contextuality \cite{R17}. 
%Another interesting question is to study the relationship between the $01$ and coloring gadgets and to construct further interesting examples of these.
Another open question is whether all state-independent contextual graphs (including those going beyond KS sets such as that of Yu and Oh \cite{YO12}) contain $01$-gadgets as subgraphs, or even possibly as induced subgraphs. Finally, while it is known that in $\mathbb{C}^3$ KS sets cannot be constructed using rational vectors \cite{GN08}, it would be very interesting to study quantum realizations of $01$-gadgets using rational vectors, to build statistical KS arguments and state-independent non-contextuality inequalities using these.  

% A further interesting question concerns the possible generalizations of these Hardy paradoxes to multi-party Bell scenarios \cite{ACY16}.  

\textit{Acknowledgments.-}
We are grateful to Andrzej Grudka, Waldemar K\l obus and David Roberson for useful discussions.
R.R. acknowledges support from the research project  ``Causality in quantum theory: foundations and applications'' of the Fondation Wiener-Anspach and from the Interuniversity Attraction Poles 5 program of the Belgian Science Policy Office under the grant IAP P7-35 photonics@be. This work is supported by the Start-up Fund 'Device-Independent Quantum Communication Networks' from The University of Hong Kong. This work was supported by the National Natural Science Foundation of China through grant 11675136, the Hong Kong Research Grant Council through grant 17300918, and the John Templeton Foundation through grants 60609, Quantum Causal Structures, and 61466, The Quantum Information Structure of Spacetime (qiss.fr). M. R. is supported by the National Science Centre, Poland, grant OPUS 9. 2015/17/B/ST2/01945. MH and PH are supported by the John Templeton Foundation. The opinions expressed in this publication are those of the authors and do not necessarily reflect the views of the John Templeton Foundation. PH also acknowledges partial support from the Foundation for Polish Science (IRAP project, ICTQT, contract no. 2018/MAB/5, co-financed by EU within the Smart Growth Operational Programme). KH acknowledges support from the grant Sonata Bis 5 (grant number: 2015/18/E/ST2/00327) from the National Science Centre. S.P. is a Research Associate of the Fonds de la Recherche Scientifique (F.R.S.-FNRS). We acknowledge support from the EU Quantum Flagship project QRANGE.

\end{document}